\definecolor {darkgreen}{rgb}{0.2,0.7,0.2}
\newcommand\be{\begin{equation}}
\newcommand\ba{\begin{eqnarray}}
\newcommand\ee{\end{equation}}
\newcommand\ea{\end{eqnarray}}
\newcommand\bw{\begin{widetext}}
\newcommand\ew{\end{widetext}}
\newcommand{\nn}{\nonumber}
\newcommand{\ppE}{{\mbox{\tiny ppE}}}
\newcommand{\GR}{{\mbox{\tiny GR}}}
\newcommand{\MAT}{{\mbox{\tiny mat}}}
\newcommand{\CS}{{\mbox{\tiny CS}}}
\newcommand{\pont}{{\,^\ast\!}R\,R}
\newcommand{\sRsR}{{\,^\ast\!}R{\,^\ast\!}R}
\newcommand{\pd}{\partial}
\newcommand{\cd}{\nabla}
\newcommand{\BH}{{\mbox{\tiny BH}}}
\newcommand{\NS}{{\mbox{\tiny NS}}}
\newcommand{\BL}{{\mbox{\tiny BL}}}
\newcommand{\HH}{{\mbox{\tiny H}}}
\newcommand{\ext}{\mathrm{ext}}
\newcommand{\mrm}{\mathrm}
\newcommand{\GB}{{\mbox{\tiny GB}}}
\newcommand{\EH}{{\mbox{\tiny EH}}}
\newcommand{\TEDGB}{{\mbox{\tiny TEdGB}}}
\newcommand{\theGB}{\ensuremath{\mathcal{G}}}
\newcommand{\DDGB}{{\mbox{\tiny D$^{2}$GB}}}
\newcommand{\DDCS}{{\mbox{\tiny D$^{2}$CS}}}
\newcommand{\DDGBText}{{\mbox{D$^{2}$GB}}}
\newcommand{\DDCSText}{{\mbox{D$^{2}$CS}}}
\newcommand{\p}{{\mbox{\tiny P}}}
\newcommand{\K}{{\mbox{\tiny K}}}
\newcommand{\specialEq}{\ensuremath{(\star)}}
\newtheorem{theorem}{Theorem}
\begin{document}
%\title{Miracle Loss of Scalar Hair\texorpdfstring{\\}{ }in Neutron Stars of Gauss-Bonnet Gravity}
%\title{Challenging the Lore of Long-ranged Scalar Hair for Binary Pulsars}
%\title{Challenging the Lore of Dipolar Radiation from Long-ranged Scalar Hair}
%\title{Have Binary Pulsars Constrained Long-ranged Scalar Hair Strongly?}
%\title{Is Long-ranged Scalar Hair Constrained Strongly from Binary Pulsars?}
\title{Challenging the Presence of Scalar Charge and Dipolar Radiation
  in Binary Pulsars}

\author{Kent Yagi}
%\email{kyagi@physics.montana.edu}
\affiliation{Department of Physics, Princeton University, Princeton, NJ 08544, USA.}
\affiliation{Department of Physics, Montana State University, Bozeman, MT 59717, USA.}

\author{Leo C. Stein}
%\thanks{Einstein fellow}
%\email{leostein@astro.cornell.edu}
\affiliation{TAPIR, Walter Burke Institute for Theoretical Physics, MC 350-17, California Institute of Technology, Pasadena, CA 91125, USA}
\affiliation{Cornell Center for Astrophysics and Planetary Science (CCAPS), Cornell University, Ithaca, NY 14853, USA}

\author{Nicol\'as Yunes}
\affiliation{Department of Physics, Montana State University, Bozeman, MT 59717, USA.}

\date{\today}

%%%%%%%%%%%%%%%%%%%%%%%%%%%%%%%%%%%%%%%%%%%%%%%%%
\begin{abstract}
Corrections to general relativity that introduce long-ranged scalar
fields which are non-minimally coupled to curvature typically predict
that neutron stars possess a non-trivial scalar field profile anchored
to the star.
An observer far from a star is most sensitive to the
spherically-symmetric piece of this profile that decays linearly with
the inverse of the distance to the source, the so-called \emph{scalar
  monopole charge}, which is related to the emission of dipolar
radiation from compact binary systems.  The presence of dipolar
radiation has the potential to rule out or very strongly constrain
extended theories of gravity.
These facts may lead people to believe
that gravitational theories that introduce long-ranged scalar fields
have already been constrained strongly from
binary pulsar observations.
Here we challenge this ``lore'' by investigating the decoupling limit of
Gauss-Bonnet gravity as an example, in which the scalar field couples 
linearly to the Gauss-Bonnet density in the action.  
We prove a theorem that neutron
stars in this theory can not possess a scalar charge, due to the
topological nature of the Gauss-Bonnet density.  Thus Gauss-Bonnet
gravity evades the strong binary pulsar constraints on dipole
radiation.
We discuss the astrophysical systems which will yield the best
constraints on Gauss-Bonnet gravity and related quadratic gravity
theories.
To achieve this we compute the scalar charge in quadratic gravity
theories by performing explicit analytic and numerical matching
calculations for slowly-rotating neutron stars.
In generic quadratic gravity theories, either neutron~star-binary or
neutron~star-black~hole systems can be used to constrain the theory,
but because of the vanishing charge, Gauss-Bonnet gravity evades the
neutron star-binary constraints.
However, in contrast to neutron~stars, black
holes in Gauss-Bonnet gravity do anchor scalar charge, because of the
difference in topology.
The best constraints on Gauss-Bonnet gravity will thus come from
accurate black hole observations, for example through gravitational
waves from inspiraling binaries or the timing of pulsar-black hole
binaries with radio telescopes.  We estimate these constraints to be a
factor of ten better than the current estimated bound, and also include
estimated constraints on generic quadratic gravity theories from
pulsar timing.
\end{abstract}

\pacs{04.50.Kd % Modified theories of gravity
04.20.Gz % Topology of Spacetime
04.70.Bw %BHs, classical
97.60.Gb %Pulsars
%\lcs{Are these the best choices?}
%\ky{Now updated.}
}

%04.30.Db Wave generation and sources
% 04.50.Kd Modified theories of gravity
% 04.25.-g Approximation methods; equations of motion
%04.25.Nx Post-Newtonian approximation; perturbation theory; related approximations
%97.60.Jd Neutron stars

\maketitle
%\tableofcontents

%%%%%%%%%%%%%%%%%%%%%%%%%%%%%%%%%%%
\section{Introduction}

%%%%%%%%%%%%%%%%%%%%%%%%%%%%%%%%%%%
%\section{FROM BEFORE}
%-Need a cool and cute intro sentence. I like this one I picked. 
%After that, we need to explain that GR has been tested very well, 
%but there is still wiggle room because the Solar System is not a great place to do tests. 
Almost a century since Einstein's discovery of his general theory
of relativity (GR), we continue to test it and wonder whether it is right
in unexplored regimes. Perhaps the most famous of these tests
are those carried out in the solar system~\cite{will-living}, which
not only confirmed the theory initially, but also served to rule out
a plethora of modified models in the 1970s. The solar system, 
however, is an unsuitable place to test the strong, dynamical,
and non-linear features of the gravitational interaction~\cite{Yunes:2013dva}. Gravity
is simply too weak and the velocities of planets are simply
too small relative to the speed of light within the solar system.

%There is also wiggle room from observations and from theoretical considerations. 
On top of that, astrophysical observations and theoretical studies
have sparked a renewed interest in corrections to GR. The observations
of galactic rotation curves have been used to pose certain modifications
to Newtonian dynamics~\cite{Milgrom:1983ca,Sanders:2002pf,Bekenstein:2004ne,
Famaey:2011kh}.
Corrections to GR on cosmological length scales have been invoked to
attempt to explain the observation of late-time acceleration of the
universe~\cite{2010deto.book.....A,DeFelice:2010aj,deRham:2014zqa}.
Quantum gravitational theories, like string theory, will induce
corrections to GR in their low-energy 
effective theories~\cite{polchinski1,polchinski2, Fujii:2003pa}.
The low-energy effective theories of loop quantum gravity and
inflation, for example, predict corrections to GR in the form of
curvature-squared and higher operators~\cite{Jackiw:2003pm, CSreview,
  alexandergates, Taveras:2008yf, Mercuri:2009zt, Weinberg:2008hq}.

%-The best way to test GR is with binary pulsars, because the stars are relativistic and we have detailed observations. 
Enter binary pulsars. With some of the strongest gravitational fields in the universe,
neutron stars (NSs) are spectacular laboratories to test strong gravity. Neutron
stars can rotate at fantastic speeds, sometimes with millisecond spin periods, and
their emission can be detected as pulses in the radio band. When in a binary
system, the pulses' arrival times are modulated, encoding rich information about the properties
of the orbit, and thus, of the nature of gravity. Since the discovery
of the Hulse-Taylor binary pulsar, many others have been used to stringently constrain 
any deviations from the predictions of Einstein~\cite{Damour:1996ke,Stairs:2003yja,kramer-double-pulsar,Freire:2012mg,Wex:2014nva}. Some of these consist of 
two NSs, like the Hulse-Taylor binary, while others consist
of a white dwarf (WD) and a NS.  Typical orbital periods
are on the order of several hours to days.

%***The standard lore is that modified theories with long-range scalars are very well-constrained from binary pulsar observations because these theories activate a long-rate scalar charge which then leads to dipole radiation in binaries
Over the past 50 years, binary pulsars have been used to invalidate many a modified gravity
theory, and some people
%this has led to a type of \emph{lore} 
in the community may think that
modified gravity theories with
long-ranged scalar fields can be very well-constrained with binary pulsar observations because they
lead to the emission of dipolar radiation in binary systems. Such radiation is absent in GR because
of the conservation of certain Noether charges (a curved-space version of mass and linear momentum).
Long-ranged scalar fields in modified gravity do not typically
have such conservation laws, and thus, they may carry energy and
momentum away from the binary system in a monopolar or dipolar fashion
if excited.  If so, the orbital period decay would be much faster than
that predicted in GR.  An observation consistent with GR would thus
lead to a stringent constraint on such a theory.

%***Lore is false because you can be protected from the activation of dipolar radiation by certain effects, e.g. dCS. 
In this paper, we show explicitly and in great detail that this lore is not always correct.
The presence or absence of dipole
radiation in modified gravity when modeling binary systems depends
sensitively on the structure of the scalar
field that is excited by the compact objects that form the binary. Far away from the compact objects, 
the scalar field can be decomposed in spherical harmonics, and the spherically symmetric ($\ell=0$),
$1/r$ fall-off shall be called the \emph{scalar monopole charge} or \emph{scalar charge} for short.
When a member of a binary system possesses such a scalar charge,
dipole radiation is typically excited;
when scalar charge is absent in both binary constituents, 
dipole radiation is heavily
suppressed~\cite{Yagi:2011xp, Stein:2013wza}.\footnote{%
  The dipole radiation is suppressed in the post-Newtonian (PN) sense: it may be
  present, but will appear at a higher order in powers of $v/c$ than
  one would naively expect, which would be $-1$PN relative to the GR
  quadrupolar radiation.  This is distinct from the suppression of
  dipole radiation in a NS/NS system in Brans-Dicke type theories,
  which comes about because the dipolar radiation depends on the
  difference in sensitivities of the two bodies.
}
For example, such a
suppression is present in a particular quadratic gravity theory,
dynamical Chern-Simons (dCS) gravity~\cite{Jackiw:2003pm, CSreview}.
In dCS, the scalar charge of an isolated body is suppressed by parity
considerations, and therefore dipolar radiation from a binary is also
heavily suppressed~\cite{kent-CSGW,Yagi:2013mbt}.
A similar absence of the stellar scalar charge and the suppression of
dipolar radiation of stellar binaries are found in shift-symmetric
Horndeski theories~\cite{Barausse:2015wia}, a certain class of generic
scalar-tensor theories with up to second derivatives in the field
equations. Shift-symmetric theories
are those whose field equations remain unchanged upon a constant
shift of the scalar field

%***Miracle Hair Loss Theorem: Modified Gravity Theories with certain non-minimal topological curvature couplings do not active the scalar charge, and thus, do not emit dipole radiation. 
We formalize the above through a \emph{miracle hair loss} conjecture
and theorem.  This conjecture is aimed at establishing that
gravity theories with a shift-symmetric, long-ranged scalar
sourced by a linear (non-derivative) coupling to a topological density do not
activate a scalar charge in NSs. A topological density is one which
when integrated over the manifold yields a topological invariant.  We
rigorously prove such a theorem for a particular modified gravity
theory that has the above properties: dynamical Gauss-Bonnet (DGB)
gravity in the so-called decoupling limit~\cite{Yunes:2011we} (here
abbreviated \DDGBText{}).  This theory modifies
the action by adding a term that is the product of a dynamical scalar field with the Gauss-Bonnet topological density. 
The integral of the latter is a topological invariant, i.e.~the Euler characteristic of the manifold, and the theory
is manifestly shift-symmetric.  Such a theorem has long-ranging
implications, since the absence of a scalar charge in isolated neutron
stars automatically implies that dipolar radiation is highly
suppressed in binary systems.  This renders pulsar binaries
ineffective at constraining such theories.

%We have to mention how this is different from the no-hair theorems of GR and from Jacobson's theorem. 
We should point out that this theorem only applies to ordinary stars, such as
NSs, and not to black holes (BHs).
For example, the scalar charge of a BH in \DDGBText{} gravity
  has been explicitly calculated and found to be non-vanishing 
in~\cite{1992PhLB..285..199C,Mignemi:1992nt,
Yunes:2011we,Yagi:2011xp,Sotiriou:2014pfa}.
This situation
is in direct contrast to the no-hair theorems of scalar-tensor
theories~\cite{hawking-no-hair,Sotiriou:2011dz,Graham:2014ina} of the
Jordan-Brans-Dicke variety (or, more generally, the Bergmann-Wagoner
type~\cite{Bergmann:1968ve, Wagoner:1970vr}; we will continue to refer
to this class of theories as simply ``scalar-tensor'' theories).
These theorems prove that in scalar-tensor theories, stationary,
isolated BHs in vacuum have no scalar hair,
and thus, in particular no scalar charge. Possible ways 
 to grow BH scalar hair are to introduce a potential for the scalar
field~\cite{Healy:2011ef} or to impose certain cosmological boundary conditions~\cite{Jacobson:1999vr,Horbatsch:2011ye}. 
The latter has been dubbed Jacobson's \emph{miracle hair growth} formula for BHs in scalar-tensor
theories~\cite{Jacobson:1999vr,Horbatsch:2011ye}. On the other hand, NSs in scalar-tensor theories do generically 
possess a scalar charge, as this is sourced by the matter stress-energy tensor~\cite{Will:1989sk,Will:1994fb,Damour:1998jk,Alsing:2011er,Sampson:2014qqa}. 
It is this fact that makes scalar-tensor theories easy to constrain with binary pulsar observations. 
A similar absence of black hole scalar charges has recently been shown
in shift symmetric Horndeski theories except for \DDGBText{} gravity in~\cite{Hui:2012qt,Maselli:2015yva}.

{
\newcommand{\yes}{Yes}
\newcommand{\no}{\bf{No}}
\renewcommand{\yes}{\checkmark}
\renewcommand{\no}{\ding{55}}

\newcommand{\minitab}[2][l]{\begin{tabular}{#1}#2\end{tabular}}

\renewcommand{\arraystretch}{1.2}
\begin{table*}[tb]
\begin{centering}
\begin{tabular}{r|c|c|c|c|c|c|c}
\hline
\hline
\noalign{\smallskip}
\multirow{2}{*}{Theory}  & \multicolumn{2}{c|}{Scalar charge} &
 \multicolumn{4}{c|}{Estimated upper bound on $\sqrt{|\alpha|}$ [km]} &
 \multirow{2}{*}{\minitab[c]{Current estm. \\ bound [km]}} \\
\cline{2-7}
& NS & BH & NS/WD & NS/NS & NS/BH & BH/BH &  \\
\hline
\DDGBText{} & \no & \yes & \no  & \no & (0.12) & (3.4) & 1.9~\cite{Yagi:2012gp} \\
TEdGB & \yes & \yes & 1--2 & 1.5--3.5 & (0.12) & (3.4)  & 1.4~\cite{Kanti:1995vq,Pani:2009wy} \\
Kretsch. & \yes & \yes & 0.06--0.1 & 0.15--0.45 & (0.03--0.07) & (3.4) &  1.9~\cite{Yagi:2012gp} \\
\noalign{\smallskip}
\hline
\hline
\end{tabular}
\end{centering}
\caption{%
% Note that the Pani constraint is on their alpha, which differs from
% ours by a factor of kappa=1/16 pi.
Summary of whether certain modified theories activate a scalar charge
in NSs and BHs, and the estimated bounds on the
coupling parameter $\sqrt{|\alpha|}$ from a variety of systems.  We
consider truncated Einstein-dilaton-Gauss-Bonnet gravity (TEdGB) and the
decoupling limit of any Gauss-Bonnet theory (\DDGBText{}), as well as a
certain quadratic gravity theory in which the scalar field does not
couple to a topological density (Kretschmann).
Quantities in parentheses represent projected bounds using
systems that have not yet been observed, but which may be observed in
the near future with radio/GW observations.
Note that a NS/BH system could produce the best constraint within each
theory.
}
\label{tab:summary}
\end{table*}
}

%***Demonstrate theorem at work and not at work by computing charge in decoupled EdGB, exact EdGB and Kretschmann gravity, both analytically in a PM expansion and numerically, always for slowly rotating stars.
We demonstrate this theorem at work by computing the scalar charge explicitly in \DDGBText{},
as well as in theories that violate the conditions of the theorem, wherein the scalar charge is non-vanishing.
We first compute the scalar charge analytically for slowly-rotating NSs in a post-Minkowskian expansion, 
i.e.~an expansion in the ratio of the stellar mass to its radius (the stellar compactness) and with simple equations of state. 
We then compute the scalar charge numerically without a post-Minkowskian expansion and with more realistic equations
of state.  We verify that in the limit where quadratic gravity approaches
\DDGBText{}, the scalar charge vanishes linearly with the coupling
constants.
Theories which don't satisfy the conditions of the theorem need not
have a particularly large or small scalar charge.  In fact in our
explicit calculations, we show examples of strong dependence on
coupling parameters and the compactness of a body.  This may lead to
suppression of scalar charge in some theories, though it is still
present.

%***Important implication of theorem allows us to determine which systems are best to test which theories: if no scalar charge, then need black holes, otherwise, use WD-NS pulsar binaries.
The miracle hair loss theorem/conjecture is so powerful that it allows us to easily predict which binary systems will be best
to test which theory. 
Generically, if scalar charge is activated, then the observation of
WD/NS or NS/NS pulsar binaries
with radio telescopes is already sufficient to strongly constrain the
particular modified theory.
On the other hand, if the conjecture is applicable, and scalar charge
is not activated in NSs but is activated in BHs, one
requires BH binaries or mixed BH/NS binaries
to place a constraint.
The former could be detected through their gravitational waves (GWs) by ground-based interferometers, such as LIGO and 
Virgo~\cite{ligo,Abramovici:1992ah,Abbott:2007kv,virgo,Giazotto:1988gw,TheVirgo:2014hva},
while the latter may be detected in future radio telescopes, such as the Square
Kilometer Array~\cite{Carilli:2004nx}. The conjecture,
for example, can easily be applied to quadratic gravity theories
which \emph{naturally} avoid solar system constraints (since quadratic
curvature densities are
small in the solar system)~\cite{EspositoFarese:2003ze,
Amendola:2007ni,Sotiriou:2006pq,Alexander:2007zg,Alexander:2007vt,Smith:2007jm},
yet predict strong modifications to GR when the curvature is
large, such as around NSs and BHs~\cite{Yagi:2011xp,kent-LMXB,kent-CSGW,I-Love-Q-Science,
  I-Love-Q-PRD,sopuerta-yunes-DCS-EMRI,pani-DCS-EMRI,
  canizares,harko,amarilla,chen}. Table~\ref{tab:summary} shows a few examples of such theories, 
whether they activate a scalar charge in NSs and BHs, 
which systems are best to constrain them and an estimate of how well they can be constrained.

%-Label conventions and present organization of the paper.
The remainder of this paper deals with the details of the results
explained above.
Section~\ref{sec:ABC-MQG} reviews the basics of quadratic gravity.
Section~\ref{sec:Hair-Loss-Theorem} states and proves the miracle
hair loss theorem.
Section~\ref{sec:analytics} presents an analytical and a numerical demonstration
of the theorem.
Section~\ref{sec:bin-puls-cons} estimates binary pulsar and
GW constraints due to dipolar radiation.
Section~\ref{sec:conclusions} concludes and points to future research.
All throughout, we use geometric units in which $G = 1 = c$.

%%%%%%%%%%%%%%%%%%%%%%%%%%%%%%%%%%%
\section{The ABC of Quadratic Gravity}
\label{sec:ABC-MQG}
Here we review and classify quadratic gravity theories, mainly
following the presentation of~\cite{CSreview,Yunes:2011we,
Stein:2013wza}.  We begin with a description of the motivation
for studying such theories. This will set up the stage for the
introduction of our classification. We then conclude with a discussion
of well-posedness in these theories and current constraints. As we
classify theories, we will come across a few theories that we will
investigate in detail in this paper; the actions that define these
theories will have their equations marked with $\specialEq$.

%--------------------------------
\subsection{Motivation}

Our focus is theories that correct GR through
higher-curvature terms in the action and include long-ranged scalar fields.
The scalar field of interest must be of ``gravitational strength,''
i.e.~should couple to matter weakly, although it is allowed to couple
directly to curvature. As such, these theories will be \emph{metric}, with
matter coupling directly only to the metric tensor, and the scalar field coupling
to the metric through curvature, and thus indirectly to matter.

These theories are motivated from at least two places.  First, from
the modern bottom-up, effective field theory (EFT) standpoint, we should
expect GR to acquire corrections at some length scale. 
At low energies, these potential corrections can be described at
the level of the action via an expansion in powers of the Riemann
curvature tensor.  If truncating at first order in curvature, we
recover GR, with the Einstein-Hilbert action
\begin{align}
S_{\EH} &= \kappa \int d^4x \sqrt{-g}  R\,,
\label{eq:EH-action}
\end{align}
where $g$ is the determinant of $g_{ab}$,
$R=g^{ab}R_{ab}=g^{ab}R_{acb}{}^{c}$ is the Ricci scalar, and
$\kappa=(16\pi G)^{-1}$.

If we include scalar fields, then at this order we for example arrive at
Jordan-Brans-Dicke theory~\cite{Jordan:1959eg,Brans:1961sx},
where the scalar field is linearly coupled to the Ricci scalar in the Jordan frame. 
Jordan-Brans-Dicke theory allows for
gross modifications from GR in the form of long-ranged scalar charges
and scalar dipole radiation, which are strongly constrained from solar
system and binary pulsar observations~\cite{will-living,Berti:2015itd}.
Furthermore, it is always possible, through a conformal
transformation~\cite{Bloomfield:2011np}, to go to the Einstein
frame of the theory, where the linear-in-curvature term in the
action is simply the Einstein-Hilbert term of Eq.~\eqref{eq:EH-action}. At
next order in curvature, we arrive at quadratic gravity theories,
which are the topic of this paper.

The second motivation for such theories is from the top-down,
high-energy theory viewpoint.  Fundamental theories of
quantum gravity (such as string theory and loop quantum gravity) will
induce both higher-corrections to GR and a number of
scalar fields, which may be
long-ranged~\cite{Gross:1986mw,Metsaev:1987zx,Mignemi:1992nt,
Mignemi:1993ce,Kleihaus:2011tg,Taveras:2008yf,Alexander:2008wi}. For
example, in heterotic string theory and in the string frame in $D$ dimensions, 
the next-to-leading order
correction to GR in a low-curvature expansion (first derived
in~\cite{Metsaev:1987zx}) is given (in the notation
of~\cite{Maeda:2009uy}) by
\begin{align}
\label{eq:EDGB-theory}
S = \frac{1}{2\kappa_{D}^{2}} \int &d^{D}\hat{x} \sqrt{|\hat g|} e^{-2\hat{\phi}} \left[
\hat{R} + 4(\hat{\pd}\hat{\phi})^{2}  \right. \\
&\left.
+{\textstyle\frac{\alpha'}{8}}\left(
\hat{R}^{2}-4 \hat{R}_{ab}\hat{R}^{ab} + \hat{R}_{abcd}\hat{R}^{abcd} + \ldots
\right)
\right]\,, \nn
\end{align}
where $\ldots$ stand for higher-order in curvature terms. 
Here, $\phi$ is the dilaton, $\alpha'$ is the Regge slope parameter,
$\kappa_D^2$ is the $D$-dimensional 
gravitational strength, and $\hat{}$ represents a quantity in the string frame.  The theory described
by the action in Eq.~\eqref{eq:EDGB-theory} is referred to as \emph{Einstein-dilaton-Gauss-Bonnet} (EdGB).  
Through a conformal transformation and a field redefinition, this can be cast
in the Einstein frame (in the notation of~\cite{Maeda:2009uy}) as
\begin{align}
S &= \frac{1}{2\kappa_{D}^{2}} \int d^{D}x \sqrt{|g|} \left[
R - \frac{1}{2}(\pd\phi)^{2}  \right. \\
&\left.
\qquad + {\textstyle\frac{\alpha'}{8}} e^{-\gamma\phi} \left(
R^{2}-4 R_{ab}R^{ab} + R_{abcd}R^{abcd} + \ldots
\right)
\right]\,, \nn
\end{align}
where $\gamma = \sqrt{2/(D-2)}$ and again the $\ldots$ stand for higher order terms, but in both
curvature and the scalar field.  If such higher order terms
are dropped from the action, the resulting theory is
called \emph{truncated Einstein-dilaton-Gauss-Bonnet}
(TEdGB)~\cite{Maeda:2009uy}.

In string theory, one can work in
dimensions higher than 4, but henceforth in this paper, we will 
focus only on theories that have already been compactified to 
4 dimensions. 
This compactification introduces a large number of dynamical degrees
of freedom (moduli fields), such as the dilaton and axion(s).  The
resulting low-energy effective action may be truncated to a specific
operator order, and this truncation may affect the field content of
the effective action.  Performing this truncation consistently is not
trivial~\cite{Burgess:2007pt}.  Here we focus only on the metric and
long-ranged scalar sector of such a theory.

%--------------------------------
\subsection{Classification}

Let us define \emph{quadratic gravity} theories through the action
\begin{equation}
  \label{eq:S-full-action}
  S = S_{\EH} + S_{\MAT} + S_{\vartheta} + S_{q}\,,
\end{equation}
where the Einstein-Hilbert term $S_{\EH}$ was given in
Eq.~\eqref{eq:EH-action}, $S_{\MAT}$ is the action of any matter
fields that do not depend on the scalar field, and
$S_{\vartheta}$ is the action for a canonical scalar field with
potential $U$,
\begin{equation}
  \label{eq:S-vartheta}
  S_{\vartheta} = -\frac{1}{2} \int d^{4}x\sqrt{-g}
  \left[(\cd_{a}\vartheta)(\cd^{a}\vartheta) + 2U(\vartheta)\right]\,.
\end{equation}
This form for the action is always possible in an appropriate conformal
frame and through field redefinitions~\cite{Bloomfield:2011np}.  The quadratic
part of quadratic gravity comes from
\begin{equation}
  S_{q} = \int d^{4}x\sqrt{-g} \, F[R_{abcd}, \vartheta, \pd R_{abcd}, \pd\vartheta, \ldots] \,,
\end{equation}
where $F[\cdot]$ is the \emph{interaction density} between the scalar
field and the curvature, which must be homogeneous of degree 2 in the
curvature tensor (and its derivatives).  This means that for any real
constant $\lambda$,
\begin{equation}
F[\lambda R_{abcd}, \ldots] =
\lambda^{2} F[R_{abcd}, \ldots ]\,.
\end{equation}
This property does not allow the interaction density to depend on 
terms independent of or linear in the Riemann tensor.

The field equations of quadratic gravity can be obtained by varying the full action with 
respect to the metric tensor and the scalar field. The latter leads to the scalar evolution equation
\begin{equation}
 \square \vartheta - U'(\vartheta) = - \frac{\pd F}{\pd \vartheta}
 + \cd_{a} \frac{\pd F}{\pd (\cd_{a}\vartheta)} - \ldots\,,
\label{eq:evol-scal-eq}
\end{equation}
where the right-hand side is minus the variational derivative of
$S_{q}$ with respect to $\vartheta$.
This is a particularly simple wave equation when $U=0$ and when $F$ is linear in
the scalar field (and its derivatives).

The space of quadratic gravity theories is spanned by the functional degree
of freedom in the interaction density, which makes this space extremely large.
There are several non-exclusive ways to classify the types of
interaction densities that may appear within $S_{q}$ in ways that are
relevant to the phenomenology of the theories. We now provide 
a partial classification on the basis of three properties:
\begin{itemize}
\item having derivative or non-derivative interactions,
\item coupling to a topological density or not, and
\item possessing a shift symmetry or not.
\end{itemize}
This classification is summarized in Fig.~\ref{fig:quad-theory-classif}, a figure
we will return to in the following subsections when we define and discuss each of 
the above items in detail.

%%% BEGIN FIGURE %%%
\tikzstyle{boxstyle}=[rounded corners=4mm, draw=black, rectangle, thick]
\tikzstyle{nodestyle}=[align=center]
\tikzstyle{linestyle}=[dashed,->,>=stealth',semithick]

\newcommand{\boxunit}{3.6cm}
\newcommand{\boxsep}{0.4cm}
\newcommand{\boxh}{2*(\boxunit+\boxsep)}

\begin{figure}[tb]
  \centering
  \begin{tikzpicture}

    % This is just to get the spacing of the figure right
    \draw [draw=white] (-\boxunit-\boxsep,-\boxunit-\boxsep) rectangle
    (\boxunit+\boxsep,\boxunit+\boxsep);

    \begin{scope}[transform canvas={xshift=-\boxunit/2-\boxsep/2}]
      \node [style=boxstyle,minimum width=\boxunit,
      minimum height=\boxh,label={Topological density}] (topol) {};
    \end{scope}
    \begin{scope}[transform canvas={xshift=\boxunit/2+\boxsep/2}]
      \node [style=boxstyle,minimum
      width=\boxunit,minimum height=\boxh,label={Non-topological}] (non-topol) {};
    \end{scope}
    \begin{scope}[transform canvas={rotate=90,
        xshift=\boxunit/2+\boxsep/2,yshift=0}]
      \node [style=boxstyle,minimum width=\boxunit,
      minimum height=\boxh,label={Derivative interaction}] (deriv) {};
    \end{scope}
    \begin{scope}[transform canvas={rotate=90,
        xshift=-\boxunit/2-\boxsep/2,yshift=0}]
      \node [style=boxstyle,minimum width=\boxunit,
      minimum height=\boxh,label={Non-derivative}] (non-deriv) {};
    \end{scope}
    \begin{scope}[transform canvas={}]
      \node [style=boxstyle,minimum width=1.25*\boxunit,
      minimum height=1.25*\boxunit,
      label={[yshift=1em,fill=white]-90:Shift symmetric}] (shift) {};
    \end{scope}

    %%%%%%%%%%%%%%%%%%%%%%%%%%%%%%%%%%%%%%%%

    \node [style=nodestyle] (topol-non-deriv-shift)
    at (-\boxsep-0.25*\boxunit,-\boxsep-0.25*\boxunit) 
    {$\specialEq$ $\alpha\vartheta \theGB$ \\ $\alpha \vartheta \pont$};

    \node [style=nodestyle] (topol-deriv-shift)
    at (-\boxsep-0.225*\boxunit,\boxsep+0.25*\boxunit)
    {$\alpha(\square\vartheta) \pont$};

    \node [style=nodestyle] (non-topol-non-deriv-shift)
    at (\boxsep+0.25*\boxunit,-\boxsep-0.25*\boxunit)
    {$\alpha\sin(\omega \vartheta) K$};

    \node [style=nodestyle] (non-topol-deriv-shift)
    at (\boxsep+0.225*\boxunit,\boxsep+0.25*\boxunit)
    {$\alpha(\square\vartheta) K$};

    \node [style=nodestyle] (topol-non-deriv)
    at (-\boxsep-0.45*\boxunit,-\boxsep-0.75*\boxunit)
    {$\specialEq$ $\alpha e^{-\gamma\vartheta} \theGB$\\
     $\alpha\vartheta \pont + \frac{1}{2}m^{2}\vartheta^{2}$};

    \node [style=nodestyle] (topol-deriv)
    at (-\boxsep-0.45*\boxunit,\boxsep+0.75*\boxunit)
    {$\alpha(\square\vartheta) \pont + \frac{1}{2}m^{2}\vartheta^{2}$};

    \node [style=nodestyle] (non-topol-non-deriv)
    at (\boxsep+0.45*\boxunit,-\boxsep-0.75*\boxunit)
    {$\specialEq$ $\alpha \vartheta K$};

    \node [style=nodestyle] (non-topol-deriv)
    at (\boxsep+0.45*\boxunit,\boxsep+0.75*\boxunit)
    {$\alpha (\square\vartheta) K + \frac{1}{2}m^{2}\vartheta^{2}$};

    \draw [style=linestyle] (topol-non-deriv) edge [bend left=10]
    (topol-non-deriv-shift);

    \draw [style=linestyle] (topol-deriv) edge [bend right=10]
    (topol-deriv-shift);

    \draw [style=linestyle] (non-topol-deriv) edge [bend left=10]
    (non-topol-deriv-shift);

  \end{tikzpicture}
  \caption{
    A classification of corrections to GR which are at most quadratic
    in curvature and couple to a single dynamical scalar field $\vartheta$.
    The interaction density may explicitly include
    derivatives (top half), or may be a non-derivative
    interaction (bottom half).  The combination of curvature
    tensors to which the scalar is coupled may be of a topological
    nature (like the Pontryagin or Gauss-Bonnet scalars; left half),
    or may be unrelated to topological invariants (right half).  In
    all of these sectors, an interaction may enjoy
    a continuous or discrete shift symmetry (inner region), or it may
    not (outer region).  Some of these interactions are limiting cases
    of others, e.g.~when an explicit mass vanishes $m\to 0$, a shift
    symmetry can be acquired. The theories we investigate in detail 
    in this paper are marked with $\specialEq$.
    }
  \label{fig:quad-theory-classif}
\end{figure}
%%% END FIGURE %%%

%--------------------------------
\subsubsection{Derivative and Non-derivative Interactions}
%\paragraph*{Derivative/non-derivative interaction.}

For the purposes of this paper, a \emph{derivative interaction}
is one that depends on at least one derivative of the scalar field 
or of the Riemann tensor.  From the EFT viewpoint, derivative interactions 
are higher operator order and should be
suppressed relative to non-derivative (algebraic)
interactions---unless for some reason only derivative interactions
appear (for example, to enforce a shift symmetry).

Given this, let us focus on non-derivative interactions.  The interaction density
must then be a sum of terms of the form
\begin{equation}
  F[R_{abcd},\vartheta] = \sum_{i} f_{i}(\vartheta) A_{i}[R_{abcd}]\,,
  \label{eq:non-deriv}
\end{equation}
where $f_{i}(\vartheta)$ are ordinary functions of $\vartheta$, and every component 
of $A_{i}[R_{abcd}]$ is a scalar function that is homogeneous of
degree 2, and now depends only on the
Riemann tensor but not its derivatives.
In the units we are using, the $f_{i}(\vartheta)$ have dimensions of
length squared.  If so desired, they can be made dimensionless by
pulling out some coefficients $\alpha_{i}$ with dimensions of
length squared, i.e.
\begin{equation}
\label{eq:alpha-def}
f_{i}(\vartheta) =\alpha_{i} \bar{f}_{i}(\vartheta)\,,
\end{equation}
with $\bar{f}_{i}$ dimensionless.

At first degree in curvature, there is only one scalar curvature invariant, the Ricci scalar $R$.
At quadratic degree there are only four independent scalar curvature
invariants,
\begin{align}
\label{eq:quad-scalar-invar}
  R^{2}, && R_{ab}R^{ab}, && K, && \pont \,,
\end{align}
where the Kretschmann scalar is $K\equiv R_{abcd}R^{abcd}$,
the Pontryagin density is $\pont\equiv {}^{*}\!R_{abcd}R^{abcd}$,
with the (left) dual of the Riemann tensor defined as
\begin{equation}
  \label{eq:dual-riem-def}
  {}^{*}\! R^{ab}{}_{cd} = \frac{1}{2}\epsilon^{abef}R_{efcd}\,,
\end{equation}
and where $\epsilon_{abcd}$ is the Levi-Civita tensor.
All other quadratic curvature invariants are algebraically dependent
on the four in \eqref{eq:quad-scalar-invar}.  For example, using
the Weyl tensor $C_{abcd}$, we have that ${}^{*}\!C_{abcd}C^{abcd}=
\pont$; similarly, an appropriate contraction of two copies of the
left-dual Riemann tensor $\sRsR$ is proportional to both the Euler density and
what is typically referred to as the four-dimensional Gauss-Bonnet
density $\theGB$,
\begin{align}
  \label{eq:4euler-dens}
  \sRsR &\equiv {}^{*}\!R_{abcd} {}^{*}\!R^{cdab} = -\theGB\,, \\
  \theGB &\equiv R^{2}-4R_{ab}R^{ab}+R_{abcd}R^{abcd}\,.
\end{align}

Given all of this, when we restrict ourselves to quadratic gravity theories with
non-derivative interactions, the most general form of $S_{q}$ is given
by
\begin{align}
\label{eq:q-action}
 S_{q} ={} \int d^4x& \sqrt{-g} \; \left[f_1(\vartheta) R^2
+ f_2(\vartheta) R_{ab} R^{ab}  \right. \\
& \left.  {}+ f_3(\vartheta) R_{abcd} R^{abcd}
+ f_4(\vartheta) R_{abcd} {}^{*}\!R^{abcd}   \right]\,.\nn
\end{align}
Some examples of these are presented in the bottom rectangular box of Fig.~\ref{fig:quad-theory-classif}. 

If the coupling between the scalar and curvature is large, the terms
in Eq.~\eqref{eq:q-action} can drastically affect the theory.  For
example, at strong coupling, the Pontryagin coupling can make the
graviton kinetic term flip sign at high $k$-number, becoming a ghost
field~\cite{Dyda:2012rj}.  However, in the EFT context, this only
occurs outside the regime of validity of the
theory~\cite{Delsate:2014hba}.  We discuss this further in
Sec.~\ref{sec:EFT}.

When considering derivative interactions without any further restrictions, 
a plethora of terms could be written for the interaction density.
Some examples are presented in the top rectangular box of Fig.~\ref{fig:quad-theory-classif}. 
Note, of course, that many other terms could also be 
written down. For example, derivative interactions could also include terms 
proportional to products of derivatives of the scalar field with the Ricci tensor
or scalar. We mention derivative interactions for pedagogical reasons
and for completeness of the classification, but we will not study them in 
this paper in detail.  

%--------------------------------
\subsubsection{Topological/non-topological density}
%\paragraph*{Topological/non-topological invariant.}

Let us define a \emph{topological interaction} as one that is
the product of a function of the scalar field and a \emph{topological density} $T_{i}$:
\begin{equation}
F[R_{abcd},\vartheta,\ldots] = \sum_{i} \bar{F}_i[\vartheta,\pd \vartheta,\ldots] \, T_i\,,
\end{equation}
where each functional $\bar{F}_{i}$ is now independent of the curvature tensor.
A topological density is defined as a quantity whose volume integral over the 
four-dimensional manifold is a topological invariant. 
For example, the Euler density, which is proportional to $\theGB$
(see Sec.~\ref{sec:Hair-Loss-Theorem} for the exact relationship), 
is a topological density because its volume integral is proportional
to the Euler characteristic $\chi$.
Similarly, the Pontryagin density $\pont$ is also a topological density
because its volume integral is proportional to the first Pontryagin
number.
 
An important property of topological densities is that in a
simply-connected neighborhood, each may be written as the divergence
of a 4-current. For example, in the Pontryagin case,
\begin{align}
  \pont &= \cd_{a} J^{a}\,, & \text{(locally)}
\end{align}
and similarly for $\theGB$. This allows for the local integration by parts
of such interactions, which explains why, in the absence of a scalar field, 
they do not lead to modifications to the classical field equations.

Thus, when we restrict ourselves to quadratic gravity theories with non-derivative, 
topological interactions, the most general form of $S_{q}$ is 
\begin{align}
  \label{eq:S-q-topol}
  S_{q} = \sum_{i} \int d^{4}x \; \sqrt{-g} \; f_{i}(\vartheta) \; T_{i}\,,
\end{align}
which we define as \emph{topological quadratic gravity} (though this is
unrelated to topological quantum field theory or to topological
massive gravity~\cite{Deser:1981wh,Deser:1982vy}). 
A few examples of topological quadratic gravity theories are presented in 
the intersection of the left and bottom rectangles of Fig.~\ref{fig:quad-theory-classif}.
For simplicity, let us consider the case in which the scalar field couples only to a single
topological density. Then, when $T=\pont$, $S_{q}$
defines \emph{dynamical Chern-Simons} (DCS) gravity, and
when $T=\theGB$ it defines \emph{dynamical Gauss-Bonnet} (DGB) gravity.
The specific choice
\begin{flalign}
  \label{eq:S-q-TEdGB}
\specialEq &&
S_{\TEDGB} &= \int d^{4}x\sqrt{-g} \ \alpha_{\TEDGB} \; e^{-\gamma\vartheta} \; \theGB
&&
\end{flalign}
recovers \emph{truncated Einstein-dilaton-Gauss-Bonnet}
(TEdGB) theory in 4 dimensions~\cite{Maeda:2009uy}, where $\vartheta$ plays the role of the
dilaton and $(\alpha_{\TEDGB},\gamma)$ are constant coupling strengths. Later in this paper, 
we will study TEdGB in more detail, which is why we have marked it with $\specialEq$.
The full action for TEdGB is given by the sum of the Einstein-Hilbert
term \eqref{eq:EH-action}, the kinetic term action for $\vartheta$
with vanishing potential \eqref{eq:S-vartheta}, and the interaction
term above, viz.~$S=S_{\EH}+S_{\vartheta}+S_{\TEDGB}$.

An important property of these theories is that any constant shift or offset,
$f(\vartheta)\to f(\vartheta) + c$, in Eq.~\eqref{eq:S-q-topol} does not affect the classical
equations of motion (EOMs), since it only contributes a constant multiple of
a topological number to the action.  Equivalently, since $T$ can
locally be written as a divergence, upon variation of the action, this shift only
contributes a boundary term.  Thus, without loss of generality, we may shift
$f(\vartheta)$ such that $f(0)=0$, so that if $f(\vartheta)$ is expanded as a
Taylor series, the expansion starts at first order in
$\vartheta$.

Expanding $f(\vartheta)$ in a Taylor series is appropriate in the
so-called \emph{decoupling limit} of a theory, where the modifications to GR
are sufficiently small. This can be enforced, for example, by requiring that $f(\vartheta)$ 
satisfy 
\begin{align}
  f(\vartheta) \; T \ll \kappa R\,,
\end{align}
where we recall that, for example, $T=\pont$ or $T=\theGB$.  In this
case, we say that the
scalar $\vartheta$ ``decouples and interacts weakly.''
In fact, if the theory is treated as an EFT, we expect the action to
contain terms at higher order in $\alpha$.  In this case, the
solution $\vartheta$ should also be expanded in a power series in
$\alpha$; sub-leading terms in this expansion could be corrected by
higher-$\alpha$ terms in the action, so they may not be trusted.
Thus, to be consistent, if one ignores $\mathcal{O}(\alpha^{2})$ terms
in the action, one must also ignore $\mathcal{O}(\alpha^{2})$ terms in
the solution for $\vartheta$.

So long as
$f'(0)\neq 0$, all choices of coupling functions yield the
same two theories, \emph{decoupled dynamical Gauss-Bonnet} (\DDGBText{}) and
\emph{decoupled dynamical Chern-Simons} (\DDCSText), with actions\footnote{%
These theories have sometimes been referred to as ``Einstein-dilaton-Gauss-Bonnet'' gravity 
and ``dynamical Chern-Simons'' gravity elsewhere in the literature~\cite{CSreview,yunespretorius,Yunes:2011we}; 
we have here changed the terminology to distinguish between other similar theories.}
\begin{flalign}
\label{eq:S-DDGB}	
\specialEq &&
  S_{\DDGB} &= \int d^{4}x\sqrt{-g} \ \alpha_{\GB} \, \vartheta \; \theGB\,,
&&
\\
\label{eq:S-DDCS}
&&
S_{\DDCS} &= - \frac{1}{4} \int d^{4}x\sqrt{-g} \ \alpha_{\CS}  \, \vartheta \, \pont\,,
&&
\end{flalign}
where each $\alpha_{X}$ is constant (and the factor of $-1/4$ in $S_{\DDCS}$ is conventional).
\DDGBText{} is another theory that we will study in more detail later in this paper,
which is why we have marked it with $\specialEq$.

When considering quadratic gravity theories with non-topological interactions, many other
terms may be written down. Some examples are provided in the right rectangle of Fig.~\ref{fig:quad-theory-classif}. 
Those examples consist of an interaction that is the product of a function of the scalar field 
and the Kretschmann scalar $K$, which is not a topological invariant. For future convenience,
let us define the theory with $S_{q}$ given by 
\begin{flalign}
\specialEq &&
S_{\K} &= \int d^{4}x \sqrt{-g} \; \alpha_{K} \; \vartheta \; K \,
&&
\end{flalign}
as \emph{Kretschmann gravity}, where again $\alpha_{K}$ is a constant.
This is another theory we will investigate in some detail later on, which
is why we have marked it with $\specialEq$.
Of course, the function of the scalar field could depend on its derivatives. 
This paper will not focus on such theories any further, but we include them in 
Fig.~\ref{fig:quad-theory-classif} for completeness of the classification.  

%--------------------------------
\subsubsection{Shift symmetry}
%\paragraph*{Shift symmetry.}
Let us define \emph{shift-symmetric} theories as those whose equations
of motion are invariant
under the (discrete or continuous) shift $\vartheta\to\vartheta+c$, for a constant $c$. 
Quadratic gravity theories with non-derivative, topological interactions that depend on a linear 
coupling function, i.e.~$f(\vartheta) = \alpha \, \vartheta$, and contain a flat potential, i.e.~$U''(\vartheta)=0$, 
are shift symmetric (\DDGBText{} and \DDCSText{} are both special cases of such theories).
We can see this by locally integrating Eq.~\eqref{eq:S-q-topol} by parts, 
\begin{align}
  \label{eq:S-q-topol-int-by-parts}
  S_{q} = - \int d^{4}x \sqrt{-g} f'(\vartheta) (\cd_{a}\vartheta) J^{a}
  + \textrm{bndry.}\,,
\end{align}
and noting that for the case of a linear coupling function,
$f'(\vartheta)=\alpha$ is independent of $\vartheta$.  Thus 
a global constant shift $\vartheta\to\vartheta + c$ only changes
$S_{q}$ by a boundary term, so the EOMs are invariant.
A shift symmetry is a natural outcome of the decoupling limit of either DCS or
DGB.  It may also be a desirable property built into a theory, since
it protects the scalar potential from acquiring a mass via quantum
corrections, since all corrections must abide by the shift symmetry.

Shift symmetry is then another feature that may be used to classify theories,
which we denote as a central square in Fig.~\ref{fig:quad-theory-classif}.  
\DDGBText{} and \DDCSText{} are not the only quadratic gravity theories
that enjoy a shift symmetry.  Another example is actions that
only involve derivatives of $\vartheta$, such as
$\mathcal{L}\supset \alpha(\square \vartheta)K$.  Alternatively, a
theory may exhibit a discrete shift symmetry if it is periodic in
$\vartheta$, e.g.~$\mathcal{L}\supset \alpha\sin(\omega \;\! \vartheta)K$, 
for some constant $\omega$. 

Let us comment here that TEdGB, while not shift-symmetric, is
invariant under the simultaneous field redefinition and parameter
scaling $\vartheta \to \vartheta + c$,
$\alpha_{\TEDGB}\to \alpha_{\TEDGB} e^{+\gamma c}$, for some additive
constant $c$.  Therefore we can only discuss bounds on
$\alpha_{\TEDGB}$ if we have some way of specifying the constant $c$.
We will fix this freedom by identifying the asymptotic value
$\vartheta_{\infty}$.

%----------------------------
\subsection{Well-posedness and EFT}
\label{sec:EFT}
As presented and classified here, quadratic gravity theories may not be
well-posed when treated as exact theories.  The EOMs may have
higher than second-order derivatives, and they may suffer from the
Ostrogradski instability~\cite{Woodard:2006nt}.  Indeed,~\cite{Delsate:2014hba}
analyzed \DDCSText{} as an exact theory and found a problematic
initial value formulation. However, not all quadratic gravity theories
contain higher than second-order derivatives: DGB is the special case
with only second-order EOMs.

Quadratic gravity theories should in fact be treated as \emph{effective theories}
with a limited regime of validity, rather than exact theories.  
This requirement comes from the expectation that GR fails as
a description at some short length scale, and we may model corrections
to GR in an extended regime of validity through an EFT
approach~\cite{Burgess:2003jk}. Within the regime of validity of
the EFT, the corrections to GR must be controllably small, so we find
ourselves in the decoupling limit [e.g.~Eqs.~\eqref{eq:S-DDGB} and
\eqref{eq:S-DDCS}].  In fact, it was shown in~\cite{Delsate:2014hba}
that in the decoupling limit, \DDCSText \, is well-posed around appropriate
background solutions.  This same argument should hold for other
higher-curvature theories when treated through order-reduction
 in the decoupling limit.

A common criticism here is that from dimensional analysis, the
dimensional coupling coefficients\footnote{Note that one could make
the replacement $\alpha_{X} \to \ell^{2} \bar{\alpha}_{X}$, such that the new
coupling strength is dimensionless, and all units are carried by the length scale $\ell$. 
We will not make that choice here.} $\alpha_{X}$ are expected to be
Planck scale and thus irrelevant at astrophysical length scales.
However, naive dimensional analysis seems to fail in certain sectors
(most obviously in the scaling of the cosmological constant), so we
will remain agnostic here.  Instead, we simply parametrize our ignorance
of the length scale at which GR requires corrections, and allow
observations to guide theory-building.

The validity of the EFT description requires that $S_{q} \ll S_{\EH}$
and that $S_{\vartheta} \ll S_{\EH}$.  In the geometric units used in
this paper ($G=1=c$), the conditions for validity of the EFT in
\DDGBText{} and \DDCSText{} become $\sqrt{\alpha_{X}} \ll
(\kappa/{\cal{C}}^{3})^{1/4} {\cal{R}}$ for $\alpha_{X} =
\alpha_{\GB}$ or $\alpha_{\CS}$, where ${\cal{R}}$ is the radius of
the smallest object in the system, and $\mathcal{C}\equiv
GM/\mathcal{R}$ is its gravitational compactness (recall that in
geometric units, the action has dimensions of $[S]=L^{2}$,
thus $[\vartheta]=L^{0}$ is dimensionless while the coupling constant
$[\alpha]=L^{2}$ is dimensional).  The preceding scaling estimates
made use of the estimate $\vartheta = {\cal{O}}[\alpha_{X}
({\cal{C}}/{\cal{R}})^{2}]$, obtained from the EOM of
the scalar field.
Clearly, as the dimensional coupling strength goes to zero, one
recovers GR, while for sufficiently small couplings,
quadratic gravity is a small deformation of Einstein's theory.

We use the fact that the theory is a small deformation of GR to
establish a perturbative scheme for finding solutions in the
decoupling limit~\cite{Campanelli:1994sj, Woodard:2006nt,
  Cooney:2009rr}.  For a more extensive discussion of this
approach, see~\cite{Yagi:2011xp}.  The dynamical fields---the metric,
scalar field, and any other fields present---are expanded in a Taylor
series.  Explicitly, we have
\begin{align}
  \vartheta &= \vartheta^{[0]} + \zeta^{1/2}\vartheta^{[1/2]} + \mathcal{O}(\zeta^{1})\,,\\
  g_{ab} &= g_{ab}^{[0]}  + \zeta^{1/2}g_{ab}^{[1/2]} + \zeta^{1} g_{ab}^{[1]} + \mathcal{O}(\zeta^{3/2})\,,
\end{align}
where $\zeta$ is a dimensionless parameter which is proportional to
$\alpha_{i}^{2}$.  When $\zeta\to 0$, the EOM for $\vartheta$ becomes
simply $\square^{[0]}\vartheta^{[0]}=0$ (since we are interested in
long-ranged scalar fields, we have a vanishing potential).  In order
to satisfy asymptotic flatness, the asymptotic solution for
$\vartheta^{[0]}$ must be $\lim_{r\to\infty} \vartheta^{[0]} \to \text{const}.$
For the special case of a shift-symmetric theory, this latter constant
can be set to zero.  Then by examining the perturbed equations of
motion we will find
\begin{align}
  \label{eq:scalar-decomp}
  \vartheta &= 0 + \zeta^{1/2}\vartheta^{[1/2]} + \mathcal{O}(\zeta^{3/2})\,,\\
  \label{eq:metric-decomp}
  g_{ab} &= g_{ab}^{[0]} + \zeta^{1} g_{ab}^{[1]} + \mathcal{O}(\zeta^{2})\,.
\end{align}
The powers of $\zeta$ appearing above follow from setting
$\vartheta^{[0]}=0$, and the presence of the
explicit $\alpha_{i}$ in the interaction term of the Lagrangian.  The
background that we expand about is a GR solution,
$g_{ab}^{[0]}=g_{ab}^{\GR}, \vartheta^{[0]}=0$.  The most important
feature of this order-reduction scheme is that order-by-order, the
principle part of the differential operator acting on each
$\vartheta^{[k]}$ and $g^{[k]}$ is respectively
$\square^{[0]} \vartheta^{[k]}$ (the background d'Alembertian) and
$G^{[1]}[g^{[k]}]$ (the linearized Einstein tensor operator).  Because
of this, the order-reduced EOMs are always well-posed.

%----------------------------
\subsection{Constraints on Quadratic Gravity Theories}

Not many quadratic gravity theories have been studied in sufficient
detail to be tested against observations of, for example, solar system
phenomena or binary pulsars. Nevertheless, one might think that
such theories have already been constrained strongly from binary pulsar 
observations because 
%of the standard ``lore'' that has unfortunately permeated the literature:
a theory that contains long-ranged scalar fields will predict the
excitation of dipole radiation in binary systems. Such radiation would
produce a much faster decay of the orbital period of binary pulsars,
and since this has not been observed, such theories must already be
well-constrained or ruled out.

Certain quadratic gravity theories that are shift-symmetric, however, evade
this problem completely, as we will show in this paper, and thus, such
theories are much less-well-constrained. The most well-studied
shift-symmetric theories in the context of experimental relativity are
those within the non-derivative and topological interaction class.  Recall that
these reduce to \DDCSText{} and \DDGBText{} in the decoupling limit.
The best estimated constraint on \DDGBText{} comes from observations of low-mass
X-ray binaries (LMXBs), which imply that $\sqrt{|\alpha_{\GB}|} \lesssim
{\cal{O}}(2 \; {\rm{km}})$~\cite{Yagi:2012gp}.\footnote{This bound also
applies to Kretschmann gravity. A similar bound is obtained on
TEdGB gravity via the existence of stellar-mass BHs~\cite{Kanti:1995vq,Pani:2009wy}.}
The best constraint on \DDCSText{} comes from observations of
frame-dragging with Gravity Probe~B and the LAGEOS satellites~\cite{alihaimoud-chen}
and from table-top experiments~\cite{kent-CSBH}, which
imply that $\sqrt{|\alpha_{\CS}|} \lesssim {\cal{O}}(10^{8} \;
{\rm{km}})$.

The \DDGBText{} estimate is much stronger than the \DDCSText{}
constraint for two reasons.  First, the \DDCSText{} correction is only
sourced by the parity-odd part of the background.  Thus spherically
symmetric configurations, like the exterior gravitational field of a
non-rotating star, are not modified in \DDCSText{} at all, because the
Pontryagin density vanishes.  Secondly, there is no estimated
constraint on \DDCSText{} from a stellar-mass BH system, only from the
solar system.  The curvature in the solar system, however, is
extremely weak relative to that of BHs in LMXBs, and thus the
estimated constraint on \DDGBText{} is much stronger.

%%%%%%%%%%%%%%%%%%%%%%%%%%%%%%%%%%%
\section{Miracle Hair Loss Theorem}
\label{sec:Hair-Loss-Theorem}

In this section, we present a proof that asymptotically $1/r$,
spherically symmetric scalar hair 
(which recall, we refer to in this paper as the \emph{scalar charge})
cannot be supported by objects with
no event horizon, like NSs and ordinary stars, in the
decoupling limit of Gauss-Bonnet gravity.
This proof both generalizes and makes more
rigorous the ``physicist's proof'' presented in~\cite{Yagi:2011xp}, 
but it follows the same spirit.
We first state the theorem and then give a sketch of the proof, which
we hope is convincing to most readers. We then present a complete proof for
those readers desiring more mathematical rigor.

\begin{theorem}
Consider a 4-dimensional manifold $M$ which is homeomorphic to
Minkowski (thus excluding black hole spacetimes, which are
`punctured'). Let $M$ be endowed with a metric $g$ with Lorentz
signature, which is stationary and asymptotically flat~\cite{Wald:1984cw,Stewart:1990}.
We require that the Riemann curvature tensor is continuous
\emph{almost everywhere},\footnote{%
  In the sense of measure theory, a property holds \emph{almost
    everywhere} if the set of points where it fails to hold has
  measure zero.
  Curvature tensors may be discontinuous for a body with a solid
  surface where the density goes to zero discontinuously; e.g.~in GR,
  the Ricci tensor is nonzero inside the body but it vanishes outside,
  with a discontinuity at the surface.
}
with any discontinuities in a spatially compact set of
measure zero. We also require that, in an asymptotically Cartesian coordinate
system, the components of the Riemann tensor decay at least as
$\mathcal{O}(r^{-2})$.\footnote{%
  To establish the Riemann-integrability of the Gauss-Bonnet density,
  we need the discontinuities to have measure-zero, and for Riemann to
  have sufficiently fast asymptotic fall-off.
}
Further, consider a real scalar field $\vartheta$,
stationary under the same isometry as the metric, whose
dynamics are governed by a linear coupling in the action to the
Gauss-Bonnet density, thus satisfying an EOM
\begin{equation}
\label{eq:vartheta-prop-GB}
  \square\vartheta = c_{1} \left( {}^{*}R_{abcd}{}^{*}R^{cdab} \right)
\end{equation}
for some constant real number $c_{1}$.
Then the asymptotically $1/r$, spherically-symmetric scalar
hair (the \emph{scalar charge}) vanishes.
\end{theorem}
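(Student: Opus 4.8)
The plan is to turn the scalar monopole charge into a flux integral, use the scalar equation of motion \eqref{eq:vartheta-prop-GB} to re-express that flux as a bulk integral of the Gauss--Bonnet density over a spatial slice, and then annihilate that bulk integral using the fact that $\theGB$ is a total divergence together with the trivial topology of $M$. Concretely, I would first adapt coordinates $(t,x^{i})$ to the stationary Killing vector so that $\pd_{t}g_{\mu\nu}=0=\pd_{t}\vartheta$, and use an asymptotically Cartesian chart near spatial infinity. Since the source in \eqref{eq:vartheta-prop-GB} is quadratic in the Riemann tensor, and Riemann decays at least as $\mathcal{O}(r^{-2})$, the right-hand side falls off at least as $\mathcal{O}(r^{-4})$, so $\vartheta$ admits an expansion $\vartheta = \vartheta_{\infty} - D/r + \mathcal{O}(r^{-2})$ whose coefficient $D$ is, by definition, the scalar charge. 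Multiplying \eqref{eq:vartheta-prop-GB} by $\sqrt{-g}$ and using $\sqrt{-g}\,\square\vartheta = \pd_{\mu}(\sqrt{-g}\,g^{\mu\nu}\pd_{\nu}\vartheta)$, the $\pd_{t}$ piece drops by stationarity, and the divergence theorem on a slice $\Sigma\cong\mathbb{R}^{3}$ (legitimate because elliptic regularity makes $\vartheta$ of class $C^{1}$, and because $\Sigma$ has no inner boundary) gives
\begin{equation}
4\pi D = \int_{\Sigma}\sqrt{-g}\,\square\vartheta\;d^{3}x
 = c_{1}\int_{\Sigma}\sqrt{-g}\;{}^{*}R_{abcd}{}^{*}R^{cdab}\;d^{3}x
 = -\,c_{1}\int_{\Sigma}\sqrt{-g}\,\theGB\;d^{3}x\,,
\end{equation}
where the last step uses \eqref{eq:4euler-dens}.

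Next I would exploit the topological nature of $\theGB$. Because $M$ is homeomorphic to $\mathbb{R}^{4}$ it is contractible, so $TM$ is trivial; picking a global orthonormal frame produces a globally defined Chern--Simons-type current $J^{a}$ with $\theGB = \cd_{a}J^{a}$, i.e.~$\sqrt{-g}\,\theGB = \pd_{\mu}(\sqrt{-g}\,J^{\mu})$, and stationarity lets me take $J^{\mu}$ time-independent. Then exactly as in the previous step, $\int_{\Sigma}\sqrt{-g}\,\theGB\,d^{3}x = \lim_{R\to\infty}\oint_{S_{R}}\sqrt{-g}\,J^{i}\,dS_{i}$; since $J^{\mu}$ is built algebraically from the connection and curvature, the assumed fall-off forces $\sqrt{-g}\,J^{i}$ to decay faster than $r^{-2}$, so this flux vanishes in the limit. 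Combining with the display above yields $4\pi D = 0$, hence $D = 0$.

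The step I expect to be the main obstacle is making the two divergence-theorem manipulations rigorous in the presence of the measure-zero discontinuities of the Riemann tensor (e.g.~at a stellar surface where the density drops to zero), so that no spurious surface contributions appear. For the $\vartheta$ integral this is automatic since $\vartheta\in C^{1}$, but for the $\theGB$ integral the cleanest route is to approximate $g$ by smooth metrics with uniformly bounded, almost-everywhere-convergent curvatures, run the argument for each, and pass to the limit by dominated convergence (alternatively, one checks directly, using the GR junction conditions, that the normal component $\sqrt{-g}\,J^{i}n_{i}$ is continuous across the surface). One also wants to be careful that the object $\int_{\Sigma}\sqrt{-g}\,\theGB\,d^{3}x$ --- a four-dimensional density integrated over a three-dimensional slice --- is precisely what the flux construction delivers and is well-defined given the preferred stationary time, rather than the four-dimensional Euler-characteristic integral. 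Finally, it is worth noting where the argument would break for a black hole: $\Sigma$ would then carry an inner boundary at the horizon, $\int_{\Sigma}\sqrt{-g}\,\theGB\,d^{3}x$ would equal the horizon flux of $J^{i}$, which need not vanish, and this is exactly why horizonless objects are special here.
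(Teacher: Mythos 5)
Your proposal is correct in its essentials, but it takes a genuinely different route from the paper. The paper integrates the equation of motion over a four-dimensional region (a spatial 3-ball crossed with a time interval compactified to $S^{1}$, so that the boundary normal is everywhere spacelike) and invokes the generalized Gauss--Bonnet--Chern theorem for pseudo-Riemannian manifolds with boundary (Alty; Gilkey--Park): the bulk integral of $\theGB$ becomes $\chi(B_{r}\times S^{1})=0$ minus a boundary 3-form $\mathbf{\Theta}(\mathbf{n})$ built from $R^{ab}{}_{a'b'}n^{c}{}_{;c'}$ and $n^{a}{}_{;a'}n^{b}{}_{;b'}n^{c}{}_{;c'}$, whose components decay as $\mathcal{O}(r^{-3})$ so that its integral vanishes as $r\to\infty$. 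You instead stay on a single spatial slice and use the contractibility of $M$ to trivialize the frame bundle, writing $\sqrt{-g}\,\theGB=\pd_{\mu}(\sqrt{-g}J^{\mu})$ for a \emph{globally defined} Chern--Simons (transgression) current, whose flux at infinity dies because the spin connection in an asymptotically Cartesian frame is $\mathcal{O}(r^{-2})$. These are two faces of the same Chern--Weil fact --- the paper's $\mathbf{\Theta}(\mathbf{n})$ is essentially the transgression form in a boundary-adapted frame, and the vanishing Euler characteristic is what lets the two bookkeepings agree --- but your version avoids both the indefinite-signature boundary GBC theorem and the time-compactification trick, at the price of having to construct and control an explicit global current. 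The one point you must nail down is the frame dependence of $J^{\mu}$: the current is not unique, and the $\mathcal{O}(r^{-2})$ decay of $\omega$ (hence the vanishing of the flux) holds only if the global orthonormal frame is chosen to asymptote to the Cartesian coordinate frame and to be Lie-dragged by the stationary Killing field; with a rotating or otherwise non-trivial frame at infinity the flux need not vanish. You should also, as the paper does, rule out a possible $r^{-2}\log r$ tail in the particular solution contaminating the $1/r$ flux (it does not, since its radial derivative is $o(r^{-2})$), and exclude growing homogeneous modes by asymptotic flatness. The paper's heavier machinery pays off in the extension it advertises to black-hole spacetimes, where the topology is non-trivial, an inner (null) boundary appears, and your global-current shortcut would have to be replaced by exactly the boundary-term analysis the GBC theorem packages.
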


\begin{proof}[Sketch of the proof]
  Our proof begins by integrating the EOMs
  [Eq.~\eqref{eq:vartheta-prop-GB}] over a suitably chosen spacetime region
  $C$,
   \begin{equation}
    \int_{C} \square\vartheta \, \sqrt{-g} \; d^{4}x = c_{1}\int_{C} \sRsR \, \sqrt{-g} \; d^{4}x\,.
    \label{eq:int-eom}
  \end{equation}
  As depicted in Fig.~\ref{fig:proof-geom}, the region $C$ is a
  spatial 3-ball crossed with a segment of time, $t\in[0,T]$. For
  technical reasons, we compactify the time direction $t\sim t+T$,
  which for a stationary situation does not change the
  physics. 
   
  We now manipulate the right-hand side of Eq.~\eqref{eq:int-eom}.
  First, we use the generalized Gauss-Bonnet-Chern
  (GBC) theorem for (pseudo-)Riemannian
  manifolds~\cite{avez1962geometrie,MR0155261} with
  boundary~\cite{Alty:1994xj,Gilkey:2014wca}. This converts the integral 
  into a sum of (i) a topological number that vanishes for our
  topology, and (ii) a boundary integral. Thus we have
  \begin{equation}
    \int_{C} \square\vartheta \sqrt{-g} \; d^{4}x = -c_{2} \oint_{\pd C}
    \mathbf{\Theta}(\mathbf{n})\,,
  \end{equation}
  where $\mathbf{\Theta}(\mathbf{n})$ is a 3-form which depends on the
  outward spatial unit normal vector $\mathbf{n}$, and $c_{2}$ is
  another constant multiple of $c_{1}$. The integral of the
  right-hand side at large radius $r$ exists and decays at least as $r^{-1}$.

\begin{figure}[tb]
  \centering
  \includegraphics[width=5cm]{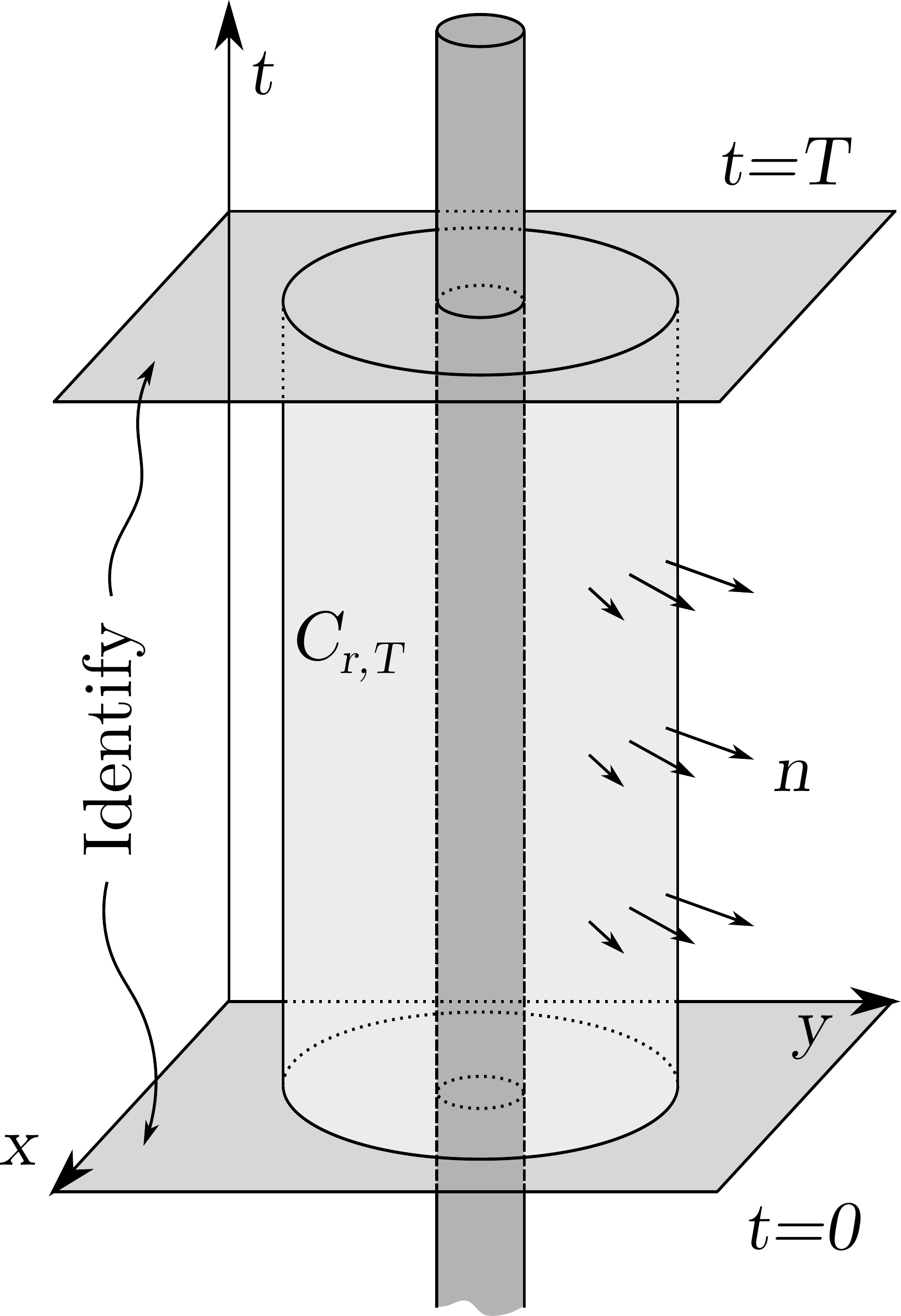}
  \caption{
    Spacetime geometry for our proof. Time $t$ is up, two spatial
    dimensions are presented as $x,y$ and one spatial dimension is
    suppressed. The dark shaded cylinder represents the world tube of
    the compact matter sources, e.g.~a stationary isolated neutron
    star. The integration region is the lighter shaded cylinder
    $C_{r,T}$, a spatial 3-ball of radius $r$ over the time interval
    $t\in[0,T]$. We compactify the time axis by the identification
    $t\sim t+T$, giving our time axis the topology of a circle
    $S^{1}$. The region $C_{r,T}$ has unit outward normal vector
    $\mathbf{n}$.
  }
  \label{fig:proof-geom}
\end{figure}

  Next we manipulate the left-hand side of Eq.~\eqref{eq:int-eom}.
  First, we use the generalized Stokes theorem to
  convert the integral to another boundary integral, and consider the limit as
  $r\to\infty$:
   \begin{equation}
    \lim_{r\to\infty} T
    \oint_{\pd C} (\pd_{r} \vartheta) r^{2} d^{2}\Omega = 0\,,
  \end{equation}
  where $d^{2}\Omega = \sin\theta d\theta d\phi$ is the area element
  on the unit 2-sphere. In the limit as $r\to\infty$, this integral
  exists and depends on only the scalar charge $\mu$, which recall
  we define as
  \begin{equation}
  \vartheta = \frac{\mu}{r} + {\cal{O}}\left(r^{-2}\right)\,.
  \label{eq:scalar-charge-def}
  \end{equation}
  Thus, we find
  \begin{equation*}
    4\pi T \mu = 0\,,
  \end{equation*}
  which implies that $\mu = 0$. \qedhere
\end{proof}

The sketch presented above summarizes the proof that 
scalar charge cannot be supported by objects without
event horizons in a quadratic gravity theory in which 
the scalar field satisfies Eq.~\eqref{eq:vartheta-prop-GB}.
In particular, this is the case in the decoupling limit of 
dynamical Gauss-Bonnet gravity, i.e.~in \DDGBText{}
where $c_{1} = \alpha_{\GB}$. Those readers who are satisfied
with this level of detail may proceed to the next Section. 
For those desiring mathematical rigor, we now present the 
complete proof.

\begin{proof}
Our proof makes use of a special case of the
generalized Gauss-Bonnet-Chern theorem for manifolds with boundary and
indefinite metric, presented by Alty~\cite{Alty:1994xj} and later by
Gilkey and Park~\cite{Gilkey:2014wca}. Let us briefly re-state this theorem
for the special case of a 4-dimensional manifold $\mathcal{M}$ with metric 
$g$ of signature $(-{+}{+}+)$ or $(+{-}{-}-)$. Let $\mathcal{M}$ 
have a (potentially empty) boundary $\pd\mathcal{M}$ with an induced metric
on each of its connected components whose signature never changes sign, i.e.~each
component has a normal $\mathbf{n}$ which is either everywhere
spacelike, everywhere timelike, or everywhere null.
Then, the 4-dimensional generalized Gauss-Bonnet-Chern
theorem says that\footnote{%
  The careful reader will note that Alty's theorem was more general,
  allowing for vector fields other than $\mathbf{n}$. The price for
  this generalization is to also include the topological \emph{kink}
  number~\cite{Alty:1994xj}, which vanishes when only considering
  $\mathbf{n}$; hence we omit it. Gilkey and
  Park~\cite{Gilkey:2014wca} make this same simplification, also
  omitting the kink number.
}
\begin{equation}
  \label{eq:GBC4}
  \chi(\mathcal{M}) =
  (-1)^{[p/2]}
  \left[
    \int_{\mathcal{M}} \mathbf{\Delta}
    +\int_{\pd\mathcal{M}} \mathbf{\Theta}(\mathbf{n})
  \right]\,,
\end{equation}
where $\chi(\mathcal{M})$ is the Euler characteristic of
$\mathcal{M}$ and $[p/2]$ is the largest integer $\le p/2$. The 4-form
$\boldsymbol{\Delta}$ is given by~\cite{Alty:1994xj}
\begin{equation}
  \label{eq:Delta}
  \boldsymbol{\Delta}
  =\frac{1}{128\pi^{2}}\epsilon_{abcd}\epsilon^{a'b'c'd'}
  R^{ab}{}_{a'b'} R^{cd}{}_{c'd'}
  \sqrt{-g} d^{4}x \,,
\end{equation}
and the 3-form $\mathbf{\Theta}(\mathbf{n})$ is given
by~\cite{Alty:1994xj}
\begin{align}
  \label{eq:Theta}
  \mathbf{\Theta}(\mathbf{n}) &=
  \left[\Theta_{1}(\mathbf{n})-\Theta_{2}(\mathbf{n})\right] \ 
  {}^{(3)}\boldsymbol{\epsilon}(\mathbf{n})\,, \\
  \Theta_{1}(\mathbf{n}) &= \frac{1}{16\pi^{2}}
   {}^{(3)}\epsilon_{abc} {}^{(3)}\epsilon^{a'b'c'} R^{ab}{}_{a'b'} n^{c}{}_{;c'}\,, \\
  \Theta_{2}(\mathbf{n}) &= \frac{1}{12\pi^{2}}
   {}^{(3)}\epsilon_{abc} {}^{(3)}\epsilon^{a'b'c'} n^{a}{}_{;a'} n^{b}{}_{;b'} n^{c}{}_{;c'} \,,
\end{align}
with ${}^{(3)}\epsilon_{abc}=n^{d}\epsilon_{dabc}$ the induced volume
3-form in the tangent subspace orthogonal to $\mathbf{n}$.

To apply the 4-dimensional generalized Gauss-Bonnet-Chern
theorem, we will take the manifold $\mathcal{M}\equiv
C'_{r,T}$ to be a submanifold $C'_{r,T}\subset M$ of the whole
spacetime. We start with a submanifold $C_{r,T}$ which is a spatial
3-ball $B_{r}$ of radius $r$ crossed with a time interval
$t\in[0,T]$. This submanifold does not satisfy the conditions of
Alty's proof because (i) the boundary is not smooth, having
``corners'' at the ends of the 4-cylinder (see
Fig.~\ref{fig:proof-geom}), and (ii) the boundary $\pd C_{r,T}$ has a
normal which is timelike in some regions (the top/bottom of the
4-cylinder) and spacelike in others (the sides of the
4-cylinder). However, because the spacetime is stationary, the physics
is not affected by compactifying the time direction. Thus,  we use the
identification $t\sim t+T$, which turns the time axis from
$\mathbb{R}$ into $S^{1}$. This glues the top and bottom of the
4-cylinder together, giving it the topology of $B_{r}\times S^{1}$. We
call this glued manifold $C'_{r,T}$. The boundary $\pd C'_{r,T}$ has
topology $S^{2}\times S^{1}$, and the normal is everywhere spacelike,
satisfying the conditions to apply Alty's proof.

We now proceed by integrating the EOMs
[Eq.~\eqref{eq:vartheta-prop-GB}] over the region $C'_{r,T}$,
\begin{equation}
  \label{eq:EOM-integrated}
  \int_{C'_{r,T}}\square\vartheta \sqrt{-g} d^{4}x =
  c_{1} \int_{C'_{r,T}} \left( {}^{*}R_{abcd}{}^{*}R^{cdab} \right)
  \sqrt{-g} d^{4}x \,.
\end{equation}
First we investigate the right-hand side, and apply
Eq.~\eqref{eq:GBC4}, which gives
\begin{equation}
\label{eq:EOM-int-apply-GB}
\begin{aligned}
  \int_{C'_{r,T}}\square\vartheta \sqrt{-g} d^{4}x =
  c_{2} \Bigg[ &
    \chi(C'_{r,T}) -\int_{\pd C'_{r,T}} \mathbf{\Theta}(\mathbf{n})
  \Bigg]\,,
\end{aligned}
\end{equation}
in the case of $p=1$ (the final result is the same for $p=3$), where
$c_{2}=32\pi^{2} c_{1}$.
The Euler characteristic of a product manifold
satisfies $\chi(M\times N)=\chi(M)\cdot\chi(N)$, and $\chi(S^{1})=0$,
thus $\chi(C'_{r,T})=0$.  Thus, we have
\begin{equation}
\label{eq:EOM-int-apply-GB-chi-kink}
  \int_{C'_{r,T}}\square\vartheta \sqrt{-g} d^{4}x =
  -c_{2} \int_{\pd C'_{r,T}} \mathbf{\Theta}(\mathbf{n})\,.
\end{equation}

We now must prove that in the limit $r\to\infty$, each side of
Eq.~\eqref{eq:EOM-int-apply-GB-chi-kink} exists (i.e.~both integrals
converge).
We start with the asymptotic behavior of the integrand
$\mathbf{\Theta}(\mathbf{n})$. In a stationary, asymptotically flat
spacetime~\cite{Wald:1984cw,Stewart:1990}, in asymptotically Cartesian
coordinates $(t,x,y,z)$, the metric has asymptotic fall-off
\begin{equation}
  \label{eq:g-asymp}
  g_{ab} = \eta_{ab} + \mathcal{O}(r^{-1})\,,
\end{equation}
with $r$ defined in the ordinary Cartesian fashion,
$r^{2}=x^{2}+y^{2}+z^{2}$. It is this function that defines the
region $C'_{r,T}$, and thus, $n_{a}=\cd_{a}r$.  In these same coordinates,
the behavior of $n^{a}{}_{;b}$ is
\begin{equation}
  n^{a}{}_{;b} = r^{-1}(\delta^a_b - n^a n_b) + \mathcal{O}(r^{-2})\,.
\end{equation}
By assumption, we also have the asymptotic fall-off for the components
of the Riemann tensor, in an asymptotically Cartesian coordinate
system, $R_{abcd}\sim \mathcal{O}(r^{-2})$ [true for all index
positions because of Eq.~\eqref{eq:g-asymp}]. In fact,
Eq.~\eqref{eq:g-asymp} implies $R_{abcd}\sim \mathcal{O}(r^{-3})$, but
only the weaker condition $\mathcal{O}(r^{-2})$ is required for our
proof.

Now we can see the leading asymptotic behavior of the integrands on
the right-hand side of Eq.~\eqref{eq:EOM-int-apply-GB-chi-kink}:
\begin{align}
  \label{eq:theta-asympt}
  \Theta_{1}(\mathbf{n}) &\sim \mathcal{O}(r^{-3})\,,\\
  \Theta_{2}(\mathbf{n}) &\sim \mathcal{O}(r^{-3})\,.
\end{align}
Actually, $\Theta_{1}(\mathbf{n})$ decays as $\mathcal{O}(r^{-4})$
following the fall-off of Riemann determined by
Eq.~\eqref{eq:g-asymp}, but again we only need the weaker decay. When
integrated over $\pd C'_{r,T}$, we find that the integral
exists and converges at least as
\begin{align}
  \label{eq:theta-convergence}
  \int_{\pd C'_{r,T}} \mathbf{\Theta}(\mathbf{n}) &\sim \mathcal{O}(r^{-1})\,,
\end{align}
and in the limit as $r \to \infty$, 
\begin{align}
  \label{eq:theta-convergence-lim}
  \lim_{r\to\infty}
  \int_{\pd C'_{r,T}} \mathbf{\Theta}(\mathbf{n}) &= 0\,.
\end{align}

We now turn to the left-hand side of
Eq.~\eqref{eq:EOM-int-apply-GB-chi-kink}, where we can apply the
generalized Stokes theorem to turn the volume integral into a boundary
integral:
\begin{equation}
  \label{eq:stokes-LHS}
  \int_{C'_{r,T}} \square \vartheta \sqrt{-g} d^{4}x =
  \int_{\pd C'_{r,T}} d\Sigma_{a} \, \cd^{a}\vartheta \,,
\end{equation}
where $d\Sigma_{a}$ is the area element on the boundary $\pd
C'_{r,T}$. In asymptotically spherical coordinates, this is given by
\begin{equation}
\label{eq:bndry-vol-el}
d\Sigma_{r} = r^{2} d^{2}\Omega dt\left[1 +
  \mathcal{O}(r^{-1})\right]\,,
\end{equation}
where the standard unit 2-sphere area element is
$d^{2}\Omega=\sin\theta{}d\theta{}d\phi$, and other components of
$d\Sigma_{a}$ are subdominant and vanish in the limit $r\to\infty$.
To show convergence we must study the behavior of asymptotic solutions
to Eq.~\eqref{eq:vartheta-prop-GB}.

The solution for $\vartheta$ will be a combination of homogeneous and
particular solutions,
$\vartheta = \vartheta_{\text{hom}} + \vartheta_{\text{part}}$,
subject to the condition of asymptotic flatness. Let us first consider
the homogeneous solution. In the limit $r\to\infty$,
Eq.~\eqref{eq:vartheta-prop-GB} reduces to the flat-space Laplacian
(from stationarity and asymptotic flatness); thus we know that
\begin{equation}
  \label{eq:vartheta-hom}
  \vartheta_{\text{hom}} \sim \sum_{lm}
  Y_{lm}(\theta,\phi) \left[\frac{a_{lm}}{r^{l+1}}+b_{lm}r^{l}\right]\,,
\end{equation}
for coefficients $a_{lm}$, $b_{lm}$. To satisfy asymptotic flatness as
$r\to\infty$, we must have $b_{lm}=0$ except for $b_{00}$.  The
coefficient $b_{00}$ is determined by boundary conditions (or, in the
case of a shift-symmetric theory, it can be set to any value).  The
integral in Eq.~\eqref{eq:stokes-LHS} is insensitive to $b_{00}$ since
only the derivative $\cd_{a}\vartheta$ enters the integrand.

We now consider the particular solution.
The non-compact source term, $\sRsR$, decays at least as
$\mathcal{O}(r^{-4})$ [in fact as $\mathcal{O}(r^{-6})$ following
Eq.~\eqref{eq:g-asymp}, but again we only need the weaker decay].
Therefore, the slowest-decaying contribution from the particular
solution is at worst
\begin{equation}
  \label{eq:vartheta-part}
  \vartheta_{\text{part}} \sim r^{-2} \quad\text{or}\quad r^{-2}\log r\,.
\end{equation}
This log term arises if there is an $r^{-4}$ component in the $l=1$
term of the spherical harmonic decomposition of the source term,
$\sRsR$.

Now we can show that the left-hand side integral of
Eq.~\eqref{eq:EOM-int-apply-GB-chi-kink} converges.  From asymptotic
flatness and in asymptotically spherical coordinates we have
\begin{equation}
  \label{eq:surf-integ-coord}
  \lim_{r\to \infty}
  \int_{\pd C'_{r,T}} \!\!\!  d\Sigma^{a} \; \cd_{a}\vartheta
  =
  \lim_{r\to \infty}
  \int_{\pd C'_{r,T}} \!\!\! dt \; d^{2}\Omega \; r^{2} \; \pd_{r}\vartheta \,.
\end{equation}
With the far-field asymptotic behavior of $\vartheta_{\text{hom}}$ and $\vartheta_{\text{part}}$ given in
Eqs.~\eqref{eq:vartheta-hom} and \eqref{eq:vartheta-part}, the only
part of $\pd_{r}\vartheta$ that contributes is
\begin{equation}
  \pd_{r}\vartheta =\pd_{r}(\vartheta_{\text{hom}} +
  \vartheta_{\text{part}}) =
  -\frac{a_{00}}{r^{2}} \left[ 1 + \mathcal{O}(r^{-1}) \right]\,.
\end{equation}
We conventionally call $a_{00}=\mu$ the scalar charge [compare for
example with Eq.~\eqref{eq:scalar-charge-def}]. Thus, the
left-hand side converges to
\begin{equation}
  \label{eq:surf-integ-result}
    \lim_{r\to \infty}
  \int_{\pd C'_{r,T}} d\Sigma^{a} \, \cd_{a}\vartheta
  = -4\pi T \mu\,.
\end{equation}

We have now shown that the integrals on both the left- and the
right-hand sides converge in the limit $r\to\infty$.  Inserting the
limits [Eq.~\eqref{eq:theta-convergence-lim} and
Eq.~\eqref{eq:surf-integ-result}] into the volume-integrated equation
of motion [Eq.~\eqref{eq:EOM-int-apply-GB-chi-kink}, after applying the GBC
theorem] yields
\begin{align}
  4\pi T \mu &{}= 0\,.
\end{align}
Thus we have proved that the asymptotically $1/r$,
spherically-symmetric scalar charge $\mu$ must vanish.
\end{proof}

%%%%%%%%%%%%%%%%%%%%%%%%%%%%%%%%%%%
\section{Neutron Star Scalar Charge in Quadratic Gravity}
\label{sec:analytics}

In this section, we derive the NS scalar charge 
in a few quadratic gravity theories. In particular, we focus on theories with
non-derivative interactions but allow for both topological and non-topological
interaction densities (the bottom rectangle in Fig.~\ref{fig:quad-theory-classif}). 
Such theories are defined through the interaction density of Eq.~\eqref{eq:non-deriv},
which with a certain choice for $f_{i}(\vartheta)$, leads to the quadratic action
\begin{align}
S_{q} ={}& \int d^{4}x\sqrt{-g} \  g(\vartheta) \left[\alpha_{1} R^{2} + \alpha_{2} R_{ab} R^{ab}
\right. 
\nn \\
& \left.
\qquad {}+ \alpha_{3} R_{abcd} R^{abcd} + \alpha_{4} \, {}^{*}R_{abcd} R^{abcd} \right]\,,
\label{eq:quad-action-we-focus-on}
\end{align}
where $(\alpha_{1},\alpha_{2},\alpha_{3},\alpha_{4})$ are all constants. 
This particular choice of $f_{i}(\vartheta)$ allows us to recover
the examples discussed in Sec.~\ref{sec:ABC-MQG} through
appropriate choices of $g(\vartheta)$, as shown in Table~\ref{tab:theory-summary}.
Notice also that \DDGBText{} gravity can be recovered from TEdGB gravity by taking
the limit $\gamma \to 0$ while $\gamma \; \alpha_{\TEDGB} \to - \alpha_{\GB} = {\rm{const}}$.

\begin{table}[tb]
\begin{centering}
\begin{tabular}{r|c|c|c|c|c}
\hline
\hline
\noalign{\smallskip}
& $\alpha_{1}$ & $\alpha_{2}$ & $\alpha_{3}$ & $\alpha_{4}$ & $g(\vartheta)$ \\
\hline
Kretsch.~gravity & 0 & 0 & $\alpha_{\K}$ & 0 & $\vartheta$ \\
TEdGB gravity & $\alpha_{\TEDGB}$ & $-4 \, \alpha_{\TEDGB}$ & $\alpha_{\TEDGB}$ & 0 & $e^{-\gamma \vartheta}$  \\
\DDGBText{} gravity & $\alpha_{\GB}$ & $-4 \, \alpha_{\GB}$ & $\alpha_{\GB}$ & 0 & $\vartheta$  \\
\noalign{\smallskip}
\hline
\hline
\end{tabular}
\end{centering}
\caption{%
Parameter mapping for non-derivative, quadratic gravity theories that
we investigate in detail.
}
\label{tab:theory-summary}
\end{table}

The NS scalar charge is obtained by solving the EOM for the scalar field.
The latter follows from Eq.~\eqref{eq:evol-scal-eq}, which with a
vanishing potential ($U=0$) and the non-derivative, quadratic gravity
action of Eq.~\eqref{eq:quad-action-we-focus-on} reduces to $\square \vartheta = S$,
where we have defined the source function
\begin{align} 
S \equiv{}& - \left(\frac{\pd g}{\pd \vartheta}\right) \left(
 \alpha_{1} R^{2} + \alpha_{2} R_{ab} R^{ab} 
\right. 
\nn \\
& \left.
{}+ \alpha_{3} R_{abcd} R^{abcd} + \alpha_{4} \, {}^{*}R_{abcd} R^{abcd}
 \right)\,.
 \label{eq:source-function}
\end{align}
Once we solve the EOM and extract the scalar charge, we will use it in
the next Section to determine the best systems to constrain such
theories, and estimate new constraints when possible.

%Let us further work in the decoupling limit of quadratic gravity. 
%As explained in Sec.~\ref{sec:ABC-MQG}, this limit allows us to
%decompose the metric and scalar field as in Eqs.~\eqref{eq:scalar-decomp}
%and~\eqref{eq:metric-decomp}. Notice that for a NS
%the characteristic size and compactness of the system is
%simply the stellar radius ${\cal{R}} = R_{*}$ and the stellar compactness 
%${\cal{C}} = C := G M_{*}/R_{*}$. Thus, in the decoupling limit one must have 
%$\sqrt{|\alpha_i |} \ll (\kappa/C^3)^{1/4} R_*$ (see Sec.~\ref{sec:EFT}).

We first concentrate on deriving the scalar charge of a non-rotating NS,
and then extend the analysis to a rotating configuration. 
In each case, we will first calculate the scalar charge analytically 
within a weak-field approximation scheme and for certain simple equations of state.
We will then confirm our results numerically in the strong-field regime and
for more complicated equations of state. We will explicitly demonstrate the 
vanishing of the scalar charge in \DDGBText{} gravity, which was proven formally
in the previous section, and also show that the scalar charge does not vanish in
TEdGB gravity or in Kretschmann gravity.

For the purposes of comparison, we note here that the scalar charge
has also been computed for BHs in several quadratic gravity
theories~\cite{1992PhLB..285..199C, Mignemi:1992nt, Kanti:1995vq,
  Yunes:2011we, Yagi:2011xp, Sotiriou:2014pfa, Kleihaus:2014lba}.  In
the decoupling limit of any non-derivative quadratic gravity theory,
for a BH with a mass $M_\BH$ and at leading order in spin, the
dimensionless (mass-reduced) scalar charge is given by
\begin{equation}
\mu_{\BH}^{(0)} = 2 g'(0) \frac{\alpha_3}{M_{\rm BH}^2}\,.
\label{eq:BH-scalar-charge}
\end{equation}
This is non-vanishing for \DDGBText{}, to be compared with the
vanishing of scalar charge for NSs.

%--------------------------------
\subsection{Non-rotating Neutron Stars}
\label{sec:non-rot}

Let us first consider a non-rotating stellar configuration, described by a perfect
fluid matter source that generates a spherically symmetric spacetime.
Given this, the scalar field can only be a function of the radial coordinate, namely
$\vartheta =  \vartheta^{(0)}(r)$. The superscript (0) reminds us that we can think of
$\vartheta^{(0)}$ as the zeroth-order term in a small-spin expansion. The GR
metric is then simply
\begin{align}
ds^{2}_{(0)} ={}& - e^{\nu} dt^{2} + \left(1 - \frac{2 M}{r}\right)^{-1} dr^{2}
\nn \\
&{}+ r^{2} \left(d\theta^{2} + \sin^{2}{\theta} d\varphi^{2}\right)\,,
\end{align}
where $M = M(r)$ is the spherically-symmetric enclosed mass
and $\nu = \nu(r)$ is a metric function that satisfies the Einstein
equations.
The scalar field evolution equation is then
\begin{equation}
\label{eq:scalar-field-eq-int}
\frac{d^2 \vartheta^{(0)}}{d r^2} = - \frac{2 [2 \pi (p-\rho) r^3  +r- M]}{r (r-2 M)} \frac{d \vartheta^{(0)}}{d r}  + S^{(0)}\,,
\end{equation}
where $p(r)$ and $\rho(r)$ are the internal pressure and energy density, and 
$S^{(0)}$ is the function evaluated on $g_{ab}^{(0)}$ [see Eq.~\eqref{eq:source}].
The calculation of the scalar charge requires that we first solve 
Eq.~\eqref{eq:scalar-field-eq-int} inside the star and then match it to 
an exterior solution at the stellar surface to determine any constants of integrations.

Analytic solutions to Eq.~\eqref{eq:scalar-field-eq-int} for the scalar field
do not generically exist in closed-form, but they can be obtained using 
certain approximations, such as a \emph{post-Minkowskian} or \emph{weak-field} 
expansion. In a post-Minkowskian expansion, one expands and solves the equations
in powers of the compactness $C = M_{*}/R_{*} \ll 1$, where $M_{*}$ and $R_{*}$ are the
NS mass and radius.\footnote{%
  Neutron stars are objects with small compactness, typically of
  ${\cal{O}}(10^{-1})$, so a post-Minkowskian expansion is well-justified.}
But even with such an approximation, Eq.~\eqref{eq:scalar-field-eq-int} can
still only be solved for certain particular equations of 
state~\cite{1939PhRv...55..413V,Tolman:1939jz,Tsui:2005hb}. We
focus here on an $n=0$ polytropic equation of
state~\cite{1939PhRv...55..413V} ($p = K \rho^{1 + 1/n}$, where 
$K$ and $n$ are constants with the latter representing the polytropic index)
and a Tolman VII equation of
state~\cite{Tolman:1939jz,Tsui:2005hb} with  
$\rho \propto 1-r^2/R_*^2$.  These 
represent a constant density star and an approximation to more
realistic equations of state respectively. 

Given the above, we compute the scalar charge as follows. First, we substitute
the analytic, GR solutions\footnote{The effect of non-GR corrections 
to the metric on the scalar charge can be neglected to the order we work on, as
explained in App.~\ref{app:nonrot-NS}.} 
to the equations of structure for the metric
tensor and pressure at zeroth-order in rotation into
Eq.~\eqref{eq:scalar-field-eq-int}. We next substitute Eq.~\eqref{eq:scalar-decomp} 
in Eq.~\eqref{eq:scalar-field-eq-int} and expand the equation order by order in 
$\zeta$. The exterior solutions can be obtained by setting $p=0=\rho$, $M=M_*$
and solving the decomposed equations order by order, as given by 
Eqs.~\eqref{eq:0th-Rot-ext-sol} and~\eqref{eq:0th-Rot-ext-sol-2}.
Meanwhile, the interior solutions can be obtained as a further
expansion in powers of $C$. We match these exterior and interior solutions 
order by order in $C$ and $\zeta$ at the stellar
surface using the condition given by Eq.~\eqref{eq:matching}.

Once a solution to Eq.~\eqref{eq:scalar-field-eq-int} has been obtained, we can then read off the 
$1/r$ piece of the external solution and calculate 
\begin{equation}
\label{eq:scalar-asympt-0}
\vartheta^{(0)}_{\ext}(r) = \vartheta_\infty +  \mu^{(0)} \frac{M_*}{r} + \mathcal{O} \left( \frac{M_*^2}{r^2} \right)
\end{equation}
far from the source ($r \gg M_*$). Here, $\mu^{(0)}$ is the dimensionless scalar charge at
zeroth-order in spin and $\vartheta_{\infty}$ is a constant that the scalar field asymptotes 
to at spatial infinity. Recall again that this constant can be set to zero in theories that are
shift symmetric.

Let us now present the scalar charge in TEdGB, \DDGBText{} and Kretschmann gravity.
In order to reveal whether and how the scalar charge vanishes, let us consider the theory
defined by the interaction density of Eq.~\eqref{eq:quad-action-we-focus-on} with 
\begin{equation}
\label{eq:g-example}
g(\vartheta) = e^{-\gamma \vartheta}\,,
\end{equation}
and $\alpha_{4}=0$. We recover 
the theories mentioned above by taking the following limits
\begin{itemize}
\item \emph{TEdGB limit}: $(\alpha_{1},\alpha_{2},\alpha_{3}) \to \alpha_{\TEDGB} (1,-4,1)$.
\item \emph{Kretschmann limit}: $\gamma \to 0$, while $\gamma (\alpha_{1},\alpha_{2},\alpha_{3} )\to - \alpha_{\K} (0,0,1)$ for a constant $\alpha_{\K}$.
\item \emph{\DDGBText{} limit}: $\gamma \to 0$, while $\gamma (\alpha_{1},\alpha_{2},\alpha_{3}) \to - \alpha_{\GB} (1,-4,1)$ for a constant $\alpha_{\GB}$. 
\end{itemize}
With this in mind, the scalar charge for a NS with an $n=0$ polytropic equation of state is
\begin{align}
\label{eq:mu-PM-n0}
\mu^{(0)}_{n=0} ={}& - 12 \; {\gamma} \; e^{-\gamma \vartheta_{\infty}} \; \frac{C}{R_*^2}  \left( \alpha_{\GB,1} - {\textstyle \frac{3}{5}} \alpha_{\GB,2}  C  - {\textstyle \frac{18}{35}} \alpha_{\GB,2}  C^2  \right) \nn \\
& {}-{\textstyle \frac{96}{35}} \; {\gamma}^3 \; e^{-2 \gamma \vartheta_{\infty}} \; \frac{C^{3}}{R_*^4} \left[ 32 \alpha_3^2 + 63 \alpha_{\GB,2} ( \alpha_{\GB,2}-\alpha_3)  \right] \nn \\
& {}+ \mathcal{O} \left(\zeta^{3/2}, C^4 \right)\,,
\end{align}
while for a Tolman VII equation of state we find
\begin{align}
\label{eq:mu0-Tol}
\mu^{(0)}_\mrm{Tol} ={}& - {\textstyle\frac{120}{7}} \; {\gamma} \; e^{-\gamma \vartheta_{\infty}} \; \frac{C}{R_*^2} \left[ \alpha_{\GB,2}
- {\textstyle\frac{1}{22}} \left( 19 \alpha_{\GB,1}+ \alpha_{\GB,2} \right) C \right. \nn \\
&  \quad\left. {} - {\textstyle\frac{1}{2145}} \left( 1583 \alpha_{\GB,1} + 177 \alpha_{\GB,2} \right) C^2 \right] \nn \\
& {}-{\textstyle\frac{320}{1001}} \; {\gamma}^3 \;e^{-2 \gamma \vartheta_{\infty}} \; \frac{C^{3}}{R_*^4} \left[ 1008 \alpha_3^2 + 5 \alpha_{\GB,2} \left( 300 \alpha_{\GB,2} \right. \right. \nn \\
& \quad\left. \left. {}- 373 \alpha_3\right)  \right] + \mathcal{O} \left(\zeta^{3/2},  C^4 \right)\,,
\end{align}
where we have here defined
\begin{equation}
\label{eq:alpha_GB}
\alpha_{\GB,1} \equiv 3 \alpha_1 + \alpha_2 + \alpha_3\,, \quad \alpha_{\GB,2} \equiv  \alpha_1 + \alpha_2 + 3 \alpha_3\,.
\end{equation}
Both combinations $\alpha_{\GB,1},\alpha_{\GB,2}$ vanish for the
Gauss-Bonnet ratio  $(\alpha_{1},\alpha_{2},\alpha_{3})\propto (1,-4,1)$.
Note that the terms proportional to $R_*^{-2}$ and $R_*^{-4}$ 
in Eqs.~\eqref{eq:mu-PM-n0} and~\eqref{eq:mu0-Tol} are of $\mathcal{O}(\zeta^{1/2})$ and 
$\mathcal{O}(\zeta)$ respectively.

Let us now take the aforementioned limits to investigate the scalar charge
in TEdGB, \DDGBText{} and Kretschmann gravity. In the Kretschmann limit,
Eqs.~\eqref{eq:mu-PM-n0} and~\eqref{eq:mu0-Tol} reduce to
\begin{align}
\mu^{(0),\K}_{n=0} &= 12 \alpha_\K  \frac{C}{R_*^2} \left( 1- \frac{9}{5} C  - \frac{54}{35} C^2  \right) + \mathcal{O} \left(\zeta^{3/2}, C^4 \right)\,, \\
\label{eq:mu-K}
\mu^{(0),\K}_\mrm{Tol} &= \frac{360}{7} \alpha_\K  \frac{C}{R_*^2} \left( 1 - \frac{1}{3} C - \frac{2114}{6435} C^2 \right) 
+ \mathcal{O} \left(\zeta^{3/2},  C^4 \right)\,.
\end{align}
Observe that the $\mathcal{O}(\zeta)$ contribution vanishes in
Kretschmann gravity because the scalar field is linearly coupled
to the Kretschmann density in the quadratic action.
In fact, the $\mathcal{O}(\zeta)$ part of the scalar charge vanishes generically
for any quadratic gravity where the scalar field is coupled \emph{linearly}
to a quadratic curvature scalar, i.e.~for any non-derivative quadratic gravity
theory to leading-order in the decoupling limit.

Next, let us investigate the TEdGB limit. Since in this limit $\alpha_{\GB,1,2} \to 0$,
the scalar charge becomes
\begin{align}
\label{eq:charge-EDGB-n0}
\mu^{(0),\TEDGB}_{n=0} &=  -\frac{3072}{35} \; {\gamma}^{3} \; e^{-2\gamma \vartheta_{\infty}} \; \alpha_{\TEDGB}^2 \; \frac{C^3}{R_*^4}  + \mathcal{O} \left(\zeta^{3/2},  C^4 \right)\,, \nn \\
\\
\label{eq:charge-EDGB-Tol}
\mu^{(0),\TEDGB}_{\mrm{Tol}} &=  -\frac{46080}{143} {\gamma}^{3} \; e^{-2\gamma \vartheta_{\infty}} \; \alpha_{\TEDGB}^2 \; \frac{C^{3}}{R_*^4} 
+ \mathcal{O} \left(\zeta^{3/2},  C^4 \right)\,. \nn \\
\end{align}
Notice that the $\mathcal{O}(\zeta^{1/2})$ contribution vanishes in this case and the leading
order contribution is of $\mathcal{O}(\zeta)$. 
Such a modification would be of \emph{second-order} in
the coupling constants, and thus, a consistent treatment would require
inclusion of terms of the same order in the non-minimal curvature
coupling at the level of the action---we ignore such terms here,
although in principle they could cancel the above result.
Notice also that the leading-order terms are proportional to $C^3$, whereas those in 
Eqs.~\eqref{eq:mu-PM-n0} and~\eqref{eq:mu0-Tol} are proportional to $C$. 
Therefore, the scalar charges in TEdGB gravity are quadratically suppressed by
the stellar compactness. We have checked that these analytic results match 
the purely numerical calculation of~\cite{pani-NS-EDGB}.

Finally, let us now consider the \DDGBText{} limit. We can do this by simply taking 
the ${\gamma} \to 0$ limit of Eqs.~\eqref{eq:charge-EDGB-n0}
and~\eqref{eq:charge-EDGB-Tol} while $\gamma\alpha_{\TEDGB}$ remains
finite.
Doing so, one finds that the scalar charge vanishes identically to $\mathcal{O}(\zeta)$.
From the $\mathcal{O}(\zeta^{1/2})$ contribution in 
Eqs.~\eqref{eq:mu-PM-n0} and~\eqref{eq:mu0-Tol}, one sees that
the Gauss-Bonnet combination is the only one 
that can make the scalar charge vanish.
Moreover, the charge depends on different combinations of
$\alpha_{\GB,1,2}$ at different orders in compactness, but in all
cases, the charge vanishes linearly with $\alpha_{\GB,1}$ and
$\alpha_{\GB,2}$ in the Gauss-Bonnet limit.

Given the result of Sec.~\ref{sec:Hair-Loss-Theorem}, we know that
this vanishing must hold to all orders in compactness.  We can further
support this expectation with another explicit analytic example,
without imposing a post-Minkowskian expansion. Determining this in closed-form is
difficult in general, but doable for strongly \emph{anisotropic} NSs with an
$n=0$ polytropic equation of state. Let us then consider anisotropic NSs 
following~\cite{1974ApJ...188..657B} as a \emph{toy model}, which allows
for solutions to the equations of stellar structure analytically without any approximations.
We consider the strongly anisotropic limit, in which the radial pressure vanishes,
so that calculations are analytically tractable.
Working in \DDGBText{} gravity, 
we solve the scalar field equation in the interior region analytically
without a post-Minkowskian expansion, then
match the solution to the exterior one at the surface and find that 
the scalar charge vanishes exactly (see App.~\ref{app:ani} for 
more detailed calculations).

\begin{figure}[tb]
\includegraphics[width=8.5cm,clip=true]{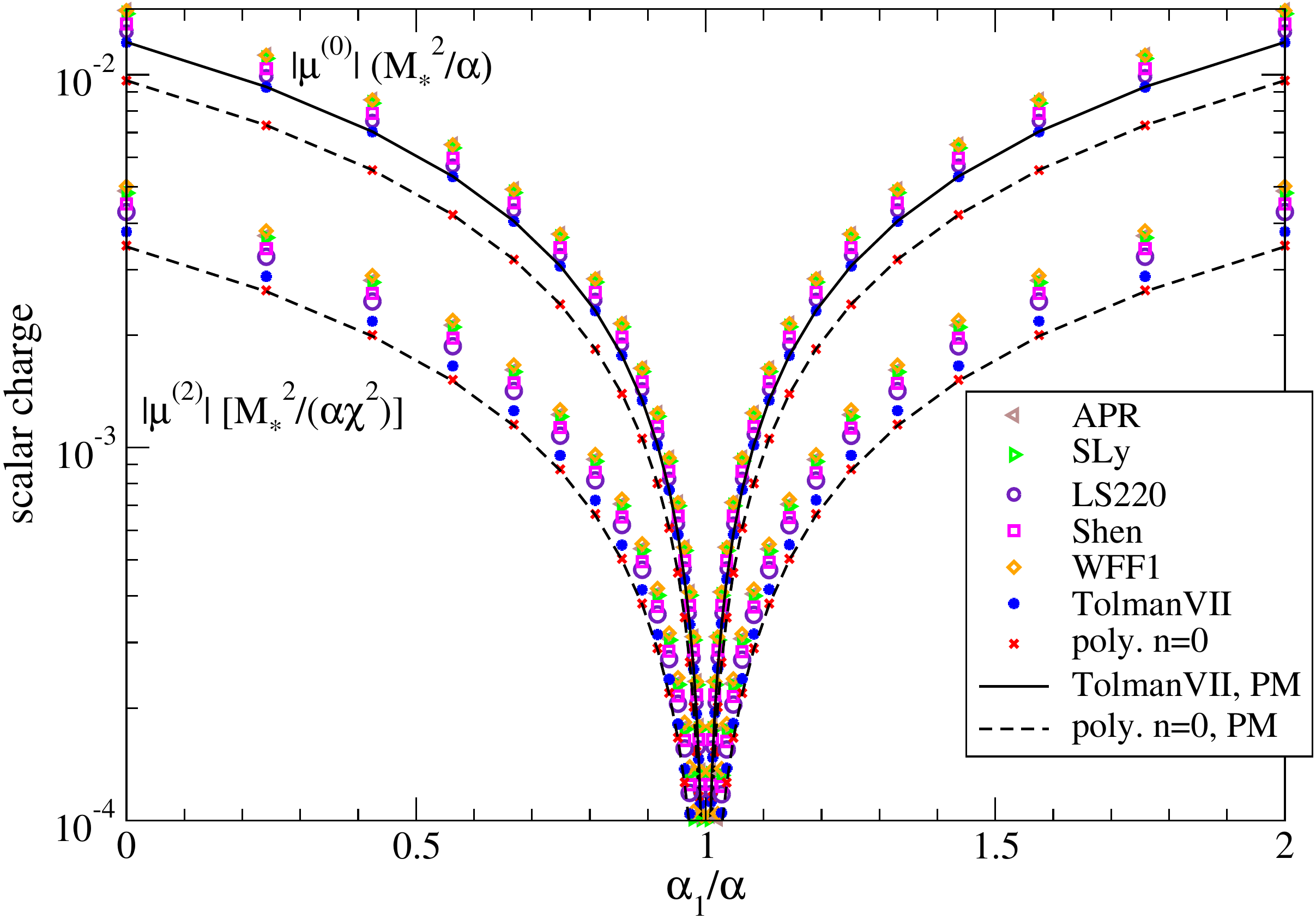}
\caption{\label{fig:scalar-charge-vanish} (Color online) Scalar
  monopole charge in non-derivative, quadratic gravity 
  in the decoupling limit for various equations of state at zeroth-
  ($\mu^{(0)}$) and second-order ($\mu^{(2)}$) in rotation as a
  function of $\alpha_1/\alpha$ with $\alpha$ an arbitrary
  constant. For this example, we set $C=0.1$, $\alpha_2 = -4 \alpha$
  and $\alpha_3 = \alpha$. For the TolmanVII stars and the $n=0$ polytropes, we
  further set $R_*=12$ km. Observe that the charges approach zero
  rapidly as one approaches the Gauss-Bonnet limit
  ($\alpha_1/\alpha \to 1$). Solid and dashed curves represent the
  analytic relation for the TolmanVII models and the $n=0$ polytropes
  within the post-Minkowskian approximation.}
\end{figure}

We can also show that the scalar charge vanishes in \DDGBText{} 
gravity for more general equations of state and isotropic matter, provided
we carry out a numerical analysis. Details of the numerical algorithm are explained
in App.~\ref{app:num}. As an example, let us consider the non-derivative, quadratic 
gravity model of Eq.~\eqref{eq:quad-action-we-focus-on} in the decoupling limit,
i.e.~with $g(\vartheta) = \vartheta$. Such a model contains both \DDGBText{} gravity 
and Kretschmann gravity, as one can see in Table~\ref{tab:theory-summary}. 
Figure~\ref{fig:scalar-charge-vanish} shows the scalar charge for
NSs with $\alpha_2 = -4 \alpha$, $\alpha_3 = \alpha$
and $C=0.1$. Observe how the charge vanishes as
the Gauss-Bonnet limit ($\alpha_{1}/\alpha \to 1$) is
approached for all equations of state considered. 
For comparison, we also include the scalar charge computed
for an $n=0$ polytrope and a Tolman VII equation of state numerically
(red crosses and blue circles) and in a post-Minkowskian expansion
(dashed and solid curves). Observe how good of an approximation the
post-Minkowskian expansion is relative to the numerical solutions.

% \lcs{Following is overstated because we only say something about GB,
%   not Pont}
% Finally, let us consider a generic topological quadratic gravity theory 
% that is not necessarily shift-invariant. We can model this through the quadratic 
% action of Eq.~\eqref{eq:S-q-topol}, where we will Taylor expand the coupling function
% about $\vartheta = 0$: $f(\vartheta) = f(0) + f'(0) \vartheta + f''(0) \vartheta^2/2 + {\cal{O}}(\vartheta^{3})$. 
% The first and second terms do not contribute to the scalar charge because the former does not contribute
% to the scalar field equation and because of the theorem in Sec.~\ref{sec:Hair-Loss-Theorem}. 
% On the other hand, the third term produces a scalar charge 
% similar to that presented in Eqs.~\eqref{eq:charge-EDGB-n0} and~\eqref{eq:charge-EDGB-Tol}.
% Therefore, one concludes that topological quadratic gravity excites
% a scalar charge at $\mathcal{O}(\zeta)$ if and only if
% $f''(0) \neq 0$.

%--------------------------------
\subsection{Slowly-rotating Neutron Stars}
\label{sec:slow-rot}

The theorem in Sec.~\ref{sec:Hair-Loss-Theorem} is not only more rigorous than that
presented in Ref.~\cite{Yagi:2011xp}, but it can also be applied to a rotating NS.
We here explicitly demonstrate that the scalar charge vanishes in \DDGBText{} 
gravity even for a slowly-rotating configuration both analytically and numerically. To do so, 
we consider a quadratic gravity theory with the quadratic action of Eq.~\eqref{eq:quad-action-we-focus-on}
but with a linear coupling function $g(\vartheta) = \vartheta$. This will allow us to investigate how the scalar 
charge vanishes in the \DDGBText{} limit. 

We work in a slow-rotation expansion of 
Hartle and Thorne~\cite{hartle1967,Hartle:1968ht}
to quadratic order in spin,
i.e.~a systematic expansion in $J/M_{*}^{2} \ll 1$,
where $J \equiv|\vec{J}|$ is the magnitude of the spin angular
momentum. Physically, we assume that $M_{*} \Omega \ll 1$, or equivalently $M_{*}/P \ll 1$,
where $\Omega$ and $P$ are the spin angular frequency and spin period
of the star respectively.  Such an assumption is well-justified for
all observed pulsars, even those with millisecond periods for which
$M_{*} \Omega = {\cal{O}}(10^{-2})$.

Because non-rotating stars are spherically symmetric,
deformations due to rotation can be modeled through a spherical
harmonic decomposition. The scalar field can then be decomposed
as
\begin{align}
\vartheta(r,\theta) \! &= \! \vartheta^{(0)}(r)  + \sum_{l = 0,2} \!\! \vartheta^{(2)}_{l} (r) P_l (\cos \theta)
+ {\cal{O}}[(M_{*} \Omega)^{4}]\,,
\end{align}
where $r$ and $\theta$ are radial\footnote{Technically, this radial coordinate has been transformed
from the standard radial coordinate of a non-rotating configuration, following the procedure laid out
by Hartle and Thorne~\cite{hartle1967,Hartle:1968ht}.} and polar coordinates
respectively, $P_{l}(\cdot)$ are Legendre polynomials, and
$\vartheta^{(2)}_{\ell} = {\cal{O}}[(M_{*} \Omega)^{2}]$.
%For example, $\vartheta_{0}^{(0)} = {\cal{O}}[(M_{*} \Omega)^{0}]$ is the
%spherically symmetric part of the scalar field, while
%$\vartheta_{l}^{(2)} = {\cal{O}}[(M_{*}\Omega)^{2}]$ is the
%$l$th axisymmetric deformation. 
As expected, there is no azimuthal
angle dependence, because rotating stars remain axisymmetric when in
slow rotation.

The scalar charge is the piece of the scalar field that decays as
$1/r$ at spatial infinity and is independent of $\theta$, and thus, 
we must solve for $\vartheta^{(0)}$ and $\vartheta^{(2)}_{0}$. 
The former leads to the scalar charge in spherical symmetry, 
which we already considered in the previous subsection, so we
here concentrate on the spin-dependent correction to the scalar charge
found in $\vartheta^{(2)}_{0}$.
One can define the dimensionless scalar charge $\mu^{(2)}$ 
at quadratic order in spin from the asymptotic behavior of $\vartheta^{(2)}_{0}$ 
at spatial infinity in the same way as in Eq.~\eqref{eq:scalar-asympt-0}:
\begin{equation}
\label{eq:scalar-asympt-2}
\vartheta^{(2)}_{0,\ext}(r) =  \mu^{(2)} \frac{M_*}{r} + \mathcal{O} \left( \frac{M_*^2}{r^2} \right)\,.
\end{equation}
Here, we set $\vartheta^{(2)}_{0,\ext}(\infty) = 0$ without loss of generality
by absorbing it into $\vartheta_\infty$.  As mentioned earlier, we
uniquely specify $\vartheta_{\infty}$ to be a constant
(which we will set to 0 in Sec.~\ref{sec:non-shift-sym-topo})
to all orders in rotation in
order to fix the freedom of simultaneous redefinition of
$\alpha_{\TEDGB}$ and $\vartheta$; fixing this freedom is required to
discuss limits on $\alpha_{\TEDGB}$.
The field equation for $\vartheta^{(2)}_{0}$ has the same form as 
Eq.~\eqref{eq:scalar-field-eq-int}, except that $S^{(0)}$ needs to be replaced by $S^{(2)}$,
whose explicit form with a linear coupling function is given in Eq.~\eqref{eq:source-2}.
As in the non-rotating case, we work in the small coupling approximation by 
decomposing $\vartheta^{(2)}_{l}$ in terms of $\zeta^{1/2}$ and solving the
decomposed field equation order by order. 
The exterior solution for the scalar field at second order in spin and 
to leading order in $\zeta^{1/2}$
%$\vartheta^{(2,1/2)}_0$ 
is given by
Eq.~\eqref{eq:vartheta20-ext}.

We now derive $\mu^{(2)}$ analytically within the post-Minkowskian approximation.
As in the non-rotating case, we expand the scalar field equation
at second order in spin
%for $\vartheta^{(2,1/2)}_0$
about $C=0$ and solve it order by order in $C$ in the interior region. We then match this solution
to the exterior solution expanded in $C \ll 1$, using the conditions in 
Eqs.~\eqref{eq:matching201} and~\eqref{eq:matching202}.
With an $n=0$ polytropic equation of state, we find
\begin{align}
\label{eq:scalar-charge-quadratic-n0}
\mu^{(2)}_{n=0} ={}& 12 \Omega^2  \left[ \alpha_{\GB,1} - \frac{1}{20} \left( 12 \alpha_{\GB,1} + 61 \alpha_{\GB,2} \right) C \right. \nn \\
&  \left. {}+ \frac{3}{7} \left( \frac{13}{4} \alpha_{\GB,1} - \frac{9}{25} \alpha_{\GB,2} \right) C^2 \right]  + \mathcal{O} \left(\zeta^{3/2}, C^3 \right)\,.
\end{align}
Notice that $\mu^{(2)}_{n=0}$ vanishes to $\mathcal{O}(\zeta)$ in the \DDGBText{} limit, i.e.~when $\alpha_{\GB,1,2} \to 0$, 
in agreement with Sec.~\ref{sec:Hair-Loss-Theorem}.
We do not present $\mu^{(2)}$ with a TolmanVII equation of state 
because then we can only solve the equations of structure analytically at
zeroth order in rotation.

We next carry out a numerical calculation without imposing the 
post-Minkowskian approximation and for a variety of realistic equations of state.
We use the same numerical algorithm presented in App.~\ref{app:num}. The results
of this numerical investigation for non-derivative, quadratic gravity
in the decoupling limit are presented in Fig.~\ref{fig:scalar-charge-vanish}. 
Observe how $\mu^{(2)}$ approaches zero as one approaches the \DDGBText{} limit, just like 
$\mu^{(0)}$ does. Observe also that the numerical results for an $n=0$ polytrope 
agree very well with the analytic ones in Eq.~\eqref{eq:scalar-charge-quadratic-n0}.

%%%%%%%%%%%%%%%%%%%%%%%%%%%%%%%%%%%
\section{Current and Future Constraints}
\label{sec:bin-puls-cons}

In this section, we study what estimated and projected constraints can
be placed on some of the quadratic gravity theories discussed in
Sec.~\ref{sec:ABC-MQG}, using binary pulsar observations and
GW observations.
We begin in Sec.~\ref{sec:dipole-rad-bin-gw} with a discussion of how
a non-vanishing scalar charge leads to the emission
of dipolar radiation, which affects the orbital period decay of binary systems and the gravitational
waves they emit. 

Consider a binary system emitting dipolar radiation at the orbital
period.  If the system is observed via pulsar timing, the leading
deviation from GR is the correction to the post-Keplerian parameter
$\dot{P}_{b}$, the binary's period derivative.  This correction enters
at $-1$PN relative to the GR effect.
Meanwhile if the system is observed through GWs, the deformation from
the GR GW prediction can be captured via the $\beta_{\ppE}$
parameter of the parameterized post-Einstein (ppE)
framework~\cite{PPE}.  Thus the two observables we seek to compute in
this Section are the correction to the change in the binary period,
$\dot{P}_{b}$, and the ppE parameter $\beta_{\ppE}$.
These quantities can be used to project estimated constraints on the
theories of interest.  

We will consider three example theories.  In
Sec.~\ref{sec:DDGB-cons}, we focus on \DDGBText{} gravity as an
example of shift-symmetric, topological quadratic gravity.  We then
study two theories which do not satisfy the conditions of the miracle
hair loss theorem.  In Sec.~\ref{sec:non-shift-sym-topo} we focus on
non-shift-symmetric but topological theories, with TEdGB as an
example.  Finally in Sec.~\ref{sec:non-shift-sym-non-topo} we focus on
non-shift-symmetric and non-topological theories, with Kretschmann
gravity as an example.

For each theory, we will consider the two cases of binaries without
BHs, and binaries with at least one BH.  The latter is required for
\DDGBText{}, which falls under the purview of the theorem, since
otherwise dipolar radiation is strongly suppressed.
Since both TEdGB and Kretschmann predict that NSs source a
non-vanishing scalar charge, binary systems without BHs
suffice to stringently constrain these theories (though NS/BH binaries
still produce the most stringent projected constraint).
The best estimated and projected constraints for each of these theories 
are summarized in Table~\ref{tab:summary}. 
  
%------------------------------------------------------------------------------------------
\subsection{Dipole Radiation, Binary Pulsars, and Gravitational Waves}
\label{sec:dipole-rad-bin-gw}

Consider a binary system composed of two compact objects (either BHs, 
NSs, or WDs) with masses $m_{1}$ and $m_{2}$ and
observed, for example, through radio pulsar timing or through future GW detectors.
A dynamical scalar field will induce a plethora of corrections to the dynamics of the binary, 
but, typically, the most important of these is due to the energy flux the field carries away as it evolves.
As calculated in e.g.~\cite{Will:1989sk,will-living,Yagi:2011xp}, this flux is
\begin{equation}
\label{eq:Edot-scalar}
\dot E^{(\vartheta)} = - \frac{4 \pi}{3}\eta^2 \left(\mu^{(0)}_{1} - \mu^{(0)}_{2} \right)^2 (v_{12})^8\,,
\end{equation}
where $\eta \equiv m_1 m_2 / m^2$ is the symmetric mass ratio, 
$m \equiv m_1 + m_2$ is the total mass, $\mu^{(0)}_{1,2}$ is the scalar charge 
of each compact object (we drop higher-spin corrections since NSs spin only
slowly as already mentioned in Sec.~\ref{sec:slow-rot}), and $v_{12}$
is the magnitude of the binary's relative orbital velocity.  Observe
that $\dot{E}^{(\vartheta)}$ is a $-1$PN order correction\footnote{%
  Henceforth, a term proportional to $(m/r_{12})^{A}$ relative to its
  leading-order expression will be said to be of $A$PN order,
  with $r_{12}$ the binary's separation. 
  By the virial theorem, $v_{12}^{2}$ is of the same order as
  $m/r_{12}$, and thus, a term proportional to $(v_{12})^{2N}$
  relative to some other term will be said to be of $N$PN
  order.} 
to the GW energy flux in GR $\dot E_\GR = -(32/5) \eta^2 (v_{12})^{10}$. 

For binary pulsars, the most important effect of the scalar energy flux  
is a modification to the rate of orbital period decay $\dot P_b$, which has already been
stringently constrained~\cite{Freire:2012mg,Wex:2014nva}. The orbital period
decay is also modified due to corrections to the binding energy $E_{b}$, but
these are subdominant in a PN sense. The rate of decay of the orbital
period $P_{b}$, at leading PN order, can be written as
\begin{equation}
\label{eq:Pbdot}
\dot{P}_{b} = \frac{3}{2} \frac{P_{b}}{E_{b}} \left( \dot{E}_{\GR} +
\dot{E}^{(\vartheta)} \right)\,.
\end{equation}
We consider a binary pulsar whose period derivative is observed to be
consistent with the prediction of GR, with an observational
uncertainty given by $\sigma_{\dot{P}_{b}}$.  From this observation,
one could infer that the fractional correction due to
$\dot{E}^{(\vartheta)}$ must be smaller than the fractional
uncertainty in the measurement and we can
estimate
\begin{equation}
\label{eq:Edot-Pdot}
\left| \frac{\dot E^{(\vartheta)}}{\dot E_\GR} \right| \lesssim
\left| \frac{\sigma_{\dot{P}_{b}}}{\dot P_{b}} \right| \,.
\end{equation}
Combining  Eqs.~\eqref{eq:Edot-scalar} and~\eqref{eq:Edot-Pdot}, one
then finds that a binary pulsar observation consistent with GR places 
a constraint on the scalar charges of roughly 
\begin{equation}
\label{eq:bound}
\left|\mu_{1}^{(0)} - \mu_{2}^{(0)}\right| \lesssim
\left| \frac{24}{5 \pi} \frac{\sigma_{\dot{P}_{b}}}{\dot P_{b}}\right|^{1/2}
|v_{12}|\,.
\end{equation}
Such a bound can be converted into a constraint on $\sqrt{|\alpha_X|}$ once we substitute
the explicit forms of the scalar charge on the left-hand side of Eq.~\eqref{eq:bound}. 
Observe that if the binary components have comparable compactnesses (and thus comparable scalar charges), as is the case for NS/NS
binary pulsars, then the constraint is weakened. Thus, when a scalar charge is present and a scalar energy
flux is sourced, the best systems to constrain such modifications are mixed binaries. 

The most important effect of the scalar energy flux in the GWs emitted
by binary systems is a modification in the Fourier phase of the waves. This correction
can be well-described in the parameterized post-Einsteinian framework of~\cite{PPE}. 
Performing a stationary phase approximation analysis,
the GW phase correction to GR in the Fourier domain is of $-1$PN relative order,
with magnitude 
\begin{equation}
\beta_{\ppE} = -\frac{5 \pi}{1792} \left(\mu_{1}^{(0)} - \mu_{2}^{(0)} \right)^{2} \eta^{2/5}\,.
\end{equation}
Consider a future GW measurement that is consistent with GR, meaning that
$\beta_{\ppE}$ is consistent with zero with an uncertainty given by
$\sigma_{\beta_{\ppE}}$.  From such a measurement, we would estimate
the projected constraint
\begin{equation}
\label{eq:bound-GW}
\left|\mu_{1}^{(0)} - \mu_{2}^{(0)}\right| \lesssim \left(\frac{1792}{5 \pi} \sigma_{\beta_{\ppE}} \right)^{1/2} \eta^{-1/5}\,.
\end{equation}
This bound can be converted into a projected constraint on $\sqrt{|\alpha_X|}$, just as for binary pulsars, 
by substituting in the explicit forms of the scalar charge on the left-hand side. 

When dipole radiation is present in a given theory, binary pulsar observations are better than GW observations
to constrain that theory~\cite{cornish-PPE}. This is because dipole radiation enters at a pre-Newtonian order 
relative to GW predictions in GR, which enter through quadrupole radiation at ``Newtonian'' order. This means 
that dipole radiation affects observables at ${\cal{O}}(c^{2}/v_{12}^{2})$ relative to the GR expectation. 
The orbital period of binary pulsars is much larger than that of GW sources of ground-based interferometers, 
and thus, $v_{12}$ is much smaller, which makes the effect of dipole radiation much larger for binary pulsars.  

%--------------------------------------------
\subsection{Shift-Symmetric, Non-Derivative, Topological Quadratic Gravity: \DDGBText{} Example}
\label{sec:DDGB-cons}

Let us take \DDGBText{} as an example of quadratic gravity theories with a shift symmetric, non-derivative,
topological interaction density. As discussed in Sec.~\ref{sec:Hair-Loss-Theorem}, NSs
will not source scalar charge in this theory, but BHs will. Let us then separate the discussion
of constraints into those that come from binaries where at least one of the components is a BH
and those where neither component is a BH. 

%----------------------------------------------------------------
\subsubsection{Binaries without black holes}

Since scalar charge is not sourced in this case, the main modifications to the evolution
of the binary is due to higher-order multipole scalar hair and metric deformations induced by
modifications to the multipole moments. The latter dominates the former as sketched 
in App.~\ref{app:shift-sym-TQG-corr} and
arise through corrections to the moment of inertia and to the quadrupole moment 
of each individual star. These modifications 
induce corrections in the rate of decay of the orbital period and to the GW signal, 
but they are of 1.5 PN order or higher relative to the leading-order term in GR. 
Therefore, current binary pulsar observations of $\dot{P}_{b}$ cannot place meaningful 
constraints on \DDGBText{} gravity.  

One could, in principle, use other binary pulsar observables to constrain
the theory, such as the rate of change of the pericenter $\left<\dot{w}\right>$. This quantity
would be corrected at 0.5PN order relative to GR due to modifications to 
the star's moment of inertia and at 1PN order due to quadrupole moment deformations.  
The moment of inertia, however, is very hard to measure~\cite{lattimer-schutz}, 
and expected errors will be too large to allow for meaningful 
constraints~\cite{pani-NS-EDGB}. Moreover, 1PN corrections to 
perihelion precession from quadrupole moment deformations would lead to constraints
that are outside the regime of validity of the decoupling limit, as
is also the case in dynamical Chern-Simons gravity~\cite{Yagi:2013mbt}. 

One could in principle constrain these higher PN effects with GW
observations produced in the inspiral of NS binaries. We cannot construct
an estimate of these here because the precise form of such corrections is not yet
known. 
%As we discuss in Sec.~\ref{sec:conclusions}, this could be the topic of future work. 

%----------------------------------------------------------------
\subsubsection{Binaries with at least one black hole}
\label{sec:DDGB-BHbinaries}

Black holes source a scalar charge, so the dynamics of mixed BH/NS systems
is strongly corrected. Therefore, the best constraints on quadratic gravity with shift-symmetric, non-derivative,
and topological interactions will come from future observations of binary systems 
where at least one of the binary components is a black 
hole.\footnote{%
  In dynamical Chern-Simons gravity, BHs do not possess a
  scalar charge, but it is still true that the best constraints come
  from BH binaries.  This is because the leading correction to
  the binary evolution in this theory enters at 2PN order and is
  proportional to spin
  squared~\cite{Yagi:2011xp,kent-CSGW,Yagi:2013mbt}.  Since black
  holes spin much faster than NSs in general, BH
  binaries will place constraints that are stronger than NS
  binaries~\cite{kent-CSGW,Yagi:2013mbt}.}
This could be achieved, for example,
through future observations of yet unobserved BH/NS
pulsar binaries with radio telescopes~\cite{Liu:2011ae,Liu:2014uka} or
through future observations of the GWs emitted by
BH/BH or BH/NS
binaries~\cite{Yagi:2011xp,kent-LMXB}.

We can estimate the magnitude of the constraints one could achieve on 
\DDGBText{} gravity, given a future BH/NS pulsar observation with 
the \emph{Five-Hundred-Meter Aperture Spherical Radio Telescope} (FAST)~\cite{Nan:2011um} 
or with the \emph{Square-Kilometer Array} (SKA)~\cite{Carilli:2004nx}. 
Let us then consider a binary with masses of $m_\BH = 10M_\odot$ and $m_\NS = 1.4M_\odot$
in a circular orbit. Substituting the scalar charge of a non-rotating BH, given by
Eq.~\eqref{eq:BH-scalar-charge}~\cite{1992PhLB..285..199C,Mignemi:1992nt,
Yunes:2011we,Yagi:2011xp,Sotiriou:2014pfa} to leading order in the coupling constant,
and Kepler's law $v_{12} = (2\pi m/P_b)^{1/3}$ in Eq.~\eqref{eq:bound}, one obtains
the projected bound 
\begin{align}
\label{eq:bound-EdGB-decouple}
\left(\sqrt{|\alpha_\GB|}\right)^{\rm{Bin.Pul.}}_{\rm{BHNS}} \lesssim{} & 0.12 \, \mrm{km} \left( {\textstyle \frac{m_\BH}{10M_\odot}} \right) \left( {\textstyle \frac{m}{11.4M_\odot}} \right)^{1/6} \nn \\
& \times \left( {\textstyle \frac{\sigma_{\dot{P}_b}/\dot{P}_b}{10^{-2}}} \right)^{1/4}
\left( {\textstyle \frac{P_b}{3 \, \mrm{days}}} \right)^{-1/6}\,.
\end{align}

The contours of Fig.~\ref{fig:constraints-BH-NS} show estimates of the upper bound on 
$\sqrt{|\alpha_\GB|}$ [km] from Eq.~\eqref{eq:bound-EdGB-decouple}, as
a function of $|\sigma_{\dot P_b}/\dot P_b|$ and $P_b$.
Reference~\cite{Liu:2014uka} estimated the
accuracy to which FAST and SKA, as well as a 100-meter radio
dish for reference, would be able to measure the orbital period decay
as a function of the orbital period~\cite{Liu:2014uka}.
Figure~\ref{fig:constraints-BH-NS} shows these estimates with solid,
dashed or dotted curves, assuming 
a 3 or a 5 year
observation. 
Given
an observation of the orbital decay rate consistent with GR 
and with period $P_{b,{\rm obs}}$, one obtains a point in this
figure that must lie on one of the observation curves (either the 100
meter, FAST or SKA curves), which would then place an estimated constraint 
on  $\sqrt{|\alpha_\GB|}$ shown by the contours.

\begin{figure}[tb]
\begin{center}
\includegraphics[width=8.5cm,clip=true]{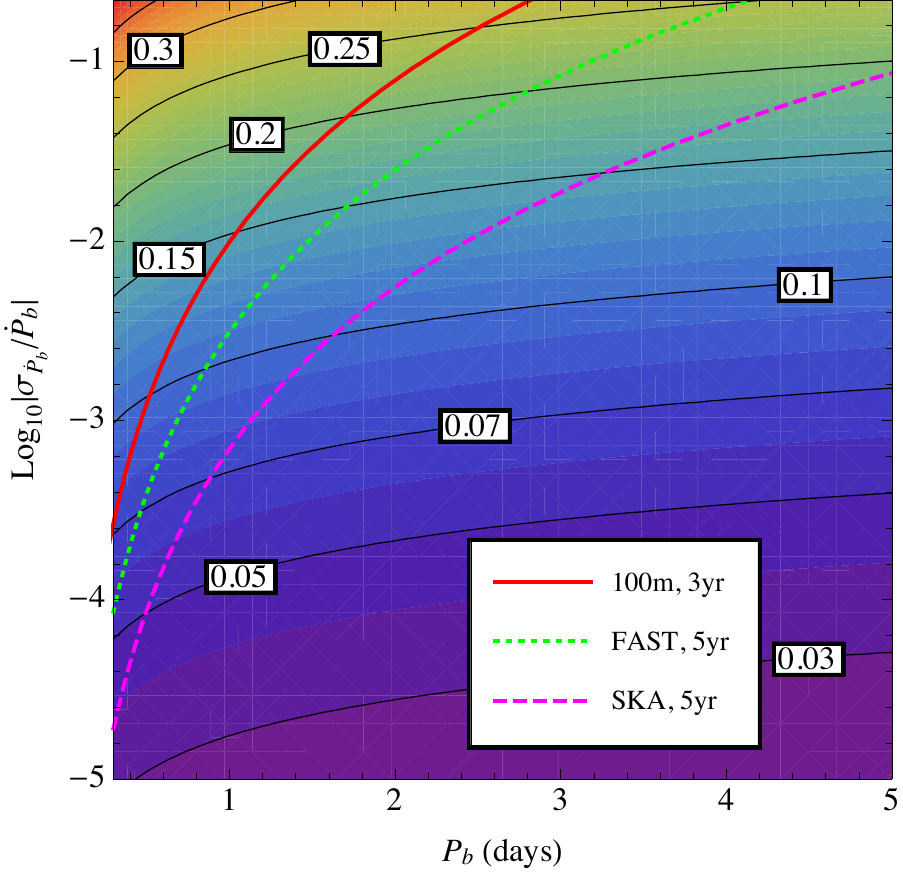}
\caption{\label{fig:constraints-BH-NS} (Color online) 
Upper bound on the theory coupling parameter
$|\alpha_{\GB}|^{1/2}$ in \DDGBText{}
gravity in units of kilometers 
as a function of orbital period and the
measurement accuracies of the orbital period decay rate of a BH/NS
binary pulsar
[Eq.~\eqref{eq:bound-EdGB-decouple}]. We also show
the projected accuracy of the
measurement of the orbital period decay rate with a 100 meter
antenna (red solid curve), FAST (green dotted curve) and SKA (magenta
dashed curve), as a function of orbital period. 
Given observations of the orbital period decay rate with this 
predicted accuracy, one would be able to constrain \DDGBText{} 
gravity one order of magnitude more 
stringently than the current estimated bound.}
\end{center}
\end{figure}

Let us now estimate the magnitude of the constraints one could achieve on 
\DDGBText{} gravity with a future GW observation with aLIGO
of the late inspiral of a compact binary. Cornish et al.~\cite{cornish-PPE} found that given an aLIGO
GW observation with signal-to-noise ratio (SNR) of 20, one could assert that $\beta_{\ppE}$ is
zero up to an uncertainty of roughly $\sigma_{\beta_{\ppE}} = 5 \times 10^{-4}$. Using this projected
uncertainty and Eq.~\eqref{eq:bound-GW}, we find the projected bound
\begin{align}
\left(\sqrt{|\alpha_{\GB}|} \right)^{\rm{GW}}_{\rm{BHNS}} \lesssim{}& 3.0 \; {\rm{km}} \left({\textstyle \frac{\sigma_{\beta_{\ppE}}}{5 \times 10^{-4}}}\right)^{1/4}
 \left({\textstyle \frac{m_{\BH}}{5 M_{\odot}}} \right)\left({\textstyle \frac{0.171}{\eta}}\right)^{1/10}
\label{eq:GB-GW-BH-NS}
\end{align}
for a $(1.4,5) M_{\odot}$ NS/BH binary, and
\begin{align}
\left(\sqrt{|\alpha_{\GB}|}\right)^{\rm{GW}}_{\rm{BHBH}} \lesssim{}& 3.4 \; {\rm{km}} \left({\textstyle \frac{\sigma_{\beta_{\ppE}}}{5 \times 10^{-4}}}\right)^{1/4}
\left({\textstyle \frac{m}{15 M_{\odot}}}\right)
\nn \\
& 
\left({\textstyle \frac{0.33}{\delta m/m}} \right)^{1/2}
\left({\textstyle \frac{\eta}{0.22}}\right)^{9/10}
\label{eq:DDGB-BHBH-Const}
\end{align}
for a $(10,5) M_{\odot}$ BH/BH binary, where $\delta m \equiv m_{1} - m_{2}$.
Eq.~\eqref{eq:DDGB-BHBH-Const} is in agreement with~\cite{Yagi:2011xp}, where the constraint was first calculated. 

How do these future constraints compare to current constraints? Recall that 
the observation of LMXBs has implied the constraint
$\sqrt{|\alpha_{\GB}|} < 1.9 \; {\rm{km}}$~\cite{kent-LMXB}. We then see that
pulsar observations would be able to improve this constraint by 1 order
of magnitude, while GW observations would lead to comparable constraints. 
One concludes that binary pulsars will be better at constraining
dipole radiation than GW observations, provided a BH/pulsar
binary is observed. These results are in agreement with those found in~\cite{cornish-PPE}.

%--------------------------------------------
\subsection{Non-Shift-Symmetric, Topological Quadratic Gravity: TEdGB Example}
\label{sec:non-shift-sym-topo}

Let us take TEdGB as an example of quadratic gravity theories with a non-shift-symmetric, non-derivative,
topological interaction density. This time, NSs do source a scalar charge, and thus, BHs
are not needed to activate scalar energy flux correction. We then expect that binary pulsar observations
of NS/WD and NS/NS systems will lead to strong estimated constraints. 
As before, we separate the discussion into observations that involve binaries 
with at least one BH and those without BHs. 

%--------------------------------------------------------
\subsubsection{Binaries without black holes}
\label{sec:TEDGB-no-BH}

Let us first concentrate on radio pulsar observations.  
The best constraints on TEdGB using binaries without BHs
will come from mixed
NS/WD observations, as these will have the most dissimilar
compactnesses, and thus, the difference in the scalar charges will not be inherently small. 
Using the leading-order term of Eq.~\eqref{eq:charge-EDGB-Tol} with
a Tolman VII model  in a $C \ll 1$ expansion, and choosing $\gamma=1$
and $\vartheta_\infty = 0$ to be consistent with~\cite{pani-NS-EDGB},
one finds the estimated constraint
\begin{align}
\left(\sqrt{|\alpha_{\TEDGB}|}\right)^{\rm{Bin.Pul.}}_{\rm{NSWD}} \lesssim{}& 1.4
 \; \mrm{km} \; \left( {\textstyle \frac{m_{\NS}}{1.46 M_\odot}} \right)^{-3/4}
\left( {\textstyle \frac{R_\NS}{12 \mrm{km}}} \right)^{7/4}
\nn \\
&  
{}\times
\left( {\textstyle \frac{|v_{12}/c|}{1.2 \times 10^{-3}}} \right)^{1/4}
\left( {\textstyle \frac{\sigma_{\dot P_{b}}/\dot P_{b}}{6.1 \times 10^{-2}}} \right)^{1/8}\,.
\end{align}
The values of the NS mass, the orbital velocity and the observational error 
that we chose to normalize the above constraint are those of J1738+0333~\cite{Freire:2012mg}. 

\begin{figure}[tb]
\begin{center}
\includegraphics[width=8.5cm,clip=true]{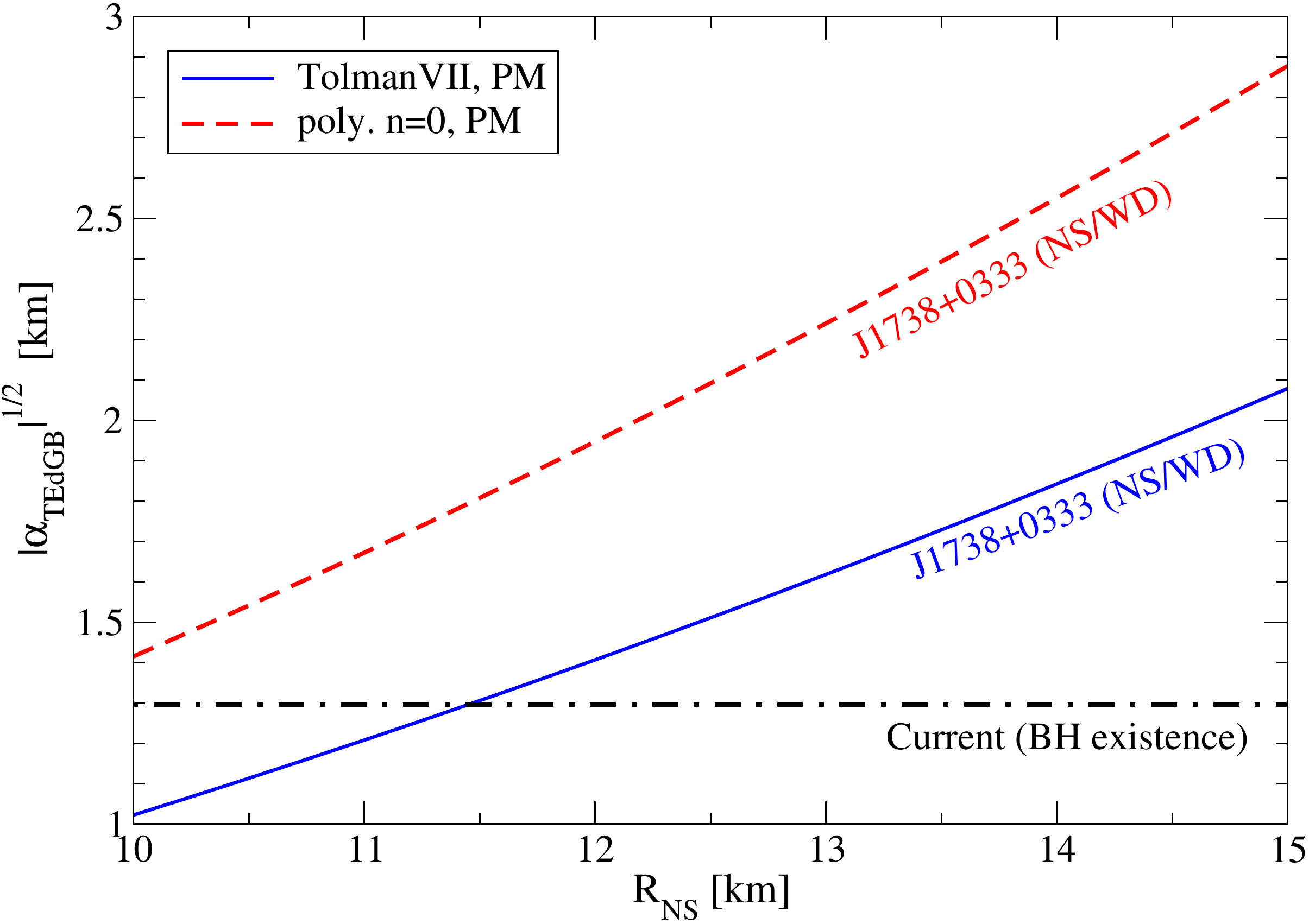}
\caption{\label{fig:constraints} (Color online) Upper bound on
  $\sqrt{|\alpha_{\TEDGB}|}$ from observations of the orbital
  decay rate of J1738+0333~\cite{Freire:2012mg} as a function of the
  NS radius with $\gamma = 1$ and $\vartheta_\infty = 0$. 
  For this constraint, we use the analytic post-Minkowskian
  calculation of the scalar charge with a Tolman VII and an
  $n=0$ polytropic equation of state (Eqs.~\eqref{eq:charge-EDGB-n0} and~\eqref{eq:charge-EDGB-Tol}). 
  For comparison, we also include
  the most stringent current constraint based on BH existence considerations~\cite{Kanti:1995vq,Pani:2009wy}.}
\end{center}
\end{figure}

The estimate found above clearly depends on the NS radius $R_{\NS}$, because
this quantity unavoidably enters the NS charge.
Figure~\ref{fig:constraints} shows this dependence for a Tolman VII model and an $n=0$ 
polytropic model using the mass, orbital velocity and measurement accuracy of 
J1738+0333~\cite{Freire:2012mg}. Observe that the estimated constraints are always between $1$
and $2$km for the Tolman VII model, which recall is a more realistic equation of state
than an $n=0$ polytrope. Observe also that the bounds on TEdGB gravity are comparable to the 
best current bounds that uses the existence of BH solutions~\cite{Kanti:1995vq,Pani:2009wy}. 
We have also studied the bound one could place on $\alpha_{\TEDGB}$ 
from observations of the WD/NS pulsar binary J0348+0432~\cite{2.01NS} 
and found it to be slightly weaker.

A similar constraint can be derived from the observation of the double NS pulsar binary 
J0737--3039~\cite{kramer-double-pulsar}. Following the same procedure as that described above,
but this time keeping both scalar charges, we find the estimated constraint
\begin{align}
\left(\sqrt{|\alpha_{\TEDGB}|}\right)^{\rm{Bin.Pul.}}_{\rm{NSNS}} \lesssim{}&
1.7 \; \mrm{km} \; 
\left( {\textstyle \frac{R_{1}}{11.5 \; {\rm{km}}}} \right)^{7/4}
\left( {\textstyle \frac{R_{2}}{12 \; {\rm{km}}}} \right)^{7/4}
\nn \\
&  \left( {\textstyle \frac{|v_{12}/c|}{2.08 \times 10^{-3}}} \right)^{1/4}
\left( {\textstyle \frac{\sigma_{\dot P_{b}}/\dot P_{b}}{1.7 \times 10^{-2}}} \right)^{1/8}
\nn \\
& \left[ 
 \left( {\textstyle \frac{m_{1}}{1.337 M_{\odot}}} \right)^{3} \left( {\textstyle \frac{R_{2}}{12 \; {\rm{km}}}} \right)^{7}
\right. 
\nn \\
& \left. 
{}-
\left( {\textstyle \frac{m_{2}}{1.250 M_{\odot}}} \right)^{3} \left( {\textstyle \frac{R_{1}}{11.5 \; {\rm{km}}}}  \right)^{7}
\right]^{-1/4}\,,
\label{eq:estimateNSNScons-TEDGB}
\end{align}
where we have assumed that $m_{1} > m_{2}$, and thus, $R_{2} \geq R_{1}$.
Observe that this constraint is comparable but slightly weaker than 
those obtained with J1738+0333. This is because in the
double NS binary pulsar case there is a natural suppression in the amount of scalar
energy flux emitted due to the comparable masses of the system, i.e.~when $m_{1} \sim m_{2}$, 
then $\mu_{1} \approx \mu_{2}$ and thus $\dot{E}^{(\vartheta)}$ is
suppressed.  Observe also that
this constraint depends on the radii of both NSs. Varying the radii of Eq.~\eqref{eq:estimateNSNScons-TEDGB}, 
we find that the estimated constraint varies between roughly $1.5$  
and $3.5$ km. 

Let us now estimate the magnitude of the constraints one could achieve on 
TEdGB gravity with a future GW observation. As in Sec.~\ref{sec:DDGB-cons}, 
we assume an aLIGO observation of the late inspiral of a NS/NS binary 
with an SNR of 20. Setting $\gamma = 1$ and $\vartheta_{\infty}=0$, we find the projected
constraint
\begin{align}
\left(\sqrt{|\alpha_{\TEDGB}|}\right)^{\rm{GW}}_{\rm{NSNS}} \lesssim{}& 7.9 \; {\rm{km}}
\left({\textstyle \frac{\sigma_{\beta_{\ppE}}}{5 \times 10^{-4}}}\right)^{1/8}
\left({\textstyle \frac{R_{1}}{11.5 \; {\rm{km}}}} \right)^{7/4}
\nn \\
& 
\left({\textstyle \frac{R_{2}}{12 \; {\rm{km}}}} \right)^{7/4}  \left({\textstyle \frac{0.249}{\eta}}\right)^{1/20}
\nn \\
& 
\left[ \left({\textstyle \frac{m_{1}}{1.6 M_{\odot}}}\right)^{3} \left({\textstyle \frac{R_{2}}{12 \; {\rm{km}}} }\right)^{7}
\right. 
\nn \\
& 
\left.
{}- 
\left({\textstyle \frac{m_{2}}{1.4 M_{\odot}}}\right)^{3} \left({\textstyle \frac{R_{1}}{11.5 \; {\rm{km}}} }\right)^{7}
\right]^{-1/4}
\,.
\end{align}
As before, these projected constraints also depend on the radii of both NSs. 
Varying these parameters, we find the constraint is always between $6$ and $14$ km.
Once more, we see that the projected constraints we could place with GWs are 
weaker than the estimated constraints with J1738+0333.
This is because in
TEdGB the scalar charge does not vanish, thus exciting dipole radiation which
enters at a pre-Newtonian order and could be better constrained by binary
pulsar observations, as discussed in Sec.~\ref{sec:dipole-rad-bin-gw}.

%--------------------------------------------------------
\subsubsection{Binaries with at least one black hole}

Let us first discuss future binary pulsar observations where one component of the binary
is a BH. The correction to the scalar energy flux 
is then dominated by the scalar charge of the BH, which we model through 
Eq.~\eqref{eq:BH-scalar-charge}.  This scaling from the \DDGBText{}
limit agrees with what is found in analytic calculations in
TEdGB~\cite{Kanti:1995vq}.
Given this, the projected constraint with a future BH-pulsar binary observation
is the same as that in Eq.~\eqref{eq:bound-EdGB-decouple} and Fig.~\ref{fig:constraints-BH-NS}. 
Such projected bounds are roughly an order of magnitude stronger 
than the estimated constraints that use binary systems without BHs. 

Let us now consider projected constraints with GW observations.
As in Sec.~\ref{sec:DDGB-cons}, let us assume an aLIGO observation 
with an SNR of 20 of the late inspiral of a compact binary. 
Setting $\gamma = 1$ and $\vartheta_{\infty}=0$, we find the same 
projected constraints as those in Eq.~\eqref{eq:GB-GW-BH-NS} 
for a NS/BH inspiral and Eq.~\eqref{eq:DDGB-BHBH-Const} 
for a BH/BH binary inspiral observation. In both cases, 
this is because both constraints use the same BH scalar charge 
as in \DDGBText{}. We see again that both constraints
improve by roughly an order of magnitude when including BHs
relative to those of Sec.~\ref{sec:TEDGB-no-BH}.
But again, these GW constraints would be weaker than
binary pulsar constraints, if the latter observed a BH/pulsar system. 

How do these future constraints compare to current constraints? 
Recall that current constraints on TEdGB gravity come from BH existence
considerations, which require that $\sqrt{|\alpha_{\TEDGB}|} <  1.4 \; {\rm{km}}$~\cite{Kanti:1995vq,Pani:2009wy}. 
We then see that projected constraints with BH/pulsar observations will improve this
bound by one order of magnitude, while projected GW constraints will be comparable
or slightly weaker than current constraints. 

%--------------------------------------------
\subsection{Non-Shift-Symmetric, Non-Topological Quadratic Gravity: Kretschmann Example}
\label{sec:non-shift-sym-non-topo}

Let us take Kretschmann gravity as an example of quadratic gravity theories with a non-shift-symmetric, non-derivative,
non-topological interaction density. As in the TEdGB case, NSs do source a scalar charge, and once more, 
BHs are not needed to activate corrections to the dynamics due to the scalar energy flux. We therefore expect
radio pulsar observations of NS/WD and NS/NS binaries to lead to strong estimated bounds.  
We again separate the discussion into observations that include binaries with at least one BH
and those without BHs. 

%--------------------------------------------------------
\subsubsection{Binaries without black holes}

Let us first focus on binary pulsar observations. 
As in the previous cases, the dominant correction
to the energy flux is given by Eq.~\eqref{eq:Edot-scalar}. 
In order to obtain estimated constraints on Kretschmann gravity,
we use the post-Minkowskian result for $\mu^{(0)}_\K$
with the Tolman VII model given in Eq.~\eqref{eq:mu-K}. 
For a WD/NS binary pulsar, Eq.~\eqref{eq:bound} 
leads to the estimated constraint 
\begin{align}
\label{eq:bound-Tolman}
\left(\sqrt{|\alpha_{\K}|}\right)^{\rm{Bin.Pul.}}_{\rm{NSWD}}  \lesssim{}& 0.076 \; \mrm{km} \; \left( {\textstyle \frac{m_{\NS}}{1.46 M_\odot}} \right)^{-1/2}
\left( {\textstyle \frac{R_\NS}{12 \mrm{km}}} \right)^{3/2}
\nn \\
& 
\left( {\textstyle \frac{|v_{12}/c|}{1.2 \times 10^{-3}}} \right)^{1/2}
\left( {\textstyle \frac{\sigma_{\dot P_b}/\dot P_{b}}{6.1 \times 10^{-2}}} \right)^{1/4}\,.
\end{align}
Observe that this is a rather strong estimated constraint, when compared to those found in \DDGBText{}
gravity and TEdGB gravity.
% and in particular, taking into account that we are not aware of any other constraints
%on this theory. 

\begin{figure}[tb]
\begin{center}
\includegraphics[width=8.5cm,clip=true]{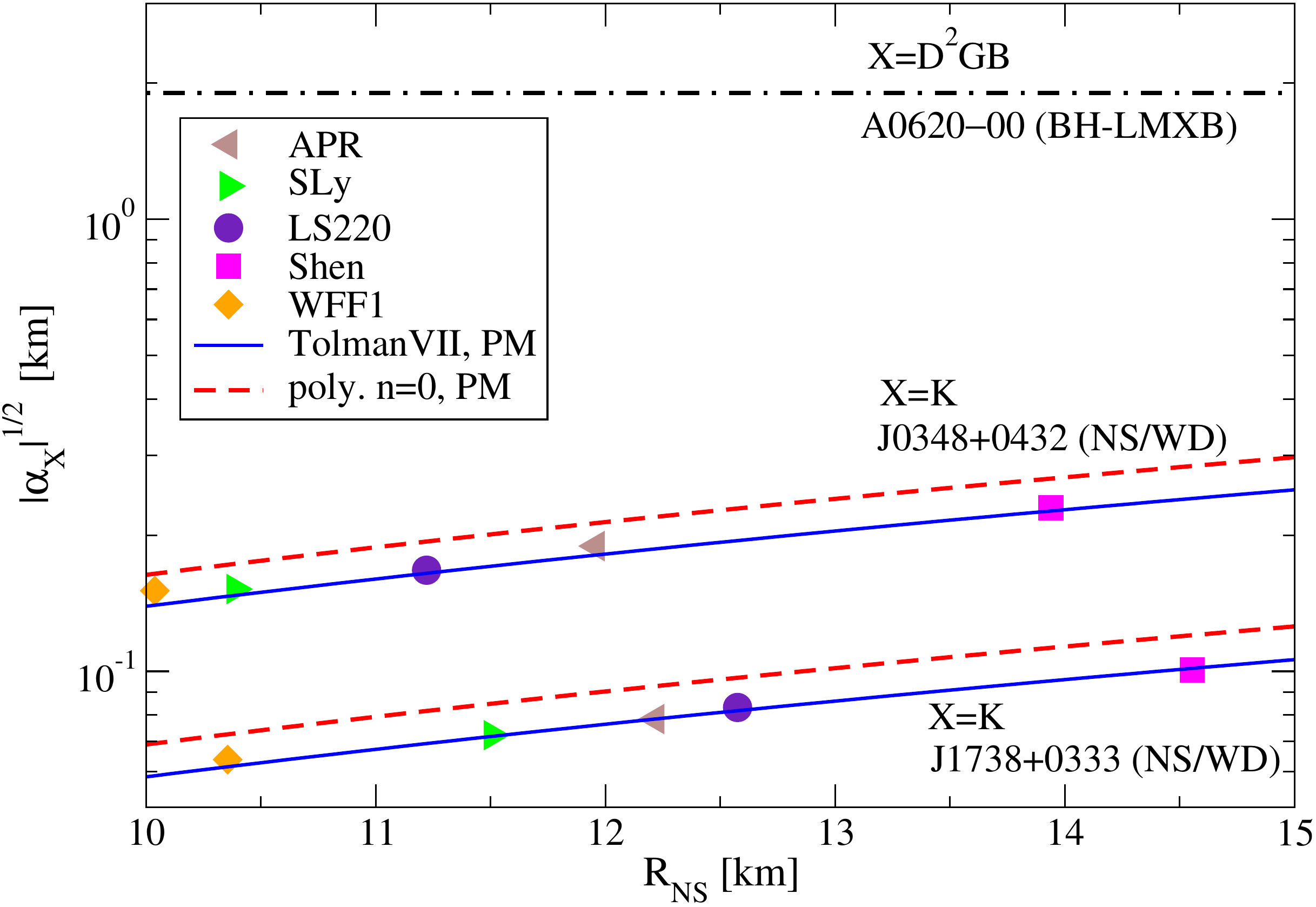}
\caption{\label{fig:constraints2} (Color online) Upper bound on
  $\sqrt{|\alpha_{\K}|}$ in Kretschmann gravity from observations
  of the orbital period decay of J1738+0333~\cite{Freire:2012mg} and
  J0348+0432~\cite{2.01NS} with various equations of state as a function
  of NS radius. Solid and
  dashed curves represent the bound using a Tolman VII model and an
  $n=0$ polytropic model respectively, and to leading-order in the
  post-Minkowskian approximation. For reference, we also present the
  estimated bound on \DDGBText{} gravity from observations of 
  A0620--00~\cite{kent-LMXB}.}
\end{center}
\end{figure}
As in the TEdGB gravity case, the estimated binary pulsar constraints 
depend on the NS radius. Figure~\ref{fig:constraints2} shows the upper bound on
$\sqrt{|\alpha_{\K}|}$ from the $\dot P_{b}$ measurement of
J1738+0333~\cite{Freire:2012mg} and J0348+0432~\cite{2.01NS} as a
function of NS radius for a set of tabulated equations of
state. The scalar charge of a non-rotating NS with 
such equations of state in Kretschmann gravity
is calculated numerically as explained in Sec.~\ref{sec:non-rot}.
The observation of J1738+0333 places a stronger constraint than
the observation of J0348+0432 because $\dot P_{b}$ for the former has
been measured more accurately.  For comparison, we also present
constraints using the leading-order, post-Minkowskian expression for
the scalar charge with an $n=0$ polytropic model and the
Tolman VII model. Observe that the constraints using realistic equations of state
lie very close to those obtained using the Tolman VII model.
We also show the constraint on \DDGBText{} gravity using
observations of A0620--00~\cite{kent-LMXB}.
Although one cannot directly compare estimated constraints on two
different theories and in different types of systems,
this difference suggests that the absence of NS scalar charge in
\DDGBText{} reduces our constraining power by an order of magnitude in
that theory.
Meanwhile, the projected constraint on Kretschmann gravity estimated here is
comparable to the projected constraint on \DDGBText{} gravity
estimated for a BH/pulsar binary presented in
Sec.~\ref{sec:DDGB-BHbinaries} and summarized in
Fig.~\ref{fig:constraints-BH-NS}.

We can repeat the above calculation for a binary pulsar composed of two NSs. 
Doing so, we find the estimated constraint
\begin{align}
\left(\sqrt{|\alpha_{\K}|}\right)^{\rm{Bin.Pul.}}_{\rm{NSNS}}  \lesssim{}& 0.19 \; \mrm{km} \;
\left( {\textstyle \frac{R_1}{11.5 \mrm{km}}} \right)^{3/2}
\left( {\textstyle \frac{R_2}{12 \mrm{km}}} \right)^{3/2}
\nn \\
&  
\left( {\textstyle \frac{|v_{12}/c|}{2.08 \times 10^{-3}}} \right)^{1/2}
\left( {\textstyle \frac{\sigma_{\dot P_b}/\dot P_{b}}{2.87 \times 10^{-2}}} \right)^{1/4}
\nn \\
& \left[
\left({\textstyle \frac{m_{1}}{1.337 M_{\odot}}}\right) \left({\textstyle \frac{R_{2}}{12 \; {\rm{km}}}}\right)^{3}
\right. 
\nn \\
&\left. 
{}- \left({\textstyle \frac{m_{2}}{1.250 M_{\odot}}}\right) \left({\textstyle \frac{R_{1}}{11.5 \; {\rm{km}}}}\right)^{3}
\right]^{-1/2}
\,.
\end{align}
Observe that this is weaker than that found for WD/NS
pulsar binary systems [Eq.~\eqref{eq:bound-Tolman}], in spite of the latter
typically being less relativistic and having less accurate $\dot P_{b}$ measurement.
Observe also that this estimated constraint depends
on both NS radii; varying both of these, we find that the estimated constraint
ranges between $0.15$ and $0.45$ km, and of course it becomes stronger as
the uncertainty in the measurement of $\dot{P}_{b}$ decreases.

Let us now discuss projected GW constraints on Kretschmann gravity
with signals emitted in the inspiral of NS binaries. Let us once more assume
an aLIGO detection with SNR 20. Then, using Eq.~\eqref{eq:bound-GW} and 
the post-Minkowskian result for $\mu^{(0)}_\K$ with the Tolman VII model given in Eq.~\eqref{eq:mu-K}, 
we find the estimated constraint 
\begin{align}
\left(\sqrt{|\alpha_{\K}|}\right)^{\rm{GW}}_{\rm{NSNS}} \lesssim{}& 4.1 \; {\rm{km}}
\left({\textstyle \frac{\sigma_{\beta_{\ppE}}}{5 \times 10^{-4}}}\right)^{1/4}
\left({\textstyle \frac{R_{1}}{11.5 \; {\rm{km}}}} \right)^{3/2}
\nn \\
& 
\left({\textstyle \frac{R_{2}}{12 \; {\rm{km}}}} \right)^{3/2}  \left({\textstyle \frac{0.249}{\eta}}\right)^{1/10}
\nn \\
& 
\left[ \left({\textstyle \frac{m_{1}}{1.6 M_{\odot}}}\right) \left({\textstyle \frac{R_{2}}{12 \; {\rm{km}}} }\right)^{3}
\right. 
\nn \\
& 
\left.
{}-
\left({\textstyle \frac{m_{2}}{1.4 M_{\odot}}}\right) \left({\textstyle \frac{R_{1}}{11.5 \; {\rm{km}}} }\right)^{3}
\right]^{-1/2}
\,.
\end{align}
As expected, this constraint depends on the radii of both NSs; varying these
quantities, we find the estimated constraint ranges between $3$ and $9$ km. 
Observe that this is roughly one order of magnitude weaker than binary pulsar constraints. 

%--------------------------------------------------------
\subsubsection{Binaries with at least one black hole}

Future radio observations of BH/NS pulsar binaries
could lead to complementary constraints. Since vacuum solutions in \DDGBText{} gravity
are also solutions in Kretschmann gravity with the identification 
$\alpha_\GB = \alpha_\K$, we can use some of the results from Sec.~\ref{sec:DDGB-BHbinaries}. 
Let us then again use Eq.~\eqref{eq:bound}, with the NS scalar charge
modeled through Eq.~\eqref{eq:mu-K} and the BH scalar charge through
Eq.~\eqref{eq:BH-scalar-charge}. We then find the projected constraint
\begin{align}
\left(\sqrt{|\alpha_{\K}|}\right)^{\rm{Bin.Pul.}}_{\rm{BHNS}}  
\lesssim{}& 0.049 \; \mrm{km} \;
\left( {\textstyle \frac{R_\NS}{12 \mrm{km}}} \right)^{3/2}
\left({\textstyle \frac{m_{\BH}}{10 M_{\odot}}} \right)
\left( {\textstyle \frac{|v_{12}/c|}{10^{-3}}} \right)^{1/2}
\nn \\
&  
\left( {\textstyle \frac{\sigma_{\dot P_b}/\dot P_{b}}{10^{-2}}} \right)^{1/4}
 \left[
7 \left({\textstyle \frac{R_{\NS}}{12 \; {\rm{km}}}}\right)^{3} \right. \nn \\
&  \left. {}+ 180 \left({\textstyle \frac{m_{\NS}}{1.4 M_{\odot}}}\right)
\left({\textstyle \frac{m_{\BH}}{10 M_{\odot}}}\right)^{2}
\right]^{-1/2}\,.
\end{align}
This projected constraint varies between $0.036$ and $0.074$~km as one
varies the NS radius. Notice that such a constraint is 2--3 times stronger 
than that on \DDGBText{} gravity for the same system observed 
[compare to Eq.~\eqref{eq:bound-EdGB-decouple}].
This is because the NS scalar charge in Kretschmann gravity
dominates the BH scalar charge, which leads to enhanced 
scalar dipole radiation compared to the \DDGBText{} case, where the 
stellar scalar charge vanishes.
The projected constraints in Fig.~\ref{fig:constraints-BH-NS} 
are also valid in Kretschmann gravity as an order of magnitude estimate,
with the contours representing 
an upper bound on $\sqrt{|\alpha_\K|}$.

Let us now discuss GW observations, assuming once more an aLIGO
detection with SNR 20. Using Eq.~\eqref{eq:bound-GW}, with the neutron scalar charge modeled through
Eq.~\eqref{eq:mu-K} and the BH charge through Eq.~\eqref{eq:BH-scalar-charge}, 
we find the projected constraint 
\begin{align}
\left(\sqrt{|\alpha_{\K}|}\right)^{\rm{GW}}_{\rm{BHNS}}  \lesssim{}& 2.5 \; \mrm{km} \;
\left( {\textstyle \frac{m_{\BH}}{10 M_\odot}} \right)
\left({\textstyle \frac{R_{\NS}}{11.5 \; {\rm{km}}}}\right)^{3/2}
\nn \\
& \left( {\textstyle \frac{0.1}{\eta}} \right)^{1/10}
\left( {\textstyle \frac{\sigma_{\beta_{\ppE}}}{5 \times 10^{-4}}} \right)^{1/4}
 \left[
7 \left({\textstyle \frac{R_{\NS}}{12 \; {\rm{km}}}}\right)^{3} \right. \nn \\
& \quad \left. {}+ 180 \left({\textstyle \frac{m_{\NS}}{1.4 M_{\odot}}}\right)
\left({\textstyle \frac{m_{\BH}}{10 M_{\odot}}}\right)^{2}
\right]^{-1/2}
\end{align}
for a BH/NS inspiral. Varying the radius of the NS, 
we find that the projected constraint varies between
$2.0$ and $4.1$ km. Similarly, for a BH/BH inspiral, we find 
projected constraints that are identical to Eq.~\eqref{eq:DDGB-BHBH-Const}, 
since both use the same BH scalar charge. 
As expected, the projected constraints are roughly two
orders of magnitude weaker than those that could be placed with BH/pulsar binaries. 

%%%%%%%%%%%%%%%%%%%%%%%%%%%%%%%%%%%
\section{Conclusions and Future Directions}
\label{sec:conclusions}

%------------------------
%\subsection{Conclusions}

%Story:  There are many reasons to think general relativity requires corrections at some yet-unknown curvature length scale---from effective field theory to the straightforward desire to test theory with observation. The simplest "gravitational" corrections are long-ranged scalar fields, but these generically lead to wildly differing phenomenology, like scalar "hair" and the presence of dipolar gravitational radiation. The ``lore'', then, is that scalar-tensor theories are strongly constrained from current tests (Solar System and binary pulsar). This lore turns out to be wrong when the theory is built from certain curvature invariants related to topological properties. These topologically-related theories hide their corrections from weak-field tests: we prove that certain scalar "hairs" are forbidden, suppressing dipole radiation, so the potential bounds on these theories are weaker. Stronger constraints would come from systems that contain at least one BH, since these do posses a scalar charge and thus activate dipolar radiation.

%In this paper we have addressed and tried to dispel a common piece of
%``lore'' regarding binary pulsar constraints on gravity theories with a
%long-ranged scalar field.  The prevailing lore is that NSs
%source a spherically symmetric, $1/r$ scalar hair (a scalar charge),
%which generates dipolar radiation in such theories; and that therefore
%these theories are tightly constrained by binary pulsar observations.

Many people may believe that gravity theories with a long-ranged
scalar field are tightly constrained by binary pulsar observations.
This is because many theories with a long-ranged scalar field have
been proposed, where NSs source a spherically symmetric, $1/r$ scalar
hair (a scalar charge).  An accelerating charge produces dipole
radiation, but the effects of dipole radiation have been strongly
constrained by binary pulsar observations.

In this paper we have addressed and tried to dispel this lore.
We first classified the
relevant theories in terms of whether each theory has (i) shift
symmetry, (ii) a coupling to a topological density, and (iii)
derivative or non-derivative interactions between the scalar field and
the metric.  We conjecture that theories with a shift-symmetric
scalar field sourced by a linear (non-derivative) coupling to a
topological density do not activate a scalar charge in NSs.

We proved this theorem specifically for \DDGBText{} gravity, which is
an example of this class.  Our rigorous proof is based on the generalized
Gauss-Bonnet-Chern theorem, and improves upon the ``physicist's''
proof given in~\cite{Yagi:2011xp}, which was only valid for static,
spherically symmetric stars.  We confirmed the absence of the NS
scalar charge in this theory by explicitly calculating the scalar
field around a slowly-rotating star to quadratic order in spin, both
analytically within the weak-field expansion and numerically in the
strong-field case.

Therefore, in order to place meaningful constraints on \DDGBText{}
gravity, one needs to use observations of compact binaries that
contain at least one BH. The current estimated constraint on the
theory using a BH-LMXB was derived in~\cite{Yagi:2012gp}.  We
here derived projected constraints from radio observations of
BH/pulsar binaries and GW observations of NS/BH and BH/BH binaries.
We found that radio observations could place constraints that are one
order of magnitude stronger than the current bound, while GW
constraints would be slightly weaker than the current one.

We also derived estimated and projected constraints on (i) TEdGB and (ii)
Kretschmann gravity as examples of (i) non-shift-symmetric but
topological, and (ii) non-shift-symmetric and non-topological theories
respectively.  In both cases, we found that ordinary stars acquire a
scalar charge, and hence, current binary pulsar observations
are sufficient in constraining these theories.  We found that such
binary pulsar constraints are comparable to the current bound from the
existence of BH solutions in TEdGB, while they are stronger than the
current bound from a BH-LMXB by roughly one order of magnitude in
Kretschmann gravity.  We also found that future BH/pulsar observations
could improve current binary pulsar constraints by roughly one order
of magnitude.

Let us comment on the relation between the scalar
charge, defined in this paper as the spherically symmetric, $1/r$
coefficient of the scalar field in a far-field expansion, and
the derivative of the ADM mass with respect to the scalar field at spatial infinity, as is 
routinely done in scalar-tensor theories. 
In such theories, one can calculate the variation of the ADM mass from the variation of the Lagrangian 
with respect to the metric and the scalar and matter fields~\cite{Damour:1992we}. 
Then, the authors in~\cite{Damour:1992we}
prove that the $1/r$ coefficient in the scalar field is equivalent to 
the derivative of the ADM mass with respect 
to the scalar field at infinity. In fact, this approach has already been applied to
Einstein-\AE{}ther and khronometric theories in~\cite{Yagi:2013ava}.
If one applies this approach to \DDGBText{} gravity, one finds that 
the variation of the stellar ADM mass with respect to the scalar field vanishes~\cite{Barausse:2015wia},
in agreement with the vanishing scalar charge found in this paper.

\subsection{Future work}
\label{sec:future-work}
One natural extension of this work would consider the Pontryagin
density, rather than the Gauss-Bonnet density, as the topological
invariant to which the scalar couples.  Whereas the integral of the
Gauss-Bonnet density is related to the manifold's Euler
characteristic, the integral of the Pontryagin density is related to
the first Pontryagin number.  The proof should be similar in
spirit to the one presented here.  However, it requires understanding
a different theorem on a pseudo-Riemannian manifold with boundary.

A second natural and very straightforward extension is to consider
BH spacetimes.  We may already outline what happens to our proof
in the BH case.  Firstly, the integration region would consist of a
spherical annulus crossed with a compactified time interval.  The
outer boundary is treated the same as here, but the inner boundary is
a null surface---the Killing horizon of the BH.  The Euler
characteristic is unchanged.  However, both sides of
the equality relating the boundary integrals
[Eq.~\eqref{eq:EOM-int-apply-GB-chi-kink}] need to include both the
inner and outer boundaries.  The inner boundary
will in fact contribute in this calculation.  We will find that the scalar
charge is proportional to a horizon integral, and from scaling
arguments we see that we will recover the known scaling $\mu\sim
1/m_\BH$~\cite{1992PhLB..285..199C,Mignemi:1992nt,
Yunes:2011we,Yagi:2011xp,Sotiriou:2014pfa}.  This approach can potentially connect to
the thermodynamics of Gauss-Bonnet BHs.

Yet a third natural extension is to derive a proposed bound on dynamical 
Chern-Simons gravity with BH/NS pulsar observations.
Corrections to some of the post-Keplerian parameters in this theory
have been derived in~\cite{Yagi:2013mbt}, which are absent if bodies are
non-rotating. One can extend the analysis in~\cite{Liu:2014uka} by including such 
corrections to the binary evolution and study how well one can constrain
the theory with a rotating BH/pulsar observations. If the observation
is accurate enough, one should be able to derive an upper bound on 
$\sqrt{|\alpha_\CS|}$ that is comparable to the size of a BH.
If this is the case, one would obtain a constraint comparable to that projected using
future GW observations derived in~\cite{kent-CSGW}, which is six orders of magnitude
stronger than the current bound.

Finally, a fourth extension is the same problem but in a cosmological
setting.  In this case, the spacetime is no longer asymptotically
flat.  Therefore, the calculations presented in this paper do not
apply, and NSs may acquire a scalar charge in \DDGBText{} gravity.
However, it is not clear how to define multiple moments of the metric and
scalar fields in a cosmological setting.

%%%%%%%%%%%%%%%%%%%%%%%%%%%%%%%%%%%%
\acknowledgments
The authors thank
Enrico Barausse,
Mike Boyle,
Gilles Esposito-Far\`ese,
Paulo Freire,
Takahiro Tanaka,
and Norbert Wex
for useful discussions and comments.
We also thank Paolo Pani for providing us numerical data on the scalar
charge in TEdGB gravity to compare against.
L.C.S.~and N.Y.~would like to thank the Max Planck Institute for Radioastronomy
for their hospitality, while this work was started.
K.Y.~acknowledges support from JSPS Postdoctoral Fellowships for Research Abroad
and the NSF grant PHY-1305682.
L.C.S.~acknowledges that support for this work was provided by NASA
through Einstein Postdoctoral Fellowship Award Number PF2-130101
issued by the Chandra X-ray Observatory Center, which is operated by
the Smithsonian Astrophysical Observatory for and on behalf of the
NASA under contract NAS8-03060, and further acknowledges support from
the NSF grant PHY-1404569.
N.Y.~acknowledges support from NSF CAREER Grant PHY-1250636.
Some calculations used the computer algebra system MAPLE, in
combination with the GRTENSORII package~\cite{grtensor}.

%%%%%%%%%%%%%%%%%%%%%%
\appendix

%%%%%%%%%%%%%%%%%%%%%%%%%%%%%%%%%%%
\section{Details of Calculating Neutron Star Scalar Charge
in Quadratic Gravity}

In this appendix, we describe details of how one calculate the scalar
monopole charge in quadratic gravity, both analytically 
 and numerically. We focus on the $l = 0$ mode
since the scalar charge enters through this mode.
We concentrate on the even-parity sector of the theory, namely, setting
$ f_4 = 0$ in the action in Eq.~\eqref{eq:q-action} 
(or $\alpha_4=0$ in Eq.~\eqref{eq:quad-action-we-focus-on}).

%--------------------------------
\subsection{Field Equations and Exterior Solutions}
\label{app:scalar-field-eq}

We first present the field equations and exterior solutions for NSs 
in quadratic gravity. We assume matter to be described by a perfect fluid and 
use the same metric ansatz as that proposed by Hartle and Thorne~\cite{hartle1967,Hartle:1968ht}.

%--------------------------------
\subsubsection{Non-rotating Neutron Stars}
\label{app:nonrot-NS}

Let us first consider non-rotating NSs with the coupling function in the action
given by Eq.~\eqref{eq:g-example}. The field equation is given by Eq.~\eqref{eq:scalar-field-eq-int} with
$S^{(0)}$ given by
\ba
\label{eq:source}
S^{(0)} &=&    \frac{8}{r^5 (r-2 M)} \gamma e^{-\gamma \vartheta^{(0)}} \left\{ 2 \alpha_3 M (3 M -8 \pi \rho r^3) \right. \nn \\
& & \left. {}+\pi^2  r^6  \left[ 49 \alpha_{\GB,1}   p^2 + 8   \alpha_{\GB,2} \rho^2 - 16 (3 \alpha_1 -  \alpha_3) p \rho  \right] \right\}\,. \nn \\
\ea
We decompose the scalar field $\vartheta^{(0)}$ in a manner similar to Eq.~\eqref{eq:scalar-decomp}: 
\begin{equation}
\label{eq:scalar-decomp-0}
\vartheta^{(0)} (r) = \vartheta_\infty +  \vartheta^{(0,1/2)} (r)
+   \vartheta^{(0,1)} (r) + \mathcal{O}(\zeta^{3/2})\,,
\end{equation}
where $\vartheta^{(0,A)}={\cal{O}}(\zeta^{A})$. 
%and $\vartheta_\infty$ can be set to zero
%without loss of generality. For a linear scalar field coupling function, 
%the $\mathcal{O}(\zeta)$ contribution  vanishes ($\vartheta^{(2,1)}_0= 0$), and hence, 
%one only needs to calculate $\vartheta^{(2,1/2)}_0$.
Decomposing the field equation within the small coupling approximation using
Eq.~\eqref{eq:scalar-decomp-0} 
and  setting $p = 0 = \rho$ and $M=M_*$, one can solve for the exterior solutions and find
\bw
\allowdisplaybreaks
\begin{align}
 \label{eq:0th-Rot-ext-sol}
\vartheta_{\ext}^{(0,1/2)} ={}& -  \frac{2 }{M_* r} \left[ \alpha_3  \gamma e^{-\gamma \vartheta_\infty} \left(1 +\frac{M_*}{r} + \frac{4}{3} \frac{M_*^2}{r^2} \right)
  + \frac{1}{4} \left(\mu^{(0,1/2)} M_*^2 + 2 \alpha_3 \gamma e^{-\gamma \vartheta_\infty} \right) \frac{r}{M_*} \ln \left( 1 - \frac{2 M_*}{r} \right)  \right]\,, \\
 \label{eq:0th-Rot-ext-sol-2}
\vartheta_{\ext}^{(0,1)} ={}&  \frac {11}{6\,{{\it M_*}}^{4}{r}
} \left\{  \left[ \frac {147}{110} \,{\alpha_3}^{2}{\gamma}^{3}{{\rm e}^
{-2\,\gamma\,\vartheta_\infty}}  \left(1 - \frac{40}{49}\frac{M_*}{r} - \frac{40}{49}\frac{M_*^2}{r^2}  - \frac{160}{147}\frac{M_*^3}{r^3} \right) \right. \right. \nn \\
 & \left. \left.
 {}+\mu^{(0,1/2)} \alpha_3\, {\gamma}^{2} {{\rm e}^{-\gamma\,\vartheta_\infty}} M_*^2\left(1 - \frac{6}{11}\frac{M_*}{r} - \frac{6}{11}\frac{M_*^2}{r^2}  - \frac{8}{11}\frac{M_*^3}{r^3} \right) - \frac{3}{
11}\,\mu^{(0,1)}\,{{\it M_*}}^{4} \right] r
 \ln  \left(1 -  {\frac {2
\,{\it M_*}}{r}} \right)  \right. \nn \\
 & \left. {}+2 \alpha_3\, {\gamma}
^{2} {\it M_*}\, \left[ {\frac {147}{110} \,\alpha_3 
\,\gamma\,{{\rm e}^{-2\,\gamma\,\vartheta_\infty}}  \left(1+ \frac{9}{49} \frac{M_*}{r} - \frac{44}{147} \frac{M_*^2}{r^2} - \frac{146}{147} \frac{M_*^3}{r^3} - \frac{64}{105} \frac{M_*^4}{r^4}- \frac{160}{441} \frac{M_*^4}{r^4} \right) } \right. \right. \nn \\
 & \left. \left.
{}+\mu^{(0,1/2)}\, M_*^2 {{\rm e}^{-\gamma\,\vartheta_\infty}}  \left(1 - \frac{5}{11}\frac{M_*}{r} - \frac{8}{33}\frac{M_*^2}{r^2} \right)  \right]  \right\}\,,
\end{align}
\ew
where we decomposed $\mu^{(0)}$ within the small coupling approximation
in a similar manner to Eq.~\eqref{eq:scalar-decomp-0} as
\begin{equation}
 \mu^{(0)} = \mu^{(0,1/2)}
+ \mu^{(0,1)} + \mathcal{O}(\zeta^{3/2})\,.
\end{equation}
We have also used the exterior solutions for the metric in GR,
which is justified since the quadratic gravity correction to the metric only affects 
the scalar field exterior solution at $\mathcal{O}(\zeta^{3/2})$,
even when $\vartheta_\infty$ does not vanish. 
This is because the non-shift-symmetric contribution to the metric field equations vanish
in the exterior region, and hence, $\mathcal{O}(\zeta^{1/2})$ contribution sourced by $\vartheta_\infty$
is absent in the metric exterior solution.
Notice that $\mu^{(0,1/2)}$ and $\mu^{(0,1)}$ are the only integration constants, which by
construction coincide with the scalar charge, because we
have removed the other constants through a redefinition of the constant
$\vartheta_\infty$ in Eq.~\eqref{eq:scalar-decomp}.

%--------------------------------
\subsubsection{Slowly-rotating Neutron Stars}
\label{app:rot-NS}

Next, we derive the scalar field equations and exterior solutions for slowly-rotating NSs.
We assume the scalar coupling function is given by 
\begin{equation}
\label{eq:g-lin}
g(\vartheta) = \vartheta\,,
\end{equation}
which is a subclass of the non-linear function of Eq.~\eqref{eq:g-example} 
used in Sec.~\ref{app:nonrot-NS} by taking the limit $\gamma \to 0$ while replacing
$\alpha_i$ to $-\alpha_i/\gamma$ and keeping this new $\alpha_i$ constant. The purpose
of calculating the scalar charge with the linear coupling function is to demonstrate explicitly that
the charge vanishes in the \DDGBText{} limit.
Following Eq.~\eqref{eq:scalar-decomp-0}, we decompose $\vartheta^{(2)}_l$ as
\begin{equation}
\label{eq:scalar-decomp-2}
\vartheta^{(2)}_l (r) =   \vartheta^{(2,1/2)}_l (r)
+   \vartheta^{(2,1)}_l (r) + \mathcal{O}(\zeta^{3/2})\,, \quad (l = 0,2)\,,
\end{equation}
where $\vartheta^{(2,A)}_l={\cal{O}}[(M_{*} \Omega)^{2},\zeta^{A}]$ 
and $\vartheta^{(2)}_l (\infty)$ can be set to zero
without loss of generality. 
For a linear scalar field coupling function, 
the $\mathcal{O}(\zeta)$ contribution  vanishes ($\vartheta^{(2,1)}_0= 0$)
in the exterior region, and hence, 
one only needs to calculate $\vartheta^{(2,1/2)}_0$.
%
%One finds that the $\mathcal{O}(\zeta)$ contribution of the scalar field vanishes, and hence 
%$\vartheta^{(0,1)} = 0 = \vartheta^{(2,1)}_0$.
The scalar field equation and the exterior solutions for non-rotating NSs with
the linear coupling function can easily be derived from the results in Sec.~\ref{app:nonrot-NS}
with the mapping explained above as
\ba
\label{eq:source-lin}
S^{(0)} &=&  -  \frac{8}{r^5 (r-2 M)} \left\{ 2 \alpha_3 M (3 M -8 \pi \rho r^3) \right. \nn \\
& & \left. {}+\pi^2  r^6  \left[ 49 \alpha_{\GB,1}   p^2 + 8   \alpha_{\GB,2} \rho^2 - 16 (3 \alpha_1 -  \alpha_3) p \rho  \right] \right\}\,, \nn \\
\\
 \label{eq:0th-Rot-ext-sol-lin}
\vartheta_{\ext}^{(0,1/2)} &=& \frac{2 }{M_* r} \left[ \alpha_3 \left(1 +\frac{M_*}{r} + \frac{4}{3} \frac{M_*^2}{r^2} \right)
 \right. \nn \\
 & & \left. {}- \frac{1}{4} \left(\mu^{(0,1/2)} M_*^2 - 2 \alpha_3  \right) \frac{r}{M_*} \ln \left( 1 - \frac{2 M_*}{r} \right)  \right]\,. \nn \\
\ea

Regarding the scalar field equation at second order in spin,
$\vartheta^{(2)}_0$ obeys the same equation as Eq.~\eqref{eq:scalar-field-eq-int}, 
except $S^{(0)}$ needs to be replaced by $S^{(2)}$, which is given by
\bw
\allowdisplaybreaks
\begin{align}
\label{eq:source-2}
S^{(2)} ={}&  \frac {1}{6 {r}^{6} \left( r-2{\it M} \right) ^{3}} \left\{
1024{r}^{5} \left( r-2{\it M} \right) {
{\rm e}^{-\nu}}
 \left\{ -\frac{\left( r-2{\it M} \right) {{\it \omega_1'}}^{2}}{32} \left[ -\frac{1}{
32}{r}^{4} \left( r-2{\it M} \right) \vartheta^{(0)}{}'-2 \alpha_3 {{\it M}}^{
2}  \right. \right. \right. \nn \\
& \left. \left. \left. {}+ \alpha_3 \left( \pi \rho{r}^{2}+\pi p{r}^{2}+\frac{7}{2} \right) r
{\it M}
+ {r}^{2} \left( {\pi }^{2} \left( \alpha_{\GB,1}-\alpha_{\GB,2} \right) {r}^{4}  {\rho}^{2} -{\pi }^{2} \left( \alpha_{\GB,1}+\alpha_{\GB,2} \right) {r}^{4}p
  \rho-2{\pi }^{2}
\alpha_{\GB,1}{r}^{4}{p}^{2}- \frac{5}{4} \alpha_3 \right) 
 \right]   \right. \right. \nn \\
& \left. \left.
{}-\frac{ \pi }{4}
 \left( \rho+p \right) {r}^{2} \left( r-2{\it M} \right)  \omega_1  {\it \omega_1'
}
 \left\{ -
\alpha_3{\it M} + r \left[\pi \left( \alpha_{\GB,1}-
\alpha_{\GB,2} \right) {r}^{2}   \rho-2 \pi \alpha_{\GB,1}{r}^{2} 
p+\alpha_3 \right]  \right\}
\right. \right. \nn \\
& \left. \left.
{}+ \pi \left( \rho+p \right) r \omega_1^2 \left\{
\frac {1}{64} {r}^{4} \left( r-2{\it M
} \right) 
\vartheta^{(0)}{}' - \frac{3}{4} \alpha_3{{\it M}}^{2}+ \frac{3}{4} r \left[ 
 \left( \alpha_{\GB,1}-\alpha_{\GB,2}
 \right)  \pi {r}^{2} \rho-2 \pi  {r}^{2} \left( \alpha_{\GB,1}+\frac{2}{3}\alpha_3
 \right) p+\frac{2}{3}\alpha_3 \right] {\it M}  
 \right. \right. \right. \nn \\
& \left. \left. \left.
 {}+  \pi {r}^{4} \left[
 \left( \alpha_{\GB,1}-\alpha_{\GB,2} \right)  \left( \pi 
p{r}^{2}-\frac{1}{4} \right) \rho - 2 \left(\pi  \alpha_{\GB,1}{r}^{2}
p-\frac{1}{4}\alpha_{\GB,1}- \frac{1}{2} \alpha_3 \right) p \right]
 \right\}   \right\} 
 +6{r}^{6} \vartheta^{(0)}{}' \left( r-2{\it M}
 \right) ^{3}\xi_0'' \right.  \nn \\
 & \left. 
 {}+1536r \left\{ \frac{1}{128} \left( 2 \pi {r}^{3}
\rho-2 \pi p {r}^{3}- r+ {\it M} \right)  \left( r-2{\it M}
 \right) {r}^{4} \vartheta^{(0)}{}'+\frac{3}{4}\alpha_3{{\it M}}^{3}-\frac{\alpha_3}{4}
 \left( 7 \pi \rho{r}^{2}- \pi p{r}^{2}+\frac{3}{2} \right) r
{{\it M}}^{2}  
\right. \right.  \nn \\
 & \left. \left.
{}+\frac{\pi}{4} {\it M} {r}^{4} \left\{ \pi  \left( \alpha_{\GB,1}+3\alpha_{\GB,2} \right) {r}^{2} {\rho}^{2}+ \left[ -13 \pi
 \left( \alpha_{\GB,1}-{\frac {11}{13}}\alpha_{\GB,2}+{
\frac {12}{13}}\alpha_3 \right) {r}^{2} p+4\alpha_3 \right]
\rho 
\right. \right. \right.  \nn \\
 & \left. \left. \left.
 {}+10 \pi \left( \alpha_{\GB,1}+\frac{2}{5}\alpha_3 \right) {r}^{2}{p}^
{2} \right\} 
+  {\pi }^{2}  {r}^{7} 
\left\{ \pi 
 \left( \alpha_{\GB,1}-\alpha_{\GB,2} \right)  \left( {r}^{2} p-\frac{1}{2}
\alpha_{\GB,2} \right) {\rho}^{2}
\right. \right. \right.  \nn \\
 & \left. \left. \left.
{}- \left[ \pi  \left(
\alpha_{\GB,1}+\alpha_{\GB,2} \right) {r}^{2} p-\frac{3}{2} \alpha_{\GB,1}+\frac{3}{2}\alpha_{\GB,2}-2\alpha_3 \right] 
p \rho-2 \alpha_{\GB,1}
 \left( \pi p{r}^{2}+\frac{3}{4} \right) {p}^{2}
 \right\} 
 \right\}  
 \left( r-2{\it M} \right) \xi_0'
 \right.  \nn \\
 & \left.
 {}+192 \left\{ \frac{1}{8}{{\it M}}^{2}\xi_0-\frac{1}{2}r \left( \pi \rho{r}^
{2}-\pi p{r}^{2}+\frac{1}{4} \right) \xi_0{\it M}+{r}^{2} 
\left[ {\pi }^{2} {r}^{4
}{\rho}^{2}\xi_0-  \pi {r}^{2} \rho \left( \pi {r}^{2}p\xi_0 - \frac{m_0}{
4}+\frac{\xi_0}{8} \right)
\right. \right. \right.  \nn \\
 & \left. \left. \left.
 {}-\frac{\pi}{8}{r}^{2}  \left( \xi_0+
2m_0 \right) p+\frac{\xi_0}{16}- \frac{m_0}{16} \right]  \right\}
{r}^{
4} \left( r-2{\it M} \right) \vartheta^{(0)}{}'+5760\alpha_3 {{\it M}}^{4}\xi_0
 \right.  \nn \\
 & \left.
 {}-9984 \left(\pi {r}^{2} \rho \xi_0 - \frac{\pi}{13}{r}^{2}p\xi_0 +{\frac {3}{26}}m_0+{\frac {33}{52}}\xi_0
 \right) r\alpha_3{{\it M}}^{3}+768 
 \left\{ {\pi }^
{2} \left( \alpha_{\GB,1}-3\alpha_{\GB,2} \right) {r}^{4} {\rho} \xi_0^{2}
\right. \right.  \nn \\
 & \left. \left.
{}+5 \pi {r}^{2} \rho \xi_0 \left[ \pi  \left( \alpha_{\GB,1}-\frac{7}{5}
\alpha_{\GB,2} \right) {r}^{2}  p +{\frac {18}{5}} \alpha_3
 \right] 
 - 8{\pi }^{2} \left( \alpha_{\GB,1}+\alpha_3 \right) {r}^{4} {p}^{2}\xi_0 -\pi \alpha_3 {r}^{2}  p\xi_0+\frac{9}{4}
\alpha_3 \left( m_0+\xi_0 \right)  \right\}
{r}^{2}{{\it M}}
^{2}
\right.  \nn \\
 &  \left.
 {}-24576{r}^{3} \left\{ -\frac{{\pi }^{3}}{4}\alpha_{\GB,2}{r}^{6}{\rho}^{3}\xi_0 + \pi^2 {r}^{4} \rho^2
 \left[ \pi  \left( \alpha_{\GB,1}-\alpha_{\GB,2}+\frac{1}{2} \alpha_3 \right) {r}^{2} p \xi_0 + \left( \frac{1}{32}
\alpha_{\GB,1}-\frac{1}{16}\alpha_{\GB,2}+\frac{3}{16}\alpha_3
 \right) \xi_0
 \right. \right. \right.  \nn \\
 & \left. \left. \left.
 {}-\frac{1}{16} \alpha_{\GB,2} m_0 \right]
 - \left\{ {\pi }^{2} \left( \alpha_{\GB,1}+\frac{1}{4}\alpha_{\GB,2}+\frac{1}{2}\alpha_3 \right) {r}^{4}{p}^{2} \xi_0 -\frac{\pi}{16} r^2 p
 \left[  \left( \alpha_{\GB,1}-2\alpha_{\GB,2} \right) 
\xi_0
  \right. \right. \right. \right. \nn \\
 & \left. \left. \left. \left.
 {}+3m_0 \left( \alpha_{\GB,1}-\alpha_{\GB,2}+\frac{2}{3}\alpha_3 \right)  \right]
 -\frac{3}{16}\alpha_3 \left( 
\xi_0+\frac{m_0}{4} \right)  \right\} 
\pi {r}^{2} \rho-\frac{{\pi }^{3}}{2} \alpha_{\GB,1} {r}^{6}
{p}^{3} \xi_0 
\right. \right. \nn \\
 & \left. \left.
 {}-\frac{5}{32} {
\pi }^{2} {r}^{4}{p}^{2}
 \left[  \left( \alpha_{\GB,1}+\frac{2}{5}\alpha_3 \right) \xi_0+\frac{6}{5}
 \left( \alpha_{\GB,1}+\frac{2}{3}\alpha_3 \right) m_0 \right] 
-{\frac {\pi }{64}} \alpha_3 {r}^{2} p m_0
+{\frac {3}{128}}  \alpha_3m_0 \right\} 
{\it M} 
\right. \nn \\
 & \left.
 {}+12288\pi {r}^{6
} \left\{ {\pi }^{2} {r}^{4} {\rho}^{3}  \xi_0 \left[ \pi \left( \alpha_{\GB,1}-\alpha_{\GB,2} \right) {r}^{2}   p+\frac{1}{8}\alpha_{\GB,1}-\frac{3}{8}
\alpha_{\GB,2} \right] - \pi {r}^{2}   {\rho}^{2}
 \left\{ {\pi }^{2} \left( \alpha_{\GB,1}+\alpha_{\GB,2} \right) {r}^{4}  {p}^{2} \xi_0 
 \right. \right. \right. \nn \\
 & \left. \left. \left.
 {}- \frac{3}{4}{r}^{2}  \pi p
 \left[  \left( \alpha_{\GB,1}-\frac{4}{3}\alpha_{\GB,2}+\frac{4}{3}\alpha_3 \right) \xi_0+\frac{1}{3}
\left( \alpha_{\GB,1}-\alpha_{\GB,2} \right) m_0 
 \right]
 -\frac{1}{4} \alpha_3 \xi_0-\frac{1}{32} \left( \alpha_{\GB,1}-3\alpha_{\GB,2} \right) m_0  \right\}
 \right. \right. \nn \\
 & \left. \left.
 {}+
 \left\{ -2{r}^{6}\alpha_{\GB,1}\xi_0{\pi }^{3}{p}^{3}
 -
\frac {9}{8} {\pi }^{2} \left[  \left( \alpha_{\GB,1}+\frac{1}{9}
\alpha_{\GB,2} \right) {r}^{4} {p}^{2} \xi_0+\frac{2}{9} \left( \alpha_{\GB,1}+\alpha_{\GB,2} \right) m_0 \right] 
\right. \right. \right. \nn \\
 & \left. \left. \left.
 {}+\frac{5}{32}  \pi
\left( \alpha_{\GB,1}-\frac{7}{5}\alpha_{\GB,2}+\frac{8}{5}
\alpha_3 \right) {r}^{2}  p m_0 
+\frac{1}{16}\alpha_3m_0
 \right\} 
 \rho-\frac{\pi }{4} \alpha_{\GB,1}{r}^{2}{p}^{2} \left( \pi  {r}^{2} p \left( \xi_0+2
m_0 \right) +m_0 \right)  
 \right\}  \right\}\,,
\end{align}
\ew
where the primes represent derivatives with respect to $r$.
$\omega_1$ and $m_0$ are the $l=1$ mode of the $(t,\phi)$ component and
the $l=0$ mode of the $(r,r)$ component
of the metric perturbation, while $\xi_0$ is the $l=0$ mode of the perturbation to the radial coordinate
such that the perturbations to $p$ and $\rho$ vanish~\cite{hartle1967,Hartle:1968ht}.
Solving such an equation in the exterior region within the small coupling approximation, 
one finds the exterior solution for the scalar field:
\bw
\begin{align}
\label{eq:vartheta20-ext}
\vartheta^{(2,1/2)}_{0,\ext}(r) ={}& \frac {1}{30  M_{*} \left( r-2 {\it M_*} \right)
} \left\{ 15    \left[ - \left(  \mu^{(0,1/2)} M_*^2  - \alpha_3 \right) {\chi}^{2} +2{
\it \delta m}  \left(  \mu^{(0,1/2)} M_*^2 -4 \alpha_3 \right)  \right]  \right. \nn \\
&  \left.
 {}+15 \left[     \left( \mu^{(0,1/2)} M_*^2  - \alpha_3  \right) {\chi}^{2} - 2
 \left( \mu^{(0,1/2)} M_*^2 \xi_{0,*} - 4 \alpha_3 {\it \delta m} \right) \right] \frac {M_{*}}{r}
+ 10 \left[  \left( \mu^{(0,1/2)} M_*^2  - \alpha_3  \right) {\chi}^{2} + 8
  {\it \delta m} \right] \frac{M_{*}^{2}}{{r}^{2}} \right. \nn \\
   &  \left.
   {} +
10 \left[  \left( \mu^{(0,1/2)} M_*^2  - \alpha_3  \right) {\chi}^{2}-16
  {\it \delta m} \right] \frac { M_{*}^{3}}{{r}^{3}}
  +
12 \left( 40  \xi_{0,*} -{\chi}^{2}
 \right) \frac {M_{*}^{4}}{{r}^{4}}
 +224 {\chi}^{2}  \frac { M_{*}^{5}}{{r}^{5}}
-160 {\chi}^{2}  \frac {M_{*}^{6}}{{r}^{
6}} \right. \nn \\
&
\left. {}- \frac {  15 }{2} \frac{r}{M_*} \left( 1 - \frac{2M_*}{r} \right) \left[ 2 \mu^{(2,1)} M_*^2 - 2{\it
\delta m}  \left( \mu^{(0,1/2)} M_*^2  - 4 \alpha_3  \right) + {\chi}^{2} \left( \mu^{(0,1/2)} M_*^2  - \alpha_3 
 \right) \right] \ln  \left( 1-2 {\frac {{\it
M_*}}{r}} \right) \right\} \,,
\end{align}
\ew
where $\mu^{(2,1/2)}$ is the only integration constant
that corresponds to the dimensionless scalar charge
at second order in spin and to leading order in $\zeta$.
$\delta m$ is the fractional correction to the stellar mass at second order in rotation
while 
$\xi_{0,*}$ corresponds to $\xi_0/M$ at the surface.
We have set $\xi_0$ in the exterior region to a constant, namely
$\xi_{0,\ext} (r) = \xi_0(R_*)$.
%Observe that
%both these exterior solutions approach Eq.~\eqref{eq:ext-sol} at
%spatial infinity. 
%Finally, notice that these exterior solutions depend
%on only one integration constant at each order in $M_{*} \Omega$.

%-------------------------
\subsection{Numerical Scheme}
\label{app:num}

Let us next explain the numerical algorithm that we use to calculate
the scalar charge of slowly-rotating isotropic NSs 
with a variety of realistic equations of state in quadratic gravity.
We assume the linear coupling function of the scalar field in
Eq.~\eqref{eq:g-lin} and use the field equation and exterior solutions derived
in Sec.~\ref{app:rot-NS}.

The equations of state we employ are the following tabulated ones: 
APR~\cite{APR}, SLy~\cite{SLy}, LS220~\cite{LS},
Shen~\cite{Shen1,Shen2} and WFF1~\cite{Wiringa:1988tp}. 
These equations of state are found by
solving certain many-body quantum field theory equations for the
internal pressure and density at supra-nuclear densities. Due to the
difficulty of solving these equations and uncertainties about the
strength of certain interactions, different approximations are made
that lead to different equations of state.  
%The ALF2 equation of state
%is further different from the rest in that it represents a hybrid
%star, with quark content in the core, and ordinary matter content
%elsewhere. 
We also continue to consider an $n=0$ polytropic equation
of state and a Tolman VII model to allow for comparisons with the
analytical study section.

A numerical solution to the scalar field evolution equation requires
boundary conditions. We choose to specify these at the stellar center
through a local analysis of the differential equation, which leads to
\begin{align}
\vartheta^{(0,1/2)} ={}& \vartheta_{c}^{(0,1/2)} - \frac{32 \pi^2}{9 R_*^2} \Bigg[ (3 \alpha_{\GB,2} - 4 \alpha_3) \nn \\
&  - 6 (3 \alpha_1 - \alpha_3) \frac{p_c}{\rho_c} + 9 \alpha_{\GB,1} \left(\frac{p_c}{\rho_c}\right)^2 \Bigg] x^2 \nn \\
&  + \mathcal{O}(x^3)\,, \\
\vartheta_{0}^{(2,1/2)} ={}& \vartheta_{0,c}^{(2,1/2)} - \frac{16 \pi}{9R_*^2} \frac{\omega_{1,c}^2 e^{-\nu_c}}{\rho_c + 3 p_c} \Bigg[ (3 \alpha_{\GB,2} -4  \alpha_3) \nn \\
& - 6 (3 \alpha_1- \alpha_3) \frac{p_c}{\rho_c}+ 9  \alpha_{\GB,1} \left(\frac{p_c}{\rho_c}\right)^2 \Bigg] x^2 \nn \\
& + \mathcal{O}(x^3)\,,
\end{align}
where $\nu=\nu(r)$ and $\omega_{1}=\omega_{1}(r)$ are metric functions
at zeroth- and first-order in rotation. We have defined the
expansion parameter $x \equiv R_* r \rho_{c} \ll 1$ and where the
subscript $c$ stands for the value of the function at the stellar
center. We initiate our integrations at a core radius $r=r_{c}>0$,
whose value we choose ensuring the local analysis presented above is
valid, namely $R_* \rho_{c} r_{c} \ll 1$. For example, setting
$R_*=12 \; {\rm{km}}$ and
$\rho_{c} = 10^{15} \; {\rm{g}}/{\rm{cm}}^{3}$, which is the typical
NS central density, the constraint $x(r=r_{c}) \ll 1$
implies $r_{c} \ll 10^{7} \; {\rm{cm}}$; we choose here
$r_{c} = 10 \; {\rm{cm}}$ which satisfies this bound.

The interior solution to the scalar field evolution equation can be
found numerically as follows. First, we choose an arbitrary trial
value for the boundary value constants 
($\vartheta^{(0,1/2)}_{c}$, $\vartheta^{(2,1/2)}_{0,c}$), with
which we find homogeneous ($\vartheta^{(0,1/2)}_{\HH}$, 
$\vartheta^{(2,1/2)}_{0,\HH}$) and particular
($\vartheta^{(0,1/2)}_{\p}$, $\vartheta^{(2,1/2)}_{0,\p}$) solutions through a fourth-order
Runge-Kutta integrator. The solution that satisfies the proper
boundary conditions at the stellar surface must then be a linear
combination of these two solutions, namely
\begin{align}
%\label{eq:matching-0}
\vartheta^{(0,1/2)} &= C^{(0)}_{\vartheta} \;   \vartheta^{(0,1/2)}_{\HH} +\vartheta^{(0,1/2)}_{\p}\,, \\
\vartheta^{(2,1/2)}_{0} &= C^{(2)}_{\vartheta} \;  \vartheta^{(2,1/2)}_{0,\HH} + \vartheta^{(2,1/2)}_{0,\p}\,,
\end{align}
where $C^{(A)}_{\vartheta} \ (A=0,2)$ is an integration constant to be
determined by matching at the stellar surface.

With the interior and exterior solutions at hand with the latter given by 
Eqs.~\eqref{eq:0th-Rot-ext-sol-lin} and~\eqref{eq:vartheta20-ext}, we match them at the
stellar surface at each order in $M_{*} \Omega$.  At zeroth-order in
rotation, the matching condition is simply
\begin{equation}
\label{eq:matching}
\vartheta^{(0,1/2)} (R_*) = \vartheta^{(0,1/2)}_{\ext} (R_*)\,, \ \vartheta^{(0,1/2)}{}'(R_*) = \vartheta^{(0,1/2)}_{\ext}{}'(R_*)\,,
\end{equation}
where the primes denote differentiation with respect to $r$. 
In this paper, we choose the radial deformation function
$\xi_{0}$ in the exterior region to be constant as
$\xi_{0,\ext} (r) = \xi_{0} (R_*)$. In such a case,
the matching condition at second-order in rotation is
\ba
\label{eq:matching201}
\vartheta^{(2,1/2)}_{0} (R_*) &=& \vartheta^{(2,1/2)}_{0,\ext} (R_*)\,, \\
\label{eq:matching202}
\vartheta^{(2,1/2)}_{0}{}'(R_*)  &=& \vartheta^{(2,1/2)}_{0,\ext}{}'(R_*) + \vartheta^{(0,1/2)}_{0,\ext}{}' (R_*) \xi^{(2)}_{0}{}'(R_*)\,. \nn \\
\ea
The second term in Eq.~\eqref{eq:matching202} is required to ensure
smoothness of the scalar field at the stellar surface and it is
induced by the non-smoothness of 
$\xi^{(2)}_{0}$ at the stellar surface.  These four conditions
determine $C^{(A)}_{\vartheta}$ and $\mu^{(A)}$ at each order in
rotation.

%-------------------------
\subsection{Strongly Anisotropic Neutron Stars}
\label{app:ani}

Let us finally explain how one can show that the scalar charge vanishes
in \DDGBText{}  gravity for strongly anisotropic NSs without using the 
post-Minkowskian expansion. Such a star can be modeled through
a stress-energy tensor of the form~\cite{Doneva:2012rd,Silva:2014fca}
\begin{equation}
T_{\mu \nu} = \rho \; u_\mu u_\nu + p_{r} \; k_\mu k_\nu + q_{t} \; \Pi_{\mu \nu}\,,
\end{equation}
where $k^\mu$ is the unit radial four-vector orthogonal to the
timelike four-velocity vector $u^{\mu}$ and
\begin{equation}
\Pi_{\mu \nu} = g_{\mu \nu} + u_\mu u_\nu - k_\mu k_\nu
\end{equation}
is a projection operator onto a two-surface orthogonal to $u^\mu$ and
$k^\mu$.  The quantity $p_{r}=p_{r}(r,\theta)$ is the radial pressure
function, while $q_{t}=q_{t}(r,\theta)$ is the tangential pressure
function, so that
$\sigma = \sigma(r,\theta) \equiv p_{r}(r,\theta)-q_{t}(r,\theta)$ is
an anisotropy parameter function.  Clearly, the limit $\sigma \to 0$
corresponds to isotropic matter~\cite{Doneva:2012rd,Silva:2014fca}.

Are NSs expected to be anisotropic? Clearly, some degree of
anisotropy should be present, for example due to magnetic fields or
superfluidity. But such anisotropy is expected to be small, which
translates to the functional constraint $\sigma/p_{r} \ll
1$.
Precisely how much anisotropy is present in NSs and how this
anisotropy manifests itself mathematically is not clear. The framework
described above should thus be considered a toy model.  We study
it here because it turns out to allow for analytic, closed-form
solutions to the equations of structure in spherical symmetry for the
metric functions without requiring a subsequent post-Minkowskian
expansion.

Let us then work to zeroth order in rotation---in spherical symmetry~\cite{Doneva:2012rd,Silva:2014fca}.
The scalar field evolution equation in Eq.~\eqref{eq:scalar-field-eq-int} 
acquires an anisotropic term $S^{(0)}_{\sigma}$. With the linear scalar coupling in
Eq.~\eqref{eq:g-lin}, $S^{(0)}_{\sigma}$ is given by
\begin{align}
S^{(0)}_\sigma ={}&   \frac{128 \pi  \left[ 2\pi \alpha_{\GB,1} r^3  p
- \alpha_3 M - 2\pi  (\alpha_1 - \alpha_3) r^3 \rho \right]}{r^2 (r-2 M)}   \sigma \nn \\
& {}- \frac{64 \pi^2 (\alpha_{\GB,1}+\alpha_{\GB,2}) r}{(r-2M)}  \sigma^2\,.
\end{align}
Observe that $S^{(0)}_{\sigma}$ does not vanish in the Gauss-Bonnet limit.

Before we can proceed, we must now choose a particular model for
the anisotropy parameter function $\sigma$. We adopt here the model
proposed in~\cite{1974ApJ...188..657B}, namely
\begin{equation}
\sigma =  \frac{\lambda_\BL}{3} (\rho + 3p) (\rho + p) \left( 1-\frac{2M}{r} \right)^{-1} r^2\,,
\end{equation}
where $\lambda_\BL $ is a dimensionless parameter that quantifies the
amount of anisotropy. The isotropic pressure case corresponds to 
setting $\lambda_\BL=0$, while the lower limit on $\lambda_\BL$ is given by
$\lambda_\BL = -2\pi$, beyond which the maximum compactness becomes 
negative and unphysical~\cite{1974ApJ...188..657B}.
Such a model was specifically
constructed so that the equations of structure in spherical symmetry
can be solved in closed-form for an $n=0$ polytropic equation of
state, without requiring a post-Minkowskian
expansion~\cite{1974ApJ...188..657B}.

Let us now solve the scalar field evolution equation for anisotropic
stars.  We further specialize the scalar field equation to the
Gauss-Bonnet limit by setting
$(\alpha_1, \alpha_2, \alpha_3) = \alpha_\GB (1, -4,1)$, where $\alpha_\GB$ is
now the only free parameter; without this specialization, the scalar
field equation is too difficult to solve in closed-form without a
post-Minkowskian expansion. We also take the strongly anisotropic
limit of $\lambda_\BL \to -2\pi$, in which the radial pressure vanishes
throughout the constant density ($n=0$) star. The scalar field equation then simplifies
to
\begin{align}
\frac{d^2 \vartheta^{(0,1/2)}}{d r^2} ={}& - \frac{5 C r^2 - 2 R_*^2}{(2 C r^2 - R_*^2)r} \frac{d \vartheta^{(0,1/2)}}{dr}  \nn \\
&{}- \frac{48 \ell^{2} \alpha_\GB C^2 (C r^2-R_*^2)}{(2 C r^2-R_*^2)^2 R_*^2}\,,
\label{eq:anisotropic-evol}
\end{align}
where we used the small coupling approximation.
In the interior region, the solution to Eq.~\eqref{eq:anisotropic-evol} is
\begin{equation}
\vartheta^{(0,1/2)} =  - \frac{4 \alpha_\GB C}{R_*^2} \ln( R_*^2-2 C r^2 ) + \vartheta_{c}^{(0,1/2)}\,,
\end{equation}
where $\vartheta_{c}^{(0,1/2)}$ is the only integration constant because
we have imposed regularity at the center. 
Matching this interior solution with the exterior one in Eq.~\eqref{eq:0th-Rot-ext-sol-lin} and their
first-derivatives at the stellar surface then automatically forces
$\mu^{(0,1/2)} = 0$.

%%%%%%%%%%%%%%%%%%%
\section{Corrections to Neutron Star Binary Evolution in
Shift-Symmetric Topological Quadratic Gravity}
\label{app:shift-sym-TQG-corr}

In this appendix, we derive the quadratic gravity correction to a
NS binary within the shift-symmetric, topological class
of quadratic gravity theories. 
As an example, we consider \DDGBText{} gravity.
We look at the conservative and dissipative corrections due to the 
scalar field and metric deformation in turn.

%-----------------------------------------
\subsection{Corrections due to the Scalar Field}

Let us first consider corrections to the dynamics of binary systems 
without BHs
caused \emph{directly} by the scalar field.  To do this, we first need
to determine the leading-order asymptotic behavior of the scalar field
at spatial infinity.  When the scalar charge vanishes, 
$\vartheta^{(0,1/2)}$ and $\vartheta^{(2,1/2)}_{0}$ are completely 
sourced by the particular solutions to the scalar field evolution equation
and their asymptotic behavior at spatial infinity becomes
\ba
\vartheta^{(0,1/2)}(r) &=& - 4  \frac{\alpha_\GB}{M_*^2} \left(\frac{M_*}{r}\right)^{4} + \mathcal{O}\left(  \frac{M_*^5}{r^5} \right)\,, \\
\vartheta^{(2,1/2)}_{0}(r) &=& - 8  \, \delta m \frac{\alpha_\GB}{M_*^{2}} \left(\frac{M_*}{r}\right)^{4} + \mathcal{O}\left(  \frac{M_*^5}{r^5} \right)\,. \nn \\
\ea
On the other hand, the asymptotic behavior of $\vartheta^{(2,1/2)}_{2}$ at spatial infinity is
\begin{equation}
\label{eq:theta-22}
\vartheta^{(2,1/2)}_{2}(r) = \mu^{(2,1/2)}_{2}  \left( \frac{M_*}{r} \right)^{3} + \mathcal{O}\left(  \frac{M_*^4}{r^4} \right)\,,
\end{equation}
where $\mu^{(2,1/2)}_{2}$ corresponds to the dimensionless scalar quadrupole charge, 
as predicted in~\cite{Stein:2013wza}.
Clearly, the scalar quadrupole charge is dominant at spatial infinity if it is non-vanishing.

Let us now consider scalar field corrections to the conservative dynamics of binary systems,
i.e.~to the Hamiltonian or the binary's binding energy. The correction
to the latter due to a scalar quadrupole-quadrupole interaction is
given by~\cite{Stein:2013wza}:
\begin{equation}
E^\vartheta_{b} \propto \left( \frac{m}{r_{12}} \right)^{5}\,,
\end{equation}
where $m$ and $r_{12}$ are the total mass and binary separation respectively.
Comparing this to the Newtonian potential ($E_{b} = m/r_{12}$), it is clear that this deformation
enters at relative $4$PN order.

Let us now consider scalar field corrections to the dissipative
dynamics of binary systems, i.e.~to the energy 
fluxes carried away by a dynamical scalar field with quadrupole scalar
charge. Following~\cite{Yagi:2011xp}, one can show that 
the dominant contribution comes from e.g.~the scalar field coupled
to the metric perturbation in $\Box \vartheta$ as
\begin{equation}
\dot E^\vartheta \propto (v_{12})^{18}\,.
\label{eq:dotE-order}
\end{equation}
Comparing this to the leading-order energy flux in general
relativity ($\dot E_\GR \propto (v_{12})^{10}$), it is clear that this
deformation enters at 4PN
order.
%-----------------------------------------
\subsection{Corrections due to the Metric Deformation}

Let us now consider corrections to the dynamics of NS/NS binary systems
caused \emph{indirectly} by the scalar field, due to how this induces
a deformation in the metric.

Consider first the conservative sector of the dynamics of
binaries. The binding energy is constructed directly from the metric
tensor in the \emph{near-zone}, i.e.~at distances smaller
than the GW wavelength of the binary. Since the
quadrupole moment of each NS is modified due to a
non-vanishing scalar quadrupole hair, the near-zone metric will
acquire corrections proportional to $1/r_{12}^{3}$, which will then
propagate into the binding energy. This then implies
that the metric deformation causes a 2PN correction to the
energy~\cite{Yagi:2013mbt}.

Consider now the dissipative sector of the dynamics. The dominant effect
of the modification to the energy and angular momentum fluxes carried away by 
the dynamical part of the metric perturbation comes from e.g.~the effective
source term that gives the correct quadrupole moment deformation.
 Following the analysis of~\cite{Yagi:2011xp}, one can
show that such a dynamical metric perturbation will be proportional to
\begin{align}
|\delta h_{ij}| & \propto \frac{m}{r} (v_{12})^{8}\,.
\label{eq:hij}
\end{align}
Comparing this to the magnitude of GWs in general
relativity ($|h_{ij}| \propto (m/r) (v_{12})^{2}$), we conclude
that these deformations induce a modification in the dissipative
dynamics of 3PN relative order.

%%%%%%%%%%%%%%%%%%%%%%%%%%%%%%%%%%%%%%%%%%%%
%%%%%%%%%%%%%%%%%%%%%%%%%%%%%%%%%%%%%%%%%%%%
\bibliography{master}
\end{document}